\documentclass[11pt,pdfa,letterpaper]{article}

\usepackage[in]{fullpage}

\usepackage{comment}
\usepackage{amsmath}
\usepackage{mathtools,amsthm,amssymb} %
\usepackage{xcolor,xstring,xparse}
\usepackage{graphicx}
\usepackage{comment}
\usepackage{qtree}
\usepackage{tree-dvips}
\usepackage{float}
\definecolor{linkblue}{HTML}{001487}
\usepackage[colorlinks=true,allcolors=linkblue]{hyperref}
\usepackage[nameinlink,capitalize,noabbrev]{cleveref}
\usepackage{braket}
\usepackage{mathrsfs}
\usepackage{tikz}
\usepackage{qcircuit}
\usepackage{xspace}
\usepackage{dsfont}
\usepackage{enumitem} %
\usepackage{longfbox}
\usepackage{csquotes}
\usepackage{authblk}
\usepackage{longfbox}
\usepackage{ytableau}

\usepackage{adjustbox}
\usepackage{siunitx}
\usepackage{stmaryrd}
\usepackage{float}
\usepackage{amsthm}
\usepackage{tikz-cd}
\usepackage{tikz}
\usetikzlibrary{backgrounds}
\usepackage{hyperref}

\hypersetup{
    colorlinks = true,
    linkcolor = violet,
    linkbordercolor = {white}
}

\usepackage{booktabs}    
\usepackage{tabularx}    

\usepackage{geometry}
\usepackage[style=alphabetic,minalphanames=3,maxalphanames=4,maxnames=99,backref=true]{biblatex}

\usepackage{url}
\usepackage{bbm}
\usepackage{hyperref}
\hypersetup{breaklinks=true}

\usetikzlibrary{fadings}
\usetikzlibrary{patterns}
\usetikzlibrary{shadows.blur}
\usetikzlibrary{shapes}

\usepackage{framed}

\usepackage{caption}
\newif\ifcomments
\commentstrue        

\ifcomments
  \newcommand{\andrew}[1]{{\color{purple}\bfseries (Andrew: #1)}}
  \newcommand{\akshar}[1]{{\color{red}\bfseries (Akshar: #1)}}
  \newcommand{\john}[1]{{\color{blue}\bfseries (John: #1)}}
\else
  \newcommand{\andrew}[1]{}
  \newcommand{\akshar}[1]{}
  \newcommand{\john}[1]{}
\fi

\theoremstyle{plain}
\newtheorem{theorem}{Theorem}[section]
\newtheorem*{theorem*}{Theorem}
\newtheorem{lemma}[theorem]{Lemma}
\newtheorem{claim}[theorem]{Claim}
\Crefname{claim}{Claim}{Claims}

\newtheorem*{lemma*}{Lemma}
\newtheorem{fact}[theorem]{Fact}
\newtheorem{corollary}[theorem]{Corollary}
\newtheorem*{corollary*}{Corollary}

\theoremstyle{definition}
\newtheorem{definition}[theorem]{Definition}

\theoremstyle{remark}
\newtheorem{remark}[theorem]{Remark}
\numberwithin{equation}{section}

\DeclareMathOperator{\poly}{poly}
\DeclareMathOperator{\negl}{negl}

\newcommand{\ketbra}[2]{|#1\rangle\langle#2|}

\DeclareMathOperator{\rank}{rank}
\DeclareMathOperator{\spn}{span}

\DeclareMathOperator{\Tr}{Tr}

\DeclareMathOperator{\id}{id}

\newcommand{\eps}{\varepsilon}

\newcommand{\bbC}{\mathbb{C}}
\newcommand{\bbE}{\mathbb{E}}

\newcommand{\CA}{\mathcal{A}}

\makeatletter
\renewcommand{\paragraph}{%
  \@startsection{paragraph}{4}%
  {\z@}{2.25ex \@plus 1ex \@minus .2ex}{-1em}%
  {\normalfont\normalsize\bfseries}%
}
\makeatother
\title{Random-Oracle Unitary Synthesis is Impossible}

\author[1]{\quad\quad Andrew Huang}
\author[2]{Akshar Ramkumar}
\author[3]{John Wright}
\affil[1]{University of California, Berkeley}
\affil[2]{University of Cambridge}
\affil[3]{University of California, Berkeley}

\begin{document}

\pagenumbering{gobble}

\clearpage\maketitle
\thispagestyle{empty}

\begin{abstract}
A major open problem in quantum complexity is the question of unitary synthesis---namely, is it possible to efficiently implement any unitary, given the ability to evaluate any classical function in superposition?
Inspired by a recent work by Brakerski and Yuen (CRYPTO 2026), we propose an ``average-case'' unitary synthesis problem, which asks whether it is possible to efficiently implement a Haar random unitary given access to a random Boolean function.
We then demonstrate a superpolynomial query lower bound, establishing that unitary synthesis in this model is impossible.

Along with our main result, we settle in the negative a question Brakerski and Yuen pose, of whether scalable pseudorandom unitaries (PRUs) can be implemented in the ROM-PRU model.
In particular, we show that in the ROM-PRU model, no $\Omega(N^{1+\gamma})$ unitary design on $n$ qubits, where $N=2^n$, can be efficiently achieved for any $\gamma > 0$.
On the other hand, we construct an $O(N)$-design in $\poly(n)$ queries in the same model, surpassing the previously best known results which constructed $O(\sqrt{N})$-designs.
\end{abstract}

\newpage
\tableofcontents

\newpage
\pagenumbering{arabic} 
\setcounter{page}{1}   

\section{Introduction}\label{sec:introduction}

Although all unitary transformations are allowed by the laws of nature, only a tiny sliver of them can actually be implemented efficiently on a quantum computer. 
This distinction, between $n$-qubit unitaries $U$ which are easy to implement and those which are hard, is at the heart of many of the most fundamental questions in quantum computing, as there are many important unitaries for which we still cannot determine which of the two cases they fall into.
A similar question has been deeply studied in the world of classical computing, where it is just as unclear how to determine precisely which Boolean functions $f:\{0, 1\}^n \rightarrow \{0, 1\}$ are hard to compute. It is therefore a matter of great interest to understand whether there is a rigorous connection between the hardness of \textit{unitaries} and \textit{functions}, or whether the similarity is merely superficial.

This question, known as the \emph{Unitary Synthesis Problem}, was first posed by Aaronson and Kuperberg~\cite{AK07}, who asked whether for every unitary $U$ there exists a Boolean function $f$ such that we can efficiently compute $U$ given an oracle to $f$.
If true, it would mean that behind every hard-to-compute unitary, there is an equally hard-to-compute function;
if false, it would mean some unitaries remain hard, even in a world where all functions are easy.
Answering this question has become especially urgent in recent years due to the advent of quantum cryptography, where the underlying security is based on the hardness of computing certain unitaries $U$~\cite{JLS18,GJMZ23}.
Without a resolution to the Unitary Synthesis Problem, it is unclear whether traditional complexity theory (which is about the hardness of computing Boolean functions) has anything to say about such questions, or whether we will need to invent a new, intrinsically quantum, complexity theory~\cite{BEM+23,Zha25}.

Progress on the Unitary Synthesis Problem has so far been rather modest.
The best evidence against it comes from a lower bound that states that one cannot in general efficiently synthesize $U$ with just a single query to the function $f$~\cite{LMW24}.
However, it was recently shown that the proof techniques behind this result can be extended to apply to unitaries $U$ which can be synthesized with two (!) queries~\cite{DLM26}, so it is not clear how strong a piece of evidence this is.

On the flip side, designing algorithms for unitary synthesis has also proven challenging, and what we have so far is a motley crew of upper bounds for different parameter regimes.
There is, for example, a simple procedure for unitary synthesis based on the Bernstein-Vazirani algorithm~\cite{BV93} which only makes one query to the function $f$, but it is inefficient as it requires exponentially many ancilla qubits~\cite{LMW24}.
On the other hand, the best known result for a polynomial space algorithm, due to Rosenthal~\cite{Ros26}, makes exponentially many \emph{queries} to the function $f$.
Our computationally efficient unitary synthesis algorithms are even more limited: they are few in number, and they can only synthesize small, highly structured classes of unitaries (see \cite{Yue22,Yue22b} for a good survey of these), none of which resemble general unitaries.

This was at least the case until the recent remarkable construction of \emph{pseudorandom unitaries (PRUs)} due to~\cite{MPSY24,MH25}.
A PRU is a family of efficiently-computable $n$-qubit unitaries $\{U_k\}_k$ such that a randomly chosen $U_k$ is indistinguishable from a Haar-random unitary to query-bounded adversaries.
The particular family that the work~\cite{MPSY24} introduced is known as the \textsf{PFC} ensemble, and a random unitary in this ensemble can be computed efficiently in the \emph{Random Oracle Model (ROM)}, in which one is given oracle access to a uniformly random function $f$.
The breakthrough work of Ma and Huang~\cite{MH25} established that a random $U_k$ drawn from this ensemble is indistinguishable from a Haar-random unitary to arbitrary algorithms that make up to $o(\sqrt{N})$ queries (where $N = 2^n$).
This gave the first evidence that polynomial-time quantum algorithms can exploit access to a function $f$ to synthesize a large class of unitaries which, while not completely generic, still possess many features of Haar-random unitaries.

But can we do \emph{better}?
In previous constructions, the query-security of a pseudorandom unitary depended upon the number of qubits that the unitary acts on. 
What cryptographers would really like is to decouple the security of the PRU from the number of qubits involved. 
This notion, known as a \textit{scalable} PRU, would be a PRU which is parameterized by $\kappa$, the desired security parameter.\footnote{For a given choice of $\kappa$, this means that the PRU is secure against $2^\kappa$ queries. The computational complexity of the algorithm that synthesizes the PRU is also allowed to scale with $\kappa$, ideally with a polynomial dependence.}
While the \textsf{PFC} ensemble is known to be distinguishable from a Haar-random unitary with $O(\sqrt{N})$ queries, by choosing $\kappa$ appropriately, a scalable PRU can achieve any $\poly(N)$ security. 

This question was considered in the recent work of Brakerski and Yuen~\cite{BY26}, who showed that designing sufficiently good scalable PRUs in the random oracle model would resolve the Unitary Synthesis Problem in the \emph{positive} direction!
In fact, they showed something even stronger: once you have designed a PRU in the random oracle model secure against $O(N^2)$ queries, you must have solved the Unitary Synthesis Problem.
The reason is simple: once you are allowed to make $O(N^2)$ queries, you can run the unitary tomography algorithm of~\cite{HKOT23} and learn the unitary up to some approximation error.
So the only way a PRU can be secure against $O(N^2)$ queries is if it closely mimics a Haar-random unitary, which immediately yields a circuit that can approximately synthesize most unitaries.

This result suggests two equally tantalizing possibilities.
On the one hand, if one believes that efficient unitary synthesis is not possible in general, then it may suggest that scalable PRUs do not exist.
This would rule out the only existing approach to construct what is a very natural cryptographic object.
On the other hand, if one believes that efficient unitary synthesis \emph{is} possible, then this result provides a clear route for achieving it: ``simply'' take the PFC construction and modify it to improve its security by a mere polynomial factor, from $o(\sqrt{N})$ queries to $O(N^2)$.

So which is it? This is the question we study in this work.

\subsection{Our Results}

To state our results, let us begin by highlighting a key difference between many unitary synthesis algorithms and those that arise from PRU constructions in the random oracle model.
As we have already seen, to compute a unitary $U$, a unitary synthesis algorithm is given query access to a Boolean function $f_U$ to help it, and this function may be highly structured and may depend on the unitary $U$ in arbitrary and complicated ways.
For example, the Bernstein-Vazirani-based unitary synthesis algorithm takes an $N \times N$ unitary $U$, represents it as a bitstring of length $\widetilde{O}(N^2)$, takes the $2^{\widetilde{O}(N^2)}$-bit-long Hadamard encoding of this string, and uses this as the highly structured function $f_U$.
On the other hand, a PRU algorithm in the random oracle model, such as the algorithm which synthesizes $\mathsf{PFC}$ unitaries, takes in a uniformly random oracle $f$, which has no structure to speak of, and is still expected to convert it to a (somewhat) random-looking unitary.
Motivated by this, we introduce the definition of a \emph{random-oracle unitary synthesis algorithm}, which is one that converts random bits into Haar random unitaries.

\begin{definition}[Random-oracle unitary synthesis, informal; see Definition~\ref{def:rom_synthesis}]
    Let $A^{(\cdot)}$ be an oracle algorithm with $n$ qubits of input and $n$ qubits of output, and let $A^{(f)}$ be the channel it computes given oracle access to the Boolean function $f$.
    Then $A^{(\cdot)}$ is a \emph{random-oracle unitary synthesis algorithm with error $\epsilon$} if there is a coupling $(f, U)$ between a uniformly random Boolean function $f$ and a Haar random unitary $U \in U(N)$ such that
    \begin{equation*}
        \bbE_{(f, U)} D_{\diamond}(A^{(f)}, \Phi_U) \leq \epsilon,
    \end{equation*}
    where $\Phi_U$ is the unitary channel corresponding to $U$.
\end{definition}

This definition is meant to capture the type of unitary synthesis algorithm we could hope for if it was possible to extend the $\mathsf{PFC}$ ensemble to a full unitary synthesis algorithm.
It is inspired by the definition of ROM-PRUs due to Brakerski and Yuen~\cite{BY26}, and the two definitions can be found formally stated in Section~\ref{sec:rom_synthesis}, where we compare them.
Our first main result is that there is no efficient algorithm for random-oracle unitary synthesis, so long as the error $\epsilon$ is even slightly nontrivial (meaning that it is bounded away from 1 by a $1/\poly(n)$ factor).

\begin{theorem}[No random-oracle unitary synthesis]\label{thm:no-ROM-unitary}
    Suppose $A^{(\cdot)}$ is an $n$-qubit random-oracle unitary synthesis algorithm with error $\epsilon$, and $\epsilon \leq 1 - 1/\poly(n)$. Then if $k$ is the number of queries $A^{(\cdot)}$ makes to the random oracle, we must have
    \begin{equation*}
        k \geq \frac{\sqrt{N}}{\poly(n)}.
    \end{equation*}
\end{theorem}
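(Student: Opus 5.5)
The plan is to prove the contrapositive: if $A^{(\cdot)}$ makes $k \le \sqrt{N}/\poly(n)$ queries, then the ensemble $\{A^{(f)}\}_f$ can be distinguished from Haar-random unitaries with advantage more than $1-\epsilon$. The coupling in the definition then forces $\bbE_{(f,U)} D_\diamond(A^{(f)},\Phi_U)$ to exceed $1-1/\poly(n)$.

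\textbf{Step 1 (reduction to a distinguishing statistic).} By the triangle inequality for the diamond norm, any distinguisher $\mathcal{D}$ that makes $t$ queries to an unknown channel satisfies
\[
\bigl|\Pr_f[\mathcal{D}^{A^{(f)}}=1]-\Pr_U[\mathcal{D}^{\Phi_U}=1]\bigr| \le t\,\bbE_{(f,U)} D_\diamond(A^{(f)},\Phi_U) \le t\epsilon .
\]
So I need a statistic computed from a few (ideally $O(1)$ or $\poly(n)$) uses of the channel on which the two ensembles differ by more than $t\epsilon$.

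\textbf{Step 2 (structure of the synthesized ensemble).} I want the statistic to exploit the fact that a $k$-query algorithm converts its random bits into a unitary only ``locally.'' Concretely, I would use the BBBV hybrid argument: if $f_S$ is $f$ rerandomized on a random subset $S$ of the domain of density $\delta$, then
\[
\bbE_S\, D_\diamond\bigl(A^{(f)},A^{(f_S)}\bigr)\le O\bigl(k\sqrt{\delta}\bigr).
\]
Equivalently, in the Fourier picture, every matrix entry of the Stinespring isometry of $A^{(f)}$ is a multilinear polynomial of degree at most $k$ in the bits of $f$. This gives noise stability: for $\rho$-correlated oracles $f,f'$ with $1-\rho \approx 1/k^2$, the quantity $\tfrac1N \mathrm{Re}\Tr(A^{(f)\dagger}A^{(f')})$ stays bounded away from $0$. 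I would then combine this with the coupling. Since $(f,U)$ and $(f',U')$ are both Haar-coupled, a large fraction of pairs $(U,U')$ must be close, even though $f'$ differs from $f$ on about $N\cdot 2^{m}/k^2$ inputs. The aim is to turn this into a ``birthday-type'' signature: a rerandomization of $f$ that a Haar unitary would feel strongly, but that the $k$-query algorithm cannot notice unless $k^2 \gtrsim N$. This is where I expect the $\sqrt{N}$ threshold to appear, in the same way that the $\mathsf{PFC}$ ensemble is broken by $O(\sqrt N)$ queries.

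\textbf{Step 3 (the Haar side).} On the Haar side I would use concentration of measure on $U(N)$, in the form of Lipschitz concentration with variance about $1/N$ per normalized trace quantity. The goal is to show that a Haar unitary cannot exhibit the stability forced in Step 2 beyond what a degree-$k$ structure allows when $k \ll \sqrt N$. That yields a statistic with gap $1/\poly(n)$, and the quantitative bound $k \ge \sqrt{N}/\poly(n)$ would follow from balancing $k\sqrt\delta$ against the $1/\sqrt N$ fluctuation scale.

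\textbf{Main obstacle.} Step 2 combined with Step 3 is the crux, and I am least sure it works as stated. Noise stability alone gives no contradiction, because a Haar unitary coupled to $f$ could legitimately depend on only a few bits of $f$. Likewise, a naive $t$-wise-independence and frame-potential counting argument gives only trivial bounds. The argument must therefore use the full strength of Haar randomness, namely that it carries $\Theta(N^2)$ continuous degrees of freedom, against the low-degree structure of $A^{(f)}$. My first attempt would be a polynomial-method comparison on a carefully chosen degree-$O(1)$ moment, such as $\bbE|\Tr(U^\dagger V)|^2$ for coupled pairs, where Haar gives exactly $1$ and a degree-$k$ ensemble is forced to deviate by $\Omega(1)$ unless $k \ge \sqrt N/\poly(n)$.
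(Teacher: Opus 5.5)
There is a genuine gap, and it is in the framework of Step 1, not only in the crux you flag. Under the coupling, a $t$-use experiment only certifies advantage at most $t\,\bbE D_\diamond\le t\eps$, so for $\eps\ge 1/2$ every $t\ge 2$ is vacuous. No choice of $t$ rescues this. By Theorem~\ref{thm:linear_design}, a $\poly(n)$-query random-oracle circuit implementing $F_nH_n\cdots F_1H_1F_0$ (for suitable fixed $H_i$) outputs a $2^{-n^2/3}$-approximate $N/8$-design. Against that algorithm, no experiment with $t\le N/8$ uses has advantage above $2^{-n^2/3}$, and neither does a low-degree moment comparison like the one in your last paragraph. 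With $t>N/8$ uses, the requirement $t\eps<1$ forces $\eps<8/N$. So Step 1 can never prove more than $\eps>8/N$ for this algorithm, whereas the theorem asserts $\eps\ge 1-O((k^2/N)^{1/6})$.

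Step 2 also has concrete errors:
\begin{itemize}
\item The BBBV bound is false in diamond norm. Suppose the circuit applies $\mathcal O_f$ directly to the input register. Rerandomizing a density-$\delta$ set with $\delta N\gg 1$ then flips the sign of some $\ket{x}$ but not of some $\ket{y}$, giving $D_\diamond=1$ even though $k\sqrt\delta$ is small. The bound holds only in normalized Frobenius/Choi distance.
\item ``Pairs $(U,U')$ must be close'' does not follow from a triangle inequality through two couplings once $\eps>1/2$.
\item As you concede, noise stability gives no entropy bound.
\end{itemize}
The $\sqrt N$ is also not a birthday phenomenon.

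What is missing is that the contradiction is metric-entropic. You need a quantity controlled linearly by $D_\diamond$ along the coupling and evaluated on the Haar side, not a few-use experiment. The paper does this in three steps.

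(a) It passes to the clean $2k$-query block $A_{f,Q}=J^\dagger W_f^\dagger(Q\otimes I)W_fJ$ and shows $\bbE_P\frac1N\operatorname{tr}(A_{f,P}U^\dagger PU)\ge 1-D_\diamond$ (Lemma~\ref{lem:dirty_to_clean}). This bound is linear in the Choi state, so the error enters only once, which is what lets $\eps$ range up to $1-1/\poly(n)$.

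(b) It proves a robust entropy bound (Theorem~\ref{thm:clean_bound}). For any $k$-query circuit the influences $w_f(x)$ sum to $2kN$ for every $f$. So one can adaptively resample every coordinate of influence at most $r^2/(8k)$, which is all but $O(k^2N/r^2)$ of the $M_0$ coordinates in expectation, while keeping $\bbE\lVert A_f-A_{f_S}\rVert_F^2\le 2Nr^2$. The cross terms are killed by a forward/backward martingale decomposition coming from reversibility of the resampling chain. A uniformly random subset of density $\delta$ cannot do this: it gives $H(f_S\mid f)=M_0\,h(\delta/2)$, with $h$ the binary entropy, which is within $O(k^2N)$ of $M_0$ only when $\delta\approx 1$, and there the hybrid bound is vacuous. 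This yields a net of $2^{O(Nk^2/(\zeta d)^2)}$ Hermitian contractions that covers $A_{f,Q}$ to Frobenius error $d\sqrt N$ except with probability $\zeta$ (Corollary~\ref{cor:net}).

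(c) For each fixed $C$, $\Pr_U[\frac1N\operatorname{tr}(CU^\dagger QU)\ge s]\le e^{-N^2s^2/48}$ (Lemma~\ref{lem:haar_concentration}). A union bound over the net, set against the probability $(1-\eps)/8$ forced by (a), gives $Nk^2(1-\eps)^{-4}\gtrsim N^2(1-\eps)^2$. The $\sqrt N$ therefore comes from $k^2N$ effective bits versus the $N^2$ that Haar unitaries require.
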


Hence, any efficient unitary synthesis algorithm must use Boolean functions which are highly structured.
Via the results of Brakerski and Yuen, this also tells us, intuitively, that although there are PRUs in the random oracle model secure against adversaries which make $o(\sqrt{N})$ queries, they cannot be secure against adversaries which make $O(N^2)$ queries.
For our next results, we will identify the number of queries between these two extremes at which PRUs must become insecure.

First, why are all of our PRUs ``stuck'' at being secure against only $o(\sqrt{N})$ queries?
For the $\mathsf{PFC}$ ensemble, the answer is simply that it is \emph{insecure} against any larger number of queries.
But for other candidate PRUs, the answer is in the dominant paradigm for analyzing PRUs, which is to focus on the so-called ``distinct subspace''. 
The distinct subspace arises naturally if we imagine a $T$-query adversary that prepares $T$ standard basis vectors $\ket{x_1} \ldots \ket{x_T}$ in tensor product and then queries the unknown unitary in parallel across these states.
In this case, the distinct subspace corresponds to the tensor products where the elements $x_1, \ldots, x_T \in [N]$ are all distinct.
When the number of queries $T = o(\sqrt{N})$, there are often tricks we can play to pretend as though we are in the distinct subspace, and doing so can make the remaining analysis significantly easier.
However, once $T = \Omega(\sqrt{N})$, one is forced to consider non-distinct subspaces as well, and we have not had any luck in analyzing PRUs in this regime.
It has therefore been an open question if the reason we are stuck at $o(\sqrt{N})$-query security is merely technical, or if something more fundamental is going on.
We show that it is merely a technical obstruction after all, and that one can push the security all the way to $T = O(N)$ queries.

\begin{theorem}[An $O(N)$-query secure PRU]\label{thm:our-pru}
    There is a PRU in the random oracle model which is secure against any adversary which makes $O(N)$ queries.
\end{theorem}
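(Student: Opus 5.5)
I will build the PRU as a random-oracle construction whose outputs form an approximate unitary $t$-design for $t = O(N)$, and then turn design-ness into $O(N)$-query security. The abstract describes the result exactly this way: an $O(N)$-design in $\poly(n)$ queries. The reduction step is standard. Any adversary making $T$ queries, even adaptively and with ancillas, has acceptance probability equal to a polynomial of degree $T$ in the entries of $U$ and $T$ in the entries of $\overline U$. If the ensemble is an $\epsilon$-approximate $T$-design in the adaptive (or relative-error) sense, its statistics therefore match Haar up to $\epsilon$. So it suffices to construct an ensemble, computable with $\poly(n)$ oracle queries, that forms a relative-error $t$-design for $t = cN$, with $\epsilon$ small.

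\textbf{The construction.} I would start from the \textsf{PFC}-style ensemble $U = P F C$, where:
\begin{itemize}[label={}]
\item $P$ is a random permutation, implemented from the random oracle via a Luby--Rackoff/Feistel construction or an efficient invertible permutation;
\item $F$ is a random binary phase;
\item $C$ is a random Clifford or an exact $2$-design.
\end{itemize}
\textsf{PFC} fails at $\sqrt N$ because of collisions: random phases on basis states look non-Haar once inputs repeat. I would therefore compose several independent layers, e.g.\ $U = C_2 P_2 F_2 C_1 P_1 F_1 C_0$, with a constant or $O(\log N)$ number of layers, and let each phase $F$ take values in a larger group than $\pm 1$ (random complex phases approximated to $\poly(n)$ bits). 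Each layer can be implemented with $\poly(n)$ queries to independent parts of the oracle.

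\textbf{The analysis.} I would work with the moment operator $\Phi_t(X) = \bbE[U^{\otimes t} X U^{\dagger\otimes t}]$ on the symmetric/permutation-structured subspace, instead of restricting to the distinct subspace.
\begin{itemize}[label={}]
\item The random-phase-plus-permutation layer $PF$ acts like a projection onto operators diagonal in "type": it averages over the $S_N$ action on basis strings, so it projects onto $\mathrm{span}\{\ket{x_1\dots x_t}\}$ classes indexed by partitions of $[t]$ into equal-value blocks together with the matching between $U$ and $\overline U$ copies.
\item The Clifford/2-design layer mixes these classes.
\item Haar averaging corresponds to projecting onto the span of permutation operators $\{R_\pi\}_{\pi\in S_t}$, by Schur–Weyl duality.
\item The goal is to show the composed channel is within relative error $\epsilon$ of this projection. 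Equivalently, the error, viewed on the complement of the Haar-invariant subspace, contracts geometrically per layer.
\end{itemize}

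I would bound that contraction through a spectral-gap argument. The second largest eigenvalue of a single layer's moment operator, restricted to non-invariant operators, should be at most about $t/N + O(1/N)$. Heuristically this is the probability that some nontrivial collision structure survives a layer, and it is a constant below $1$ precisely when $t \le cN$. After $O(n)$ layers the additive error becomes $2^{-\Omega(n)}$. Converting to relative error costs a factor of at most $N^{O(t)}$ in the worst case, so I would instead aim directly for a relative-error bound via positivity of the moment operators.

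\textbf{Main obstacle.} The hardest step is the spectral gap for $t$ up to $\Theta(N)$. At that scale the Weingarten-type combinatorics of repeated basis elements no longer behaves perturbatively, so the distinct-subspace tricks are unavailable and collision patterns must be tracked exactly. My first attempt would be to express one layer as a Markov chain on collision patterns (set partitions with pairings). I would then bound its mixing using a comparison with the Haar chain via Schur–Weyl, aiming for a gap of $1 - O(t/N)$. If that is too lossy, the fallback is a path-recording/purification argument in the style of Ma–Huang, extended to handle repeated queries by recording multiplicities.
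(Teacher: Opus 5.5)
There is a genuine gap, and it goes beyond the unproven spectral-gap step you flag. Even if your heuristic held, a degree-independent gap cannot give $\Theta(N)$-query security with $\poly(n)$ layers.

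Suppose the second eigenvalue is uniformly $\lambda\approx t/N$, which is a constant when $t=\Theta(N)$. Then $L$ layers give moment-operator error about $\lambda^{L}$. Turning this into a bound on adaptive distinguishing advantage costs a factor $N^{\Theta(t)}=2^{\Theta(Nn)}$, as you note yourself. So you would need $L=\Omega(Nn)$ layers, i.e.\ exponentially many oracle queries. Appealing to ``relative error via positivity'' gives no mechanism for avoiding this.

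A sharper conversion does not rescue a uniform gap either. The total weight $\sum_{\lambda,\nu\vdash q}d_{\lambda,\nu}b_{\lambda,\nu}$ carried by the degree-$q$ irreps $\rho_{\lambda,\nu}$ is $N^{\Theta(q)}$. Any per-layer contraction that does not itself decay exponentially in $q$ therefore again forces $\Omega(Nn)$ layers at $q=\Theta(N)$.

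The idea you are missing is exactly such a degree-dependent contraction, measured irrep by irrep in averaged Frobenius norm. The paper proceeds as follows.
\begin{itemize}
\item It first bounds the advantage of any $T$-query adversary by Parseval and Cauchy--Schwarz. Expanding the acceptance probability $f_{\CA}$ in the orthonormal functions $\sqrt{d_{\lambda,\nu}}\,\rho_{\lambda,\nu}(U)_{ij}$ and using $\|f_{\CA}\|_{L^2(\mathrm{Haar})}\le1$ gives advantage at most $\bigl(\sum_{q=1}^{T}\sum_{\lambda,\nu\vdash q}d_{\lambda,\nu}\|\widehat{\mu}(\lambda,\nu)\|_F^2\bigr)^{1/2}$. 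Controlled queries are first removed using global-phase invariance.
\item For $U=F_nH_n\cdots F_1H_1F_0$, each random phase averages to the projector $P_{\lambda,\nu}$ onto the zero-weight space.
\item For Haar-random $H$, Schur's lemma gives $\bbE_H[\rho_{\lambda,\nu}(H)^\dagger P_{\lambda,\nu}\rho_{\lambda,\nu}(H)]=(b_{\lambda,\nu}/d_{\lambda,\nu})I$. Hence $\bbE_H\|\widehat{\mu_H}(\lambda,\nu)\|_F^2=b_{\lambda,\nu}(b_{\lambda,\nu}/d_{\lambda,\nu})^{n}$.
\item Tableau counting gives $b_{\lambda,\nu}/d_{\lambda,\nu}\le q!\,(17/(9N))^q$. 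Over $n$ layers this beats $\sum_{\lambda,\nu\vdash q}b_{\lambda,\nu}^2\le(2N^2)^q$ whenever $q\le N/8$, and a good fixed $H$ then exists by Markov.
\end{itemize}

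Your ingredients cannot reproduce this step. The Schur identity requires the interleaver to match Haar on degree-$(2q,2q)$ polynomials. Cliffords or exact $2$-designs supply this only at $q=1$. Nothing in your sketch controls how Clifford-invariant vectors sit relative to the zero-weight space inside $V_{\lambda,\nu}$ for $q$ up to $\Theta(N)$. A Clifford-interleaved construction with $O(N)$ security would be computationally efficient, which the paper explicitly leaves open; its own interleavers are only shown to exist.

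Likewise, treating $PF$ as an exact $S_N$-average requires $P$ to be nearly $2t$-wise independent at $t=\Theta(N)$. A Feistel network with $\poly(n)$ rounds of half-width round functions has only $\poly(n)\sqrt{N}$ bits of key, so it cannot be. The paper uses no permutation layer at all.
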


We note that the PRU in this construction is efficient in the sense that the number of times it needs to query the random oracle is only $\poly(n)$; however, unlike in the case of the $\mathsf{PFC}$ ensemble, we do not know how to implement it in a computationally efficient manner. We do believe that this is only a technical barrier, however, and that $O(N)$-query security should be achievable even by PRU algorithms which are query-efficient and computationally efficient.

The PRU-synthesizing algorithm in this construction is easy to describe: it computes the unitary
\begin{equation*}
    U = F_n H_n \cdots F_1 H_1 F_0,
\end{equation*}
where each $H_i$ is a fixed $n$-qubit unitary, and each $F_i$ can be viewed as a query to a separate $n$-bit random function $f_i$.  
This unitary $U$ is generated from $\widetilde{O}(N)$ random bits.
To push beyond $\widetilde{O}(N)$-query security, we would need our PRU algorithm to use more than $\widetilde{O}(N)$ random bits, and the only way to do so with $\poly(n)$ queries is to include additional ancilla qubits in the oracle's input register.
But performing queries on the ancilla qubits is fraught with danger, as it is easy to do so in a way which creates entanglement between the first $n$ qubits and the ancilla qubits, which would mean we are no longer computing a unitary on the first $n$ qubits, as we are intending to.
This ancilla issue has proven to be a significant barrier in designing even general unitary synthesis algorithms, but it seems especially challenging in the random oracle model, where we have no control over the form the oracle can take.

Our final result shows that this issue is not just a technical barrier: no query-efficient PRU in the random oracle model can achieve security beyond $\widetilde{O}(N)$ queries.

\begin{theorem}[No query-efficient ROM-PRUs beyond $\widetilde{O}(N)$-query security]\label{thm:no-ROM-PRU}
    Fix a constant $\gamma>0$. Suppose there is an $n$-qubit PRU in the random oracle model which is secure against $T = \Omega(N^{1+\gamma})$ queries. Then if $k$ is the number of queries to the random oracle needed to synthesize the PRU, we must have
    \begin{equation*}
        k \geq \frac{N^{\min\{\gamma, 1\}/2}}{\poly(n)}.
    \end{equation*}
\end{theorem}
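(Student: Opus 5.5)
Our plan is to reduce \cref{thm:no-ROM-PRU} to the structure behind \cref{thm:no-ROM-unitary}: a ROM-PRU secure against $T=\Omega(N^{1+\gamma})$ queries must look like an approximate unitary $T$-design on average over the oracle. We then show that a $k$-query algorithm whose output channel is determined by a random oracle has an ensemble with too little "effective entropy" to form such a design. Concretely, fix the synthesizing algorithm $A^{(\cdot)}$ with $k$ oracle queries. The ensemble is $\{A^{(f)}\}_f$, with $f$ uniform. The key structural fact is that $A^{(f)}$ depends on $f$ only through a small number of "relevant" oracle positions.

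\textbf{Step 1: compress the oracle.} Each query touches a superposition of inputs, but we use a standard heavy-query / compressed-oracle argument. Except with probability $\delta$, the channel $A^{(f)}$ is $\epsilon$-close in diamond norm to a channel that depends only on $f$ restricted to a set $S$ of size $m=\poly(k,1/\epsilon)$. To see this, note that the total query weight is $k$, so at most $k/\eta$ inputs have weight $\ge\eta$. Resampling the light positions changes the output by $O(k\sqrt{\eta})$ by the hybrid argument of BBBV. This is the same compression we expect the proof of \cref{thm:no-ROM-unitary} to use. Since $m$ bits determine the channel, the ensemble is essentially a mixture of at most $2^{m}$ channels plus small error, after averaging over the light positions to get fixed channels.

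\textbf{Step 2: an ensemble of $2^{m}$ unitaries cannot be a $t$-design when $t$ is large, and a distinguisher uses $O(t)$ queries.} Given an ensemble supported on $M=2^m$ unitaries $V_1,\dots,V_M$, we build a test using $t$ parallel queries. We apply $U^{\otimes t}$ to a state and measure the projector onto $\bigcup_i \mathrm{span}\{V_i^{\otimes t}\ket\psi\}$, so acceptance is $1$ under the ensemble. Under Haar, the orbit $U^{\otimes t}\ket\psi$ spreads over the symmetric (or a suitable irreducible) subspace of dimension $\binom{N+t-1}{t}$, so acceptance is at most $M/\dim$. To beat $M=2^{\poly(k)}$ we need $\log\dim\approx t\log(N/t)\gg \poly(k)$. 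That alone would only need $t=\poly(k)$, which is too weak: in that regime the diamond-norm errors and mixed channels dominate. The real constraint comes from making the test robust to the $\epsilon$-approximation and to the channels being non-unitary.

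\textbf{Step 3: the main obstacle is robustness, and it produces the $N^{\gamma/2}$ exponent.} The errors from Step 1 amplify over $t$ copies, to roughly $t\epsilon$. So we need $\epsilon\lesssim 1/t$, giving $m=\poly(k)\cdot t^2$ roughly (with $\eta\approx 1/(k^2t^2)$, $m\approx k^3t^2$). We then need $m\ll t\log N$ while also needing the test to distinguish using $T$ queries. Instead of a membership test on $2^m$ unitaries, we first learn the $m$ relevant bits. Unitary tomography with $O(N\cdot\mathrm{polylog})$ queries determines $U$ approximately. The remaining $T-\tilde O(N)\approx N^{1+\gamma}$ queries are then used to verify consistency, exploiting that a Haar unitary has $\sim N^2$ degrees of freedom while the ensemble has $\sim k^2 N$-ish effective freedom once light positions are accounted for. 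Balancing the amplified errors $T\cdot k/\sqrt{\ldots}$ against $N^{1+\gamma}$ yields $k\ge N^{\min\{\gamma,1\}/2}/\poly(n)$, with the cap at $1$ coming from \cref{thm:no-ROM-unitary} once $T\ge N^2$.

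The delicate point we expect to be hardest is making this parameter balance rigorous: bounding how diamond-norm errors accumulate over $T$ adaptive queries to the compressed versus true oracle. We would first try a hybrid argument over the $T$ uses with per-use error $\epsilon$, aiming for total error $T\epsilon$.
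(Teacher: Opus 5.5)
Your high-level intuition is right: a $k$-query random-oracle ensemble carries too little effective entropy to be a high-order design. But Step~1, on which everything rests, is false. The BBBV bound says that changing the oracle on a set $B$ moves the final state of a fixed input by at most $2\sum_i\lVert\Pi_B\ket{\psi_i}\rVert\le 2\sqrt{k\sum_{x\in B}w(x)}$. Positions that are individually light can jointly carry essentially all of the query weight (e.g.\ when queries are made on uniform superpositions), so resampling all of them is not a small perturbation. Moreover, diamond-norm closeness must hold for every input, and summed over a basis of the $N$-dimensional input the total weight is $2kN$ (\cref{lem:weight_sum_bound}), not $k$.

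Concretely, take $k=1$, $m=m_0=n$, $V_0=V_1=I$, so the channel is $\Phi_{\mathcal{O}_f}$. If $f,g$ are neither equal nor complementary, pick $x$ with $f(x)=g(x)$ and $y$ with $f(y)\neq g(y)$. The input $(\ket{x}+\ket{y})/\sqrt2$ is then sent to orthogonal states, so these $2^{N-1}$ channels are pairwise at diamond distance $1$. For $\epsilon<1/2$, no family of $2^{\poly(k,1/\epsilon)}$ channels can $\epsilon$-approximate more than a negligible fraction of them. Step~1 also contradicts the paper's own \cref{thm:linear_design}: that is a $\poly(n)$-query ROM ensemble forming an $N/8$-design, whereas an ensemble lying within $O(\epsilon)$ in diamond norm of $2^{\poly(n)}$ unitaries is detectable with $\poly(n)$ parallel queries.

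What is actually true is much weaker (\cref{thm:clean_bound}). It holds only for the clean block $A_f=J^\dagger W_fJ$, and only in expectation over a uniformly random $f$; the forward/backward martingale decomposition is what kills the cross terms among the many single-bit changes. Under these restrictions one can resample all but $O(k^2N/r^2)$ oracle bits while keeping $\mathbb{E}\lVert A_f-A_{f_S}\rVert_F^2\le 2Nr^2$. That is a constant-fraction perturbation relative to $\lVert A_f\rVert_F^2\le N$. It yields a net of size $2^{O(k^2N)}$ at Frobenius resolution $d\sqrt N$ (\cref{cor:net}), but no diamond-norm closeness. A dirty implementation must also first be turned into clean blocks via $W_f^\dagger(P\otimes I)W_f$ for a random Pauli $P$ (\cref{lem:dirty_to_clean}).

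The only available approximation is at constant normalized Frobenius resolution, and the implementation error itself may be as large as $1-1/\poly(n)$ in \cref{thm:main_2}. So any test whose guarantee degrades with the number of copies is hopeless, since it would need $\epsilon\lesssim 1/T$; this covers both your membership projector in Step~2 and the $T\epsilon$ hybrid in Step~3. Step~3 does not repair this:
\begin{itemize}
\item Tomography of an $N$-dimensional unitary needs $\widetilde{\Omega}(N^2)$ queries, more than $T$ when $\gamma<1$.
\item The ``$k^2N$ effective freedom'' and the exponent $\min\{\gamma,1\}/2$ are asserted, not derived.
\item \cref{thm:no-ROM-unitary} cannot supply the cap, because random-oracle synthesis (Haar coupling, expected error) and ROM-PRUs (design coupling, worst-case error) are incomparable.
\end{itemize}

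The missing idea is a certificate that is robust at the single-copy level. By \cref{lem:rotated_pauli_correlation_list}, for $U\sim\nu$, with probability $1-\beta$ some pair $(Q,C)$ from a fixed list of size $\log_2K=O(Nk^2/(\beta^2(1-\epsilon)^4)+\log N)$ has $\frac1N\operatorname{tr}(CU^\dagger QU)\ge(1-\epsilon)/8$. For Haar $U$, by concentration of measure (\cref{lem:haar_concentration}), each fixed pair exceeds $(1-\epsilon)/16$ only with probability $\exp(-\Omega(N^2(1-\epsilon)^2))$. These correlations are expectations of $M_{Q,C}=\frac12(I+Q\otimes C^T)$ in the Choi state of $U$. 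Hence the threshold-search measurement of \cref{lem:quantum_threshold_decision}, applied to $t=O(\log(K/\eta)/(1-\epsilon)^2)$ parallel Choi copies, separates the two cases without any error accumulation.

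The $\min\{\gamma,1\}$ then comes from a case split, up to factors $(1-\epsilon)^{O(1)}(1-\delta)^{O(1)}$. If $t\le T$, the design guarantee forces $\log K\gtrsim N^2(1-\epsilon)^2$, i.e.\ $k^2\gtrsim N$. If $t>T$, then $Nk^2\gtrsim T=N^{1+\gamma}$.
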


This result smoothly interpolates between Theorems~\ref{thm:no-ROM-unitary} and~\ref{thm:our-pru}: it rules out security beyond $\widetilde{O}(N)$ queries for query-efficient PRUs in the random oracle model. Thus, the query security achieved by the construction in Theorem~\ref{thm:our-pru} is optimal up to factors polynomial in $n$.
And for PRUs which are $T = \Omega(N^2)$-query secure, it shows that the synthesizing algorithm must use roughly $N^{1/2}$ queries, matching the bound in Theorem~\ref{thm:no-ROM-unitary}.

\subsection{Technical Overview}

\subsubsection{Lower Bounds}

\paragraph{Modeling the algorithm.}
To discuss our lower bounds, let us begin with how we mathematically model unitary synthesis algorithms in the random oracle model.
An algorithm which synthesizes $n$-qubit unitaries using $m$ total qubits can be represented as a quantum circuit of the form
\begin{equation}\label{eq:the-form}
W_f = V_k \mathcal{O}_f \dots \mathcal{O}_f V_0.
\end{equation}
Here, $V_0, \dots, V_k \in U(2^m)$ are fixed unitaries on $m$ qubits.
In addition, $\mathcal{O}_f$ is the diagonal phase unitary corresponding to a random function $f:\{0, 1\}^{m_0} \to \{0, 1\}$ for $m_0 \leq m$, which acts as
\begin{equation*}
\mathcal{O}_f\ket{x0^{m - m_0}} = (-1)^{f(x)}\ket{x0^{m-m_0}}.
\end{equation*}
The operation of the algorithm is therefore to implement the following $n$-qubit channel: on input an $n$ qubit state $\ket{\psi}$, (i) append $m-n$ ancilla qubits initialized to $\ket{0}$, (ii) apply the circuit, resulting in $W_f \ket{\psi 0^{m-n}}$, (iii) trace out the $m-n$ ancilla qubits, and (iv) output the rest.

If $U$ is the unitary that this algorithm is meant to implement, we will assume for the time being that it does so \emph{cleanly}, meaning that it resets the ancilla qubits to $\ket{0}$ prior to them being traced out;
in other words,
\begin{equation*}
W_f \ket{\psi 0^{m-n}} \approx (U \ket{\psi}) \ket{0^{m-n}}.
\end{equation*}
Equivalently, if $J$ is the isometry that takes $\ket{\psi}$ on $n$ qubits to $\ket{\psi0^{m - n}}$, then the algorithm cleanly implements $U$ if
\begin{equation*}
A_f \coloneqq J^\dag W_f J \approx U. 
\end{equation*}
Assuming this will help simplify our analysis, but we will see how to remove this assumption later.

\paragraph{The case of one query.}
Our goal is to prove a lower bound on the number of queries needed to obtain ROM-PRUs or a random-oracle unitary synthesis algorithm.
To do so, we need to constrain the number of unitaries that can be synthesized by a circuit of the form given in Equation~\eqref{eq:the-form}.
At a high level, we will do this by arguing that even though the algorithm takes in a large number of random bits from the random oracle---$2^{m_0}$ of them, in fact---only a small number of these bits actually impact the final unitary that the algorithm computes. 
If this is the case, then we can argue that the number of random bits the algorithm depends on is actually significantly smaller than $N^2$, in which case it cannot be simulating a Haar random unitary.

To illustrate the strategy we take to accomplish this, let us first consider the simplest case, in which the circuit only makes one query to the random oracle $f$.
In this case, $A_f$ can be written as
\begin{equation*}
A_f = J^\dag V_1 \mathcal{O}_f V_0 J. 
\end{equation*}
Now, for a fixed $f$, we would like to understand the extent to which changing its value at a given location $x \in \{0, 1\}^{m_0}$ affects the unitary $A_f$ that the algorithm computes. 
If $f^{\oplus x }$ denotes the function $f$ with its value at $x$ flipped, then
\begin{equation*}
A_{f^{\oplus x}} - A_f = \pm 2 J^\dag V_1 \Pi_x V_0 J.
\end{equation*}
We can therefore bound the Hilbert-Schmidt distance between these two cases as
\begin{align}
\lVert A_{f^{\oplus x }} - A_f \rVert_F^2
= 4 \lVert J^\dag V_1 \Pi_x V_0 J \rVert_F^2
\leq 4 \lVert J^\dag V_1 \Pi_x\rVert^2_F \lVert \Pi_x V_0 J\rVert^2_F.
\end{align}
Motivated by this, let us define a vector of weights $w$ via
\begin{equation}
w(x) = \lVert J^\dag V_1 \Pi_x\rVert^2_F + \lVert \Pi_x V_0 J\rVert^2_F,
\end{equation}
which we will interpret as representing how much one can affect the unitary $A_f$ by changing $f(x)$.
This  vector is very similar to the ``weight vector'' from \cite{LMW24}.\footnote{Indeed, if the second term in the vector is appropriately normalized, then it is identical to the weight vector used in \cite{LMW24}. The first term is unnecessary in their work because in the single query case, it can be absorbed into a final measurement. While both works use the weight vector to understand the influence of different inputs, we do not yet have a formal connection between the two proof strategies.}

We would like these weights to satisfy two properties.
First, it should be the case that on a given input $f$, very few coordinates have high weight.
In particular, let us consider how many bitstrings $x$ satisfy a certain weight threshold
\begin{equation*}
w(x) > \tau,
\end{equation*}
where we will typically imagine $\tau$ to be some sufficiently small constant. 
Note that we can bound the sum of all weights via
\begin{align*}
\sum_{x \in \{0, 1\}^{m_0}} w(x) &= \left(\sum_x \lVert J^\dag V_1 \Pi_x \rVert_F^2\right) + \left(\sum_x \lVert \Pi_x V_0 J \rVert_F^2\right)\\
&= \lVert J^\dag V_1 \rVert_F^2+ \lVert V_0 J \rVert_F^2\\
&= 2N,
\end{align*}
since $V_0 J$ and $(J^\dag V_1)^\dag$ are both isometries with input dimension $N$. 
So, while there are $M_0 = 2^{m_0}$ choices for $x$, only $O(N)$ of them can satisfy $w(x) > \tau$ when $\tau$ is a constant.
That is to say, nearly all inputs have low weight.

Second, it should be the case that on a given input $f$, flipping the value of $f$ on one or more coordinates with low weight should not change the unitary computed by $A_f$ by a significant amount.
Starting with one coordinate at a time, for any $x$ such that $w(x) \leq \tau$,
\begin{equation*}
\lVert A_{f^{\oplus x}} - A_f \rVert_F^2
\leq 4 \lVert J^\dag V_1 \Pi_x\rVert^2_F \lVert \Pi_x V_0 J\rVert^2_F
\leq 4 w(x)^2
\leq 4\tau w(x).
\end{equation*}
Generalizing this, suppose that $f$ and $f'$ only differ at inputs $x \in S \coloneqq \{x_1, \ldots, x_{|S|}\}$ that satisfy the weight threshold. Letting $f^{\oplus T}$ denote the function which is equal to $f$ outside of $T$ and has its bits flipped on coordinates in $T$, we see that 
\begin{align}
\lVert A_{f'} - A_f \rVert_F^2 &\leq |S| \sum_{i = 1}^{|S|} \lVert A_{f^{\oplus \{x_1, \ldots, x_i\}}} - A_{f^{\oplus \{x_1, \ldots, x_{i-1}\}}} \rVert_F^2 \tag{using $\lVert \sum_{i=1}^L C_i \rVert_F^2 \leq L\sum_{i = 1}^L \lVert C_i \rVert_F^2$}\\
&\leq 4|S|\tau \sum_{i = 1}^{|S|} w(x_i)\nonumber\\
&\leq 8\tau M_0 N.  \tag{because $|S| \leq M_0$}
\end{align}
If the $|S|$ were \emph{not} in the expression above, i.e.\ if the upper bound were simply $8\tau N$, then it would be a strong upper bound.
This would mean that every $A_{f'}$ reachable from $A_f$ by flipping low-weight coordinates is at most $\sqrt{8 \tau N}$ in Hilbert-Schmidt distance away from $A_f$.
For $\tau$ sufficiently small, this would place every such $A_{f'}$ in a nontrivial ball around $A_f$, since the Hilbert-Schmidt diameter of the unitary group is $O(\sqrt{N})$.
In doing so, it would show that there are only $O(N)$ inputs of $f$ that can change $A_f$ significantly---however, $O(N)$ degrees of freedom is not sufficient to cover all unitaries, or even unitary designs for $T=N^{1+\gamma}$. 

However, $|S|$ \emph{is} in the expression above, and its presence expresses the possibility that altering the value of $f$ at different inputs could lead to matrix differences
\begin{equation*}
     A_{f^{\oplus \{x_1, \ldots, x_i\}}} - A_{f^{\oplus \{x_1, \ldots, x_{i-1}\}}}
\end{equation*}
which constructively interfere. 
Instead, what we would like to show is that these differences behave as though they are pairwise orthogonal, in which case we would not pick up the additional factor of $|S|$.

\paragraph{Proving orthogonality.}
How do we ensure that these matrix distances are orthogonal? We solve the problem by considering our norm
\begin{equation*}
    \lVert A_{f'} - A_f \rVert_F^2
\end{equation*}
\emph{in expectation over }$f$ and $f'$. 
We consider a uniformly random $f$, and resample its value $f(x)$ iteratively at every single choice of $x \in \{0, 1\}^{m_0}$ which has low weight. 
The process of resampling a uniformly random $f$ at each index sequentially is a (time-inhomogeneous) symmetric Markov chain. 
It is symmetric because at each step, the probability that $f$ transitions to $f'$ and the probability that $f'$ transitions to $f$ are equal. 
If $W$ is the number of low-weight coordinates in $f$, consider the sequence $A_{f}, A_{f_1}, \ldots, A_{f_{W}}$ of blocks corresponding to the iterative process that resamples $f$ on each coordinate, one at a time.
There are known techniques \cite{NPSS06} that can use the symmetry of the Markov chain to decompose this sequence into a ``forward martingale'' and a ``backward martingale'', and the operators in each sequence can be shown to be pairwise-orthogonal in expectation. 
This orthogonality is what allows us to complete the result. 

How does it work? Instead of considering each term
\begin{equation}
A_{f_i} - A_{f_{i-1}},
\end{equation}
we consider 
\begin{equation}
(A_{f_i} - \overline{A}_{i}) + (\overline{A}_{i} - A_{f_{i-1}}). 
\end{equation}
Here, $\overline{A}_{i}$ is the average of the possible values of $A_{f_i}$, given $f_{i-1}$. 
Now, notice that for any $f_j$ before $f_i$, $\mathbb{E}[A_{f_i} - \overline{A}_i | f_j] = 0$. 
As a consequence, $\left(A_{f_i} - \overline{A}_i\right)_i$ form a \textit{martingale difference sequence}, which immediately implies that 
\begin{equation}
\mathbb{E}\left[\left(A_{f_i} - \overline{A}_i\right)\left(A_{f_j} - \overline{A}_j\right)\right] = 0.
\end{equation}
Hence, this first sequence of operators is orthogonal in expectation.
On the other hand, $\overline{A}_i - A_{f_{i-1}}$ is a \textit{also} a martingale difference sequence, although it is less obvious to see why.
Because our Markov chain is symmetric, we can consider $A_{f}, \dots, A_{f_{W}}$ backwards, and it is still a Markov chain. 
After doing so, $\overline{A}_i - A_{f_{i-1}}$ is completely analogous to $A_{f_i} - \overline{A}_i$, and orthogonality follows for the second sequence as a consequence. 

Hence, taking an expectation over $f$ resolves the orthogonality issue---our original sum decomposes as the sum of two orthogonal sequences, which is enough to obtain the sufficient bound
\begin{equation*}
\mathbb{E}\left[\lVert A_{f'} - A_f \rVert_F^2 \right]\leq 2 \mathbb{E}\left[ \sum_{x \in S} \lVert A_{f^{\oplus x}} - A_f \rVert_F^2\right].
\end{equation*}
If we did not take the expectation, our argument would have implied worst-case unitary synthesis is not possible. Thus, in taking the expectation, we sacrifice the possibility of a worst-case result, but obtain a strong statement about a uniformly random $f$.  

\paragraph{The case of multiple queries.}
All of these arguments work for one query, but what happens if there are more? 
In this case, there are two important qualitative changes. 
First, if the algorithm makes $k$ queries, then we must now express $A_{f^{\oplus x}} - A_f$ as a sum of $k$ hybrids:
\begin{equation*}
A_{f^{\oplus x}} - A_f = \sum_{i =1}^kH_i.
\end{equation*}
In each hybrid, one of the oracle queries, $\mathcal{O}_f$, is replaced by $\mathcal{O}_{f'}$. 
This is a relatively minor change, although the hybrid introduces an extra factor of $k^2$, which is where the query-dependence of our theorems arises from. 
We also must define our weight function differently now. 
If $L_{f, i} =  J^\dag V_k \mathcal{O}_f \dots V_i$ and $R_{f, i} =  V_{i-1} \dots V_1 \mathcal{O}_f V_0 J$, then we define
\begin{equation}
w_f (x) = \sum_{i = 1}^k  \left(\lVert L_{f, i }\Pi_x \rVert_F^2 + \lVert \Pi_x R_{f, i} \rVert_F^2\right).
\end{equation}
Now, the weight function is a sum of the influence of every input $x \in \{0, 1\}^{m_0}$, for each of the queries. 
But notably, the weight function now \emph{depends on $f$}. 
This is the second qualitative difference, and it forces the resampling algorithm for $f$ to be adaptive. 
Indeed, whenever an output of $f$ is resampled, the weights of other inputs change. 
Despite this complication, all the subsequent analysis survives, and we can still obtain the result that resampling most inputs of $f$ only perturbs $A_f$. 

Recall that all the arguments described in this section applied only to clean synthesis.
While they are strong enough to prove strong lower bounds against both unitary designs and random-oracle unitary synthesis in this setting, we need to consider the possibility that the unitary is implemented while introducing garbage into the ancilla.
If the channel corresponding to $W_f$ has small diamond distance $\epsilon \approx 0$ from the channel corresponding to a unitary $U \in U(N)$, we can consider the circuit
\begin{equation*}
W_f^\dag (P \otimes I) W_f,
\end{equation*}
for any fixed Pauli $P \in \mathcal{P}_n$. 
This is a rotation of the Pauli operator $P$, but $W_f^\dag$ erases the garbage ancilla that $W_f$ introduced.
This circuit is therefore close to a clean implementation of $U^\dag P U$, and hence a low error dirty algorithm can implement a very large class of reflections, which is sufficient to apply the clean lower bound described above. 
However, this argument fails at a diamond distance of $\frac{1}{2}$, since there are two copies of $W_f$ included, which can blow up the error by a factor of two. 
We are able to handle this extra technicality by adapting the argument, instead considering the expectation over a random Pauli to overcome the obstacle and extend our impossibility to the full dirty regime. 

\subsubsection{Upper Bound}
Recall that our goal is to show that the ensemble $\{F_n H_n \ldots F_1H_1F_0\}_{F_0, \ldots, F_n}$ is an $O(N)$-query secure PRU for some choice of $H_1, \ldots, H_n$. In fact, we will be able to show that with high probability over the choice of $H = (H_1, \ldots, H_n) \sim \mu_{\mathsf{Haar}}^N$, the distribution $\mu_H \sim F_n H_n \ldots F_1 H_1 F_0$ is indistinguishable from $\mu_{\mathsf{Haar}}$ to $T$-query distinguishers for $T \leq  cN$ (for some $c>0$). By an application of Markov's inequality, it will suffice to show that
    \[ \bbE_{H_1, \ldots, H_n \sim \mu_{\mathsf{Haar}}}\left[\max_{\text{$T$-query $\mathcal{A}$}}\left|\Pr_{U \sim \mu_H}[\mathcal{A}^{U, U^{\dagger}, \overline{U}, U^T} = 1] - \Pr_{U \sim \mu_{\mathsf{Haar}}}[\mathcal{A}^{U, U^{\dagger}, \overline{U}, U^T} = 1]\right|\right] \leq \negl(n). \]
For simplicity we assume that our adversaries only make forward queries to $U$, but it turns out that our proof will apply equally well to conjugate/inverse/transpose queries as well.

We begin by considering the situation for a fixed choice of $H$. We observe that the acceptance probability of any adversary $\mathcal{A}$ which makes at most $T$ queries to $U$ can be written as the linear combination of entries of the matrices $\{U^{\otimes q} \otimes \overline{U}^{\otimes q}\}_{0 \leq q \leq T}$. A well-known fact in representation theory known as \emph{mixed Schur-Weyl duality} therefore implies that the acceptance probability of $\mathcal{A}$ can actually be expressed in terms of the irreducible representations (irreps) $\{\rho_{\lambda, \nu}\}_{\lambda, \nu}$ of the maps
    \[ U \mapsto U^{\otimes q} \otimes \overline{U}^{\otimes q}, \quad 0 \leq q \leq T \]
which are indexed by partitions $\lambda, \nu \vdash r$ where $0 \leq r \leq T$.

Our high-level proof strategy will therefore be to show that our ensemble $\mu_H$ ``fools" all such ``low-degree" (nontrivial) irreps. Roughly speaking, our goal will be to bound $\bbE_{F_0, \ldots, F_n}[\rho_{\lambda, \nu}(F_nH_n \ldots F_1H_1F_0)]$ for each irrep $\rho_{\lambda, \nu}$. By the independence of $F_0, \ldots, F_n$, we can factor our expression as follows:
    \[ \bbE_{F_0, \ldots, F_n}[\rho_{\lambda, \nu}(F_nH_n \ldots F_1H_1F_0)] = \bbE_{F_n}[\rho_{\lambda, \nu}(F_n)]\rho_{\lambda, \nu}(H_n) \ldots \bbE_{F_1}[\rho_{\lambda, \nu}(F_1)]\rho_{\lambda, \nu}(H_1)\bbE_{F_0}[\rho_{\lambda, \nu}(F_0)]. \]
Let us examine the term $\bbE_{F_i}[\rho_{\lambda, \nu}(F_i)]$. Since $F_i$ is drawn randomly from the set of all possible phase functions (which form a group), it turns out that $\bbE_{F_i}[\rho_{\lambda, \nu}(F_i)]$ is actually equal to a \emph{projection}, which we denote by $P_{\lambda, \nu}$. If $F_i$ were actually a Haar-random unitary, then this projection would act as the zero projection for all nontrivial representations, which reflects the fact that a Haar-random unitary is a secure PRU under any number of queries. Our hope would be that $P_{\lambda, \nu}$ is close to zero in the sense that it is relatively low-rank for all irreps $\rho_{\lambda, \nu}$. 

Unfortunately, although one can show that $P_{\lambda, \nu}$ projects onto an exponentially small part of the overall Hilbert space, there are also an exponential number of irreps! And indeed, it must be the case that this bound cannot suffice, as a single random phase $F_i$ is trivial to distinguish from a Haar-random unitary.

Likewise, if we just iteratively applied more random phases, this would not change our ensemble since applying the projection $P_{\lambda, \nu}$ a second time does not have any effect. This leads us to introduce the interleaving unitaries $(H_1, \ldots, H_n)$, so that we can leverage their \emph{mixing properties}. Intuitively, since $H_1, \ldots, H_n$ are generic, they effectively scramble any state in the image of $P_{\lambda, \nu}$, so that each time we apply it, all but an exponentially small part of the adversary's state gets killed by the projection. By compounding this effect $O(n)$ times, we obtain an exponentially improved bound, which is sufficient to compensate for the exponential number of irreps.

Due to the fact that for all choices of $(H_1, \ldots, H_n)$, the ensemble $\{F_n H_n \ldots F_1H_1F_0\}_{F_0, \ldots, F_n}$ is invariant under global (complex) phases, the reduction of \cite{TW25} immediately yields security against controlled queries as well.

Crucially, although each instantiation of our ensemble does \emph{not} use any randomness in the choice of $(H_1, \ldots, H_n)$, we are only able to show that there exists a choice (and in fact, most choices) of $(H_1, \ldots, H_n)$ that makes $\{F_n H_n \ldots F_1H_1F_0\}_{F_0, \ldots, F_n}$ an $O(N)$-secure PRU, which is why our construction is only query-efficient but not computationally efficient. Preliminary exploration has indicated that our general approach can be leveraged for an appropriate choice of efficiently implementable (deterministic) unitaries $(H_1, \ldots, H_n)$,\footnote{One can for example take $H_1 = \ldots = H_n$ to be the $n$-qubit Hadamard.} to obtain an $O(\sqrt{N})$-secure \emph{strong} pseudorandom unitary, which would be a significant improvement over the bounds shown for existing constructions of strong PRUs \cite{MH25}.\footnote{In particular, $O(\sqrt{N})$ security has only been shown for forward PRU constructions, and existing techniques only give the much worse bound of $O(N^{1/16})$ strong security.} However, since we have not verified all such proofs, we have chosen not to include the results here.

To our knowledge, this is the first time that the idea of bounding the bias of the unitary distribution at each irreducible representation has been used to show the security of unitary designs. Unlike previous proof techniques for analyzing designs (e.g. the path recording oracle of \cite{MH25, FLM+26}), our proof easily handles the strong query case as it does not distinguish between query types. Finally, we note that the idea of finding constructions that ``fool low-degree irreducible representations" bears some light resemblance to the construction of epsilon-biased sets for nonabelian groups \cite{CMR13}, although the goal there is to bound the maximum bias over all irreducible representations (rather than their sum). 

\subsection{AI Use Disclosure}
In this section, we will describe the precise work contributed by models, and contributed by human authors. 

We (the human authors) were inspired by the connection between PRUs in the random-oracle model and unitary synthesis described in \cite{BY26}.
We believed that formulating and attacking unitary synthesis in an average-case setting (with random oracles in place of an arbitrary oracle) would be natural, and could be a stepping stone to impossibility results for unitary synthesis and ROM-PRUs. 
We created a usable definition of random-oracle unitary synthesis and deployed GPT 5.6 Sol for several hours. 
It was able to solve the problem and an additional prompt obtained the ROM-PRU lower bound as well. 
We then engaged in interactive sessions with GPT 6 Astra to understand the proof and simplify its structure.

Concurrently, we were working on constructing efficient unitary designs to surpass the $\sqrt{N}$-query barrier in the random-oracle model. 
GPT 5.6 Sol was able to prove the $O(N)$-query security of a target construction that the authors proposed, for a \textit{query-efficient} unitary design. 
This construction provides a matching upper bound to our lower bound against ROM-PRUs.

After digesting and organizing the result, we produced a draft, which was then verified with AI assistance.
We take full accountability for the correctness of the results in this work. 

\paragraph{Organization.}
In Section \ref{sec:prelims}, we establish the preliminaries for our work, including standard notation and the essential background in representation theory.
In Section \ref{sec:rom_synthesis}, we recall the definitions of unitary synthesis and pseudorandom unitaries before proposing our new definition of random-oracle unitary synthesis. 
In Section \ref{sec:lower_bound}, we prove the impossibility of efficient random-oracle unitary synthesis and (forward non-adaptive) pseudorandom unitaries in the random-oracle model with security against more than $\widetilde{O}(N)$ queries. 
In Section \ref{sec:linear_design}, we construct a strong pseudorandom unitary in the random-oracle model secure against $O(N)$ queries. 

\section{Preliminaries} \label{sec:prelims}
\subsection{Notation}
For an $n$ qubit system, we sometimes denote the dimension $N = 2^n$. In general, we use capitalization to denote the dimension of a system of qubits, including $m$ and $M$ as well as $m_0$ and $M_0$. 
We denote the set of $n$-qubit Pauli operators by
\[
\mathcal P_n
=\{P_1\otimes\cdots\otimes P_n:
P_j\in\{I,X,Y,Z\}\}.
\]
In particular, $|\mathcal P_n|=4^n$, and every element is Hermitian and unitary.
For $n\leq m$, let $J$ be the isometry defined by $J\ket{\psi}=\ket{\psi}\otimes\ket{0^{m-n}}$.
For a unitary $U$, write $\Phi_U$ for the channel $\Phi_U(\rho)=U\rho U^\dag$.
We write $i(\rho)=J\rho J^\dag$ for the channel that appends $m-n$ zero ancillas, while we denote the partial trace over the last $m-n$ qubits by $\pi$. 
Thus, for an $m$-qubit unitary $W$, the channel $\pi\circ\Phi_W\circ i$ appends zero ancillas, applies $W$, and discards the ancillas.

We will often consider a function $f:\{0,1\}^{m_0}\to\{0,1\}$, for $m_0\le m$ where $m$ is the total number of qubits. Then, 
define the phase oracle $\mathcal O_f$ acting on
$m$ qubits by
    \[ \mathcal O_f = \sum_x (-1)^{f(x)}\ketbra{x}{x} \otimes I_{m-m_0}. \] 

For a matrix $A$, we write $\lVert A\rVert$ for its
operator norm, equal to its largest singular value.
Its trace norm and Frobenius norm are, respectively,
\[
\lVert A\rVert_1
=\text{tr}\sqrt{A^\dag A},
\qquad \text{and} \qquad
\lVert A\rVert_F
=\sqrt{\text{tr}(A^\dag A)}.
\]
For density operators $\rho,\sigma$ and quantum
channels $\Phi,\Psi$, we use the trace and diamond distances,
\[
D(\rho,\sigma)=\frac12\lVert\rho-\sigma\rVert_1,
\qquad
D_\diamond(\Phi,\Psi)
=\frac12\lVert\Phi-\Psi\rVert_\diamond.
\]
For channels $\Phi,\Psi$ acting on system $A$, the diamond distance is
\begin{equation*}
D_\diamond(\Phi,\Psi)=\sup_{B,\rho_{AB}}D\!\left((\Phi\otimes\id_B)(\rho_{AB}),(\Psi\otimes\id_B)(\rho_{AB})\right),
\end{equation*}
where the supremum ranges over all finite-dimensional reference systems $B$ and density operators $\rho_{AB}$ on $A\otimes B$.

\subsection{Unitary Designs}
We consider two different notions of unitary designs. 
The first we provide is a standard/forward non-adaptive design, the weakest commonly used definition and the one that we will consider for our lower bound. Our upper bound uses the strong adaptive notion defined below.
\begin{definition}[Standard non-adaptive $T$-query design]\label{def:standard_nonadaptive_design}
Let $\nu$ be a distribution over $U(N)$.
We say that $\nu$ is a \emph{$\delta$-approximate standard non-adaptive unitary $T$-design} if
\[
    \frac{1}{2}
    \left\|
        \mathcal T_{\nu}^{(T)}
        -
        \mathcal T_{\mathrm{Haar}}^{(T)}
    \right\|_\diamond
    \leq \delta,
\]
where
\[
    \mathcal T_{\mathcal D}^{(T)}(\rho)
    :=
    \mathbb E_{U\sim\mathcal D}
    \left[
        U^{\otimes T} \rho (U^\dagger)^{\otimes T}
    \right].
\]
Equivalently, no distinguisher that prepares an arbitrary joint state
on $T$ query registers and an arbitrary reference system, applies
$U^{\otimes T}$ in parallel, and then performs an arbitrary joint
measurement can distinguish $U\sim\nu$ from Haar with advantage
greater than $\delta$.
\end{definition}
We now define a strong adaptive unitary design. 
In this case, the unitary no longer has to be queried in parallel--furthermore, we allow access to the unitary, its inverse, its conjugate and transpose, as well as controlled access to each of these oracles. 
\begin{definition}[Strong adaptive design]\label{def:strong_adaptive_design}
Let $\nu$ be a distribution over $U(N)$.
For $U\in U(N)$, define the oracle family
\[
    \mathcal S_U=
    \left\{
        U,\,
        U^\dagger,\,
        \overline U,\,
        U^T,\,
        cU,\,
        cU^\dagger,\,
        c\overline U,\,
        cU^T
    \right\},
\]
where
\[
    cV    :=
    |0\rangle\!\langle 0|\otimes I
    +
    |1\rangle\!\langle 1|\otimes V.
\]

We say that $\nu$ is a $\delta$-approximate
\emph{strong adaptive unitary $T$-design} if, for every
quantum algorithm $\mathcal A$ making at most $T$ total oracle calls
to members of $\mathcal{S}_U$,
\[
\left|
    \Pr_{U\sim\nu}
        [\mathcal A^{\mathcal{S}_U}=1]
    -
    \Pr_{U\sim\mathrm{Haar}}
        [\mathcal A^{\mathcal{S}_U}=1]
\right|
\leq \delta.
\]
The oracle calls may be chosen adaptively, and arbitrary ancillary
systems and intermediate quantum operations are allowed.
\end{definition}

\subsection{Representation Theory}
We review the representation definitions and results we will need for the proof of our unitary design construction. For more details, see \cite{Wri16,Ngu23,SSW25}.

The unitary group $U(d)$ is the group of $d \times d$ unitary matrices. More generally, for $V$ a complex, finite-dimensional Hilbert space, $U(V)$ is the group of unitary operators on $V$.
\begin{definition}[Representations]
    Let $G$ be a group. A \emph{complex, unitary, finite-dimensional representation} (henceforth, a \emph{representation}) of $G$ is a pair $(\mu, V)$, where $V$ is a finite-dimensional complex Hilbert space and $\mu: G \to U(V)$ is a group homomorphism. The dimension of $V$ is called the dimension of the representation $\mu$ and is denoted $\dim(\mu)$.
\end{definition}

\begin{definition}[Characters]
    Let $(\mu, V)$ be a representation of a group $G$. The \emph{character} of $\mu$ is the map $\chi_\mu : G \to \bbC$ given by $\chi_\mu(g) := \Tr \mu(g)$.
\end{definition}

\begin{definition}[Irreducible representations]
    Let $(\mu, V)$ be a representation of a group $G$. A subspace $W \subseteq V$ is an \emph{invariant subspace} of $V$ if $\mu(g) \cdot W \subseteq W$ for all $g \in G$. An invariant subspace is \emph{trivial} if $W = \{0\}$ or $W = V$. If $V$ has a nontrivial invariant subspace, it is called \emph{reducible}, and otherwise is called \emph{irreducible}. An irreducible representation is also called an \emph{irrep}. The set of equivalence classes of irreps of $G$ will be written $\widehat{G}$.
\end{definition}
Moving forward, we will fix a representative $\widehat{\mu}_i$ from each isomorphism class of irreps and identify $\widehat{G}$ with the set $\{\widehat{\mu}_i\}$.

\begin{lemma}[Schur’s lemma]
    Let $G$ be a group, and let $(\mu, V)$ be an irreducible representation of $G$. Let $T: V \to V$ be a linear map such that $T\mu(g) = \mu(g)T$ for all $g \in G$. Then $T = \frac{\Tr T}{\dim(\mu)} \cdot I_V$.
\end{lemma}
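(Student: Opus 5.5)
The plan is the standard commutant argument. Let $T:V\to V$ commute with every $\mu(g)$. Since $V$ is a finite-dimensional complex vector space, $T$ has at least one eigenvalue $c\in\bbC$, because the characteristic polynomial has a root over $\bbC$. Set $S := T - c I_V$. Then $S$ also commutes with every $\mu(g)$, since $I_V$ trivially does.

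The key step is to show that $\ker S$ is an invariant subspace. If $v\in\ker S$, then for every $g\in G$ we have $S\mu(g)v=\mu(g)Sv=0$, so $\mu(g)v\in\ker S$. Because $c$ is an eigenvalue, $\ker S\neq\{0\}$. Irreducibility therefore forces $\ker S=V$, so $S=0$ and $T=cI_V$.

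To identify the constant, take traces: $\Tr T = c\Tr I_V = c\dim(\mu)$, which gives $c=\Tr T/\dim(\mu)$. This uses $\dim(\mu)\ge 1$, which holds because an irreducible representation is nonzero by convention.

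Nothing here is a real obstacle. The only point needing care is the existence of an eigenvalue, which is exactly where complex scalars and finite dimensionality are used. Unitarity of $\mu$ is not needed.
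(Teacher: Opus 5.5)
Your proof is correct and complete. The paper gives no proof of its own: it quotes Schur's lemma as standard background, deferring to its representation-theory references. Your eigenvalue-plus-kernel argument is the usual textbook route those references take, so there is nothing further to compare.

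One small point is worth stating earlier than you do. The nonzero convention $\dim V\ge 1$ is needed for dividing by $\dim(\mu)$, and also already for the existence of the eigenvalue $c$, since a map on the zero space has a characteristic polynomial with no root. It is genuinely a convention that must be imposed: under the paper's literal definition of irreducibility, the zero representation would qualify.
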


\begin{definition}[Direct sum of representations]
    Let $G$ be a group, and let $(\mu_1, V_1), \ldots, (\mu_k, V_k)$ be representations of $G$. Then the direct sum of $\mu_1, \ldots, \mu_k$ is the representation $(\mu, V)$, where $V := \bigoplus_{i = 1}^k V_i$ and $\mu(g) := \bigoplus_{i = 1}^k \mu_i(g)$ for all $g \in G$. Each representation $\mu_i$ may appear multiple times. If $\mu_i$ appears $m_i$ times, we also write $\mu = \bigoplus_{i = 1}^k m_i \cdot \mu_i$, and refer to $m_i$ as the multiplicity of $\mu_i$.
\end{definition}

\begin{definition}[Complete reducibility]
    Let $G$ be a group, and let $(\mu, V)$ be a representation of $G$. Then $\mu$ is \emph{completely reducible} if
        \[ \mu \cong \bigoplus_{\widehat{\mu}_i \in \widehat{G}} m_i \cdot \widehat{\mu}_i \]
    for some nonnegative integers $\{m_i\}$.
\end{definition}
Finite-dimensional unitary representations are completely reducible.

\begin{definition}[Partitions]
    Let $n$ be a non-negative integer. A \emph{partition} of $n$ is a finite list $\lambda = (\lambda_1, \ldots, \lambda_m)$ of positive integers such that $\lambda_1 \geq \lambda_2 \geq \ldots \geq \lambda_m$ and $\lambda_1 + \cdots + \lambda_m = n$. We write $\lambda \vdash n$ to denote that $\lambda$ is a partition of $n$. The \emph{size} of $\lambda$ is $|\lambda| = n$, and the \emph{length} of $\lambda$, denoted $\ell(\lambda) = m$.

    Note that the empty partition $\lambda = \emptyset$ is the only partition of $n = 0$.
\end{definition}
Partitions can be represented pictorially with Young diagrams.
\begin{definition}[Young diagrams]
    Let $\lambda \vdash n$. A \emph{Young diagram} of \emph{shape} $\lambda$ is a diagram consisting of $n$ boxes, arranged into $\ell(\lambda)$ left-justified rows, such that the $i$'th row contains $\lambda_i$ boxes. The box in the $i$'th row and $j$'th column is denoted by $(i, j)$.
\end{definition}
\begin{figure}[htbp]
    \centering
    \ytableausetup{boxsize=1.3em}

    \begin{minipage}{0.45\textwidth}
        \centering
        \ydiagram{6,3,1}
        \par\medskip
        $\lambda = (6,3,1)$
    \end{minipage}

    \caption{The Young diagram associated with a partition $\lambda = (6, 3, 1)$ of $n = 10$.}
    \label{fig:young-diagram}
\end{figure}

\begin{definition}[Standard Young tableau]
    A \emph{standard Young tableau} (SYT) of shape $\lambda \vdash n$ is a bijective labeling of the boxes of $\lambda$ with the numbers in $[n]$, such that the labels are strictly increasing rightwards along rows, and downwards along columns. The set of all SYTs of shape $\lambda$ is denoted $SYT(\lambda)$.
\end{definition}

\begin{definition}[Semistandard Young tableaux]
    Fix a positive integer $d$. A \emph{semistandard Young tableau} (SSYT) of shape $\lambda$ and alphabet $[d]$ is a labelling of the boxes of $\lambda$, such that the labels are weakly increasing rightwards along rows, and strictly increasing downwards along columns. The set of all SSYTs of shape $\lambda$ and alphabet $[d]$ is denoted $SSYT(\lambda, d)$.
    
    Enumerating how many times each number in $[d]$ appears in tableau $T$ gives a sequence known as the \emph{weight} of $T$, denoted as $w(T) = (w_1(T), \ldots, w_d(T))$. The label of box $(i, j)$ in a tableau $T$ is denoted by $T(i, j)$.
\end{definition}

\begin{definition}[Rational tableaux]
    Fix a positive integer $d$. For partitions $\lambda, \nu$ where $\ell(\lambda)+\ell(\nu) \leq d$, a \emph{rational tableau} of shape $(\lambda, \nu)$ and alphabet $[d]$ is a pair
        \[ (A, B) \in SSYT(\lambda, d) \times SSYT(\nu, d) \]
    such that for all $1 \leq i \leq d$, $|\{j: A(j, 1) \leq i\}|+|\{j: B(j, 1) \leq i\}| \leq i$. We denote the set of all rational tableaux of shape $(\lambda, \nu)$ and alphabet $[d]$ by $RT(\lambda, \nu, d)$.
\end{definition}

\begin{figure}[htbp]
    \centering
    \ytableausetup{boxsize=1.3em}

    \begin{minipage}{0.30\textwidth}
        \centering
        \begin{ytableau}
            1 & 2 & 4 & 8 \\
            3 & 5 & 6 \\
            7
        \end{ytableau}
    \end{minipage}
    \hfill
    \begin{minipage}{0.30\textwidth}
        \centering
        \begin{ytableau}
            1 & 1 & 2 & 3 & 4 & 4\\
            2 & 2 \\
            4
        \end{ytableau}
    \end{minipage}
    \hfill
    \begin{minipage}{0.35\textwidth}
        \centering
        \(\left(\;
        \vcenter{\hbox{\begin{ytableau}
            1 & 1 & 1 & 3 & 6 \\
            3 & 4 \\
            4
        \end{ytableau}}}
        \, ,\;
        \vcenter{\hbox{\begin{ytableau}
            2 & 3 & 4 & 4 \\
            5 & 5 \\
            6 & 6
        \end{ytableau}}}
        \;\right)\)
    \end{minipage}

    \caption{On the left, a standard Young tableau of shape $(4, 3, 1)$. In the middle, a semistandard Young tableau of shape $(6, 2, 1)$ with alphabet $[4]$ and weight $(2, 3, 1, 3)$. On the right, a rational tableau of shape $((5, 2, 1), (4, 2, 2))$ with alphabet $[6]$.}
    \label{fig:young-tableaux}
\end{figure}

\begin{definition}[Polynomial and rational representations]
    Let $(\mu, V)$ be a representation of $U(N)$. Then $\mu$ is a \emph{polynomial representation} if we can pick a basis for $V$ such that the entries of the matrix $\mu(U)$ are polynomials in the entries of $U$ for all $U \in U(N)$. Similarly, $\mu$ is a \emph{rational representation} if we can pick a basis for $V$ such that the entries of the matrix $\mu(U)$ are polynomials in the entries of $U$ and $\overline{U}$ for all $U \in U(N)$.
\end{definition}
Note that a representation being polynomial or rational is a basis-independent notion.

\begin{theorem}
    The polynomial irreps of $U(d)$ are in bijection with partitions $\lambda$ of length at most $d$. The irrep corresponding to $\lambda$ is denoted by $(\rho_\lambda, V^d_{\lambda})$, where $V^d_{\lambda}$ is called the \emph{Schur module}.
    
    Similarly, the rational irreps of $U(d)$ are in bijection with pairs of partitions $(\lambda, \nu)$ satisfying $\ell(\lambda)+\ell(\nu) \leq d$. The irrep corresponding to $(\lambda, \nu)$ is denoted by $(\rho_{\lambda, \nu}, V^d_{\lambda, \nu})$.
\end{theorem}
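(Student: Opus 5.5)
The plan is to reduce everything to highest-weight theory for the diagonal torus $T \subset U(d)$ and to use Schur--Weyl duality for existence. I would first treat the polynomial case and then obtain the rational case by twisting with powers of the determinant.

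\emph{Step 1 (weights).} Let $(\mu,V)$ be a polynomial representation. Restricting to $T=\{\mathrm{diag}(z_1,\dots,z_d)\}$, it decomposes into weight spaces on which $T$ acts by monomials $z_1^{a_1}\cdots z_d^{a_d}$. Because the matrix entries of $\mu$ are polynomials in the entries of $U$, every weight $a$ lies in $\mathbb{Z}_{\ge 0}^d$. The character $\chi_\mu$ is a class function, so it is determined by its restriction to $T$. That restriction is a symmetric polynomial in $z_1,\dots,z_d$, symmetric because the Weyl group $S_d$ acts by permutation matrices.

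\emph{Step 2 (existence via Schur--Weyl).} Decompose $(\bbC^d)^{\otimes n}\cong\bigoplus_{\lambda\vdash n,\ \ell(\lambda)\le d} V^d_\lambda\otimes S_\lambda$, using the double-commutant theorem for the commuting actions of $U(d)$ and $S_n$. Each $V^d_\lambda$ is then an irreducible polynomial representation, and it is nonzero exactly when $\ell(\lambda)\le d$. Its character is the Schur polynomial $s_\lambda(z_1,\dots,z_d)$, which can be computed via the Weyl character formula or via SSYT$(\lambda,d)$. The Schur polynomials $s_\lambda$ with $\ell(\lambda)\le d$ form a basis of the symmetric polynomials in $d$ variables. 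They are therefore linearly independent, so distinct $\lambda$ give non-isomorphic irreps.

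\emph{Step 3 (completeness).} This is the step I expect to be the main obstacle: showing that every polynomial irrep appears in this list. Let $\mu$ be an irreducible polynomial representation. Scalars $zI$ act by a character of $U(1)$, which by Schur's lemma is $z^n$ for a single $n$, so all matrix coefficients of $\mu$ are homogeneous of degree $n$. The space of homogeneous degree-$n$ polynomials in the entries of $U$ is exactly the span of the matrix coefficients of $U^{\otimes n}$. By Schur orthogonality, applied through Haar integration over the compact group $U(d)$, an irrep whose matrix coefficients lie in the span of the coefficients of $U^{\otimes n}$ must be isomorphic to a summand of $U^{\otimes n}$. Combined with Step 2, this gives the bijection $\lambda\leftrightarrow V^d_\lambda$.

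\emph{Step 4 (rational case).} Let $\rho$ be a rational irrep. Its entries are polynomials in $U$ and $\overline U=(U^{-1})^T$, and $\overline U$ has entries that are polynomials in $U$ divided by $\det U$. Hence $\rho\otimes\det^k$ is polynomial and irreducible for $k$ large. Applying Steps 1--3 to $\rho\otimes\det^k$, the rational irreps correspond to dominant integral weights $m_1\ge\cdots\ge m_d$ in $\mathbb{Z}^d$; the parameterization is independent of $k$ because $V_{\lambda+(1^d)}\cong V_\lambda\otimes\det$. Finally, set up a bijection between such weights and pairs $(\lambda,\nu)$. Let $\lambda$ be the positive entries of $m$, and let $\nu$ be the negatives of the negative entries, read in reverse order. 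With zeros filling the middle, the constraint is exactly $\ell(\lambda)+\ell(\nu)\le d$. Define $\rho_{\lambda,\nu}$ to be the irrep with weight $(\lambda_1,\dots,\lambda_{\ell(\lambda)},0,\dots,0,-\nu_{\ell(\nu)},\dots,-\nu_1)$. This matches the mixed Schur--Weyl indexing used later for $U^{\otimes q}\otimes\overline U^{\otimes q}$.
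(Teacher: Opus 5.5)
The paper does not prove this statement. It records the classification as standard background in the preliminaries and defers to the cited representation-theory references, so there is no argument of the paper's to compare against. Your proposal is a correct, standard self-contained proof:
\begin{itemize}
\item Schur--Weyl duality produces the polynomial irreps $V^d_\lambda$, and their characters $s_\lambda$ separate them.
\item The central character of $zI$, combined with Schur orthogonality, shows that every polynomial irrep occurs inside some $(\bbC^d)^{\otimes n}$.
\item Twisting by $\det^k$, using $\overline U=(U^{-1})^T=\mathrm{adj}(U)^T/\det U$ and $V_{\lambda+(1^d)}\cong V_\lambda\otimes\det$, reduces the rational case to the polynomial one. This gives a bijection with nonincreasing $m\in\mathbb{Z}^d$, and hence with pairs $(\lambda,\nu)$ satisfying $\ell(\lambda)+\ell(\nu)\le d$.
\end{itemize}
Your Step 3 is not the obstacle you anticipated; it is essentially complete as written.

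Two steps deserve a sentence more than you give them.
\begin{itemize}
\item In Step 2, the double-commutant theorem applies only once you know that the span of $\{U^{\otimes n}:U\in U(d)\}$ is the full commutant of $S_n$. This follows from polarization together with the fact that a holomorphic polynomial vanishing on $U(d)$ vanishes on $GL_d(\bbC)$.
\item In Step 3, ``homogeneous of degree $n$'' should be read as an identity of functions on $U(d)$. You get it by expanding $\mu(zU)=z^n\mu(U)$ into homogeneous components and comparing Fourier coefficients in $z$. The same comparison forces $n\ge 0$.
\end{itemize}

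Compared with the bare citation, your route buys an explicit labeling convention: $\rho_{\lambda,\nu}$ is the irrep of highest weight $(\lambda_1,\dots,\lambda_{\ell(\lambda)},0,\dots,0,-\nu_{\ell(\nu)},\dots,-\nu_1)$. This is the convention the paper needs later. However, your closing claim that this labeling matches the paper's later usage is asserted, not proved. That usage is characters $s_{\lambda,\nu}$ and dimensions given by rational tableaux, and the mixed Schur--Weyl decomposition of $U^{\otimes q}\otimes\overline U^{\otimes q}$. The identification is a separate theorem (Koike/Stembridge), which the paper also imports by citation. It is not needed for the bijection asserted here, but it is needed if you want your labels to agree with the formulas used in Section~5.
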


\begin{theorem}
    There is a basis of $V^d_{\lambda}$ for which each basis element is bijectively associated with an SSYT of shape $\lambda$ and alphabet $[d]$, so that $\dim(V^d_{\lambda}) = |SSYT(\lambda, d)|$.
\end{theorem}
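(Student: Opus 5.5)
The plan is to build the basis inductively with the Gelfand--Tsetlin construction, restricting along the chain $U(d)\supset U(d-1)\supset\cdots\supset U(1)$. I identify an SSYT $T$ of shape $\lambda$ over $[d]$ with a chain of partitions
\[
\emptyset=\lambda^{(0)}\subseteq\lambda^{(1)}\subseteq\cdots\subseteq\lambda^{(d)}=\lambda ,
\]
where $\lambda^{(i)}$ is the shape formed by the boxes of $T$ with labels at most $i$. Semistandardness (rows weakly increasing, columns strictly increasing) says exactly that each skew shape $\lambda^{(i)}/\lambda^{(i-1)}$ is a horizontal strip. Equivalently, $\lambda^{(i-1)}$ \emph{interlaces} $\lambda^{(i)}$:
\[
\lambda^{(i)}_1\ge\lambda^{(i-1)}_1\ge\lambda^{(i)}_2\ge\lambda^{(i-1)}_2\ge\cdots,\qquad \ell(\lambda^{(i)})\le i .
\]
So $SSYT(\lambda,d)$ is in bijection with interlacing chains ending at $\lambda$. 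This step is routine combinatorics.

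The key step is the branching rule. Embed $U(d-1)\times U(1)\hookrightarrow U(d)$ block-diagonally. I will show that, as a representation of $U(d-1)\times U(1)$,
\[
\rho_\lambda\big|_{U(d-1)\times U(1)}\;\cong\;\bigoplus_{\mu \text{ interlacing } \lambda}\rho_\mu\otimes \det{}^{|\lambda|-|\mu|},
\]
and that this decomposition is multiplicity-free.

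I would prove this with characters. By the Weyl character formula, the character of $\rho_\lambda$ on the diagonal torus is the Schur polynomial
\[
s_\lambda(x_1,\dots,x_d)=\frac{\det\big(x_i^{\lambda_j+d-j}\big)}{\det\big(x_i^{d-j}\big)} .
\]
The identity to establish is
\[
s_\lambda(x_1,\dots,x_d)=\sum_{\mu\prec\lambda}s_\mu(x_1,\dots,x_{d-1})\,x_d^{|\lambda|-|\mu|}.
\]
To do so, in the numerator determinant subtract $x_d$ times column $j+1$ from column $j$ (or, equivalently, set $x_d$ aside via a telescoping/Cauchy--Binet argument). Then expand along the last row; after dividing by the Vandermonde determinant, the interlacing sum appears.

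Characters of compact groups determine representations up to isomorphism, and every unitary representation is completely reducible. Together these give the branching rule. The sets of $\mu$ appearing are distinct, which gives multiplicity-freeness.

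Given the branching rule, induction on $d$ finishes the proof:
\begin{itemize}
\item $V^d_\lambda$ splits canonically (multiplicity-free, so by Schur's lemma each summand is a well-defined subspace) into subspaces isomorphic to $V^{d-1}_\mu$, one for each $\mu\prec\lambda$.
\item Each $V^{d-1}_\mu$ splits further in the same way, down to $U(1)$, where every irrep is one-dimensional.
\item The result is an orthogonal decomposition of $V^d_\lambda$ into one-dimensional lines, indexed by interlacing chains, i.e.\ by $SSYT(\lambda,d)$.
\item Choosing a unit vector in each line gives the Gelfand--Tsetlin basis, and $\dim V^d_\lambda=|SSYT(\lambda,d)|$.
\end{itemize}
As a sanity check on the dimension count alone, evaluating the combinatorial formula $s_\lambda=\sum_{T}x^{w(T)}$ at $x=(1,\dots,1)$ gives the same number.

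The main obstacle is the determinantal identity behind the branching rule. If the column manipulation gets messy, I would instead prove $s_\lambda=\sum_{T\in SSYT(\lambda,d)}x^{w(T)}$ directly, via the Lindström--Gessel--Viennot lemma or the Bender--Knuth involution. The branching identity then follows immediately by separating out the boxes labelled $d$, which form a horizontal strip.
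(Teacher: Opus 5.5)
The paper gives no proof of this statement; it is listed among standard representation-theoretic background facts. Your Gelfand--Tsetlin argument is a correct, standard proof of it, modulo one determinant step that needs repair.

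\textbf{Comparison with the simplest route.} Your argument proves more than the statement asks. As written, the theorem only asserts $\dim V^d_\lambda=|SSYT(\lambda,d)|$, since any basis of a space of that dimension can be put in bijection with $SSYT(\lambda,d)$. The shortest route is your ``sanity check'': $\dim V^d_\lambda=\chi_{\rho_\lambda}(I)=s_\lambda(1,\dots,1)=|SSYT(\lambda,d)|$, using the tableau expansion of $s_\lambda$. This is also the only way the paper uses such facts: in Lemma~\ref{lemma:rank_dim_est}, ranks and dimensions are read off from tableau expansions of characters. What your branching route buys is a basis that genuinely corresponds to tableaux. Because you restrict to $U(d-1)\times U(1)$ rather than only $U(d-1)$, each Gelfand--Tsetlin vector is a weight vector for the diagonal torus with weight $w(T)$. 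The tableau expansion of $s_\lambda$ then comes out as a corollary rather than being an input. You should also say explicitly that agreement of characters on the diagonal torus suffices because every element of $U(d-1)\times U(1)$ is conjugate, within that subgroup, to a diagonal matrix.

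\textbf{The step to repair.} Subtracting $x_d$ times column $j+1$ from column $j$ clears the last row of the Vandermonde determinant, but not of the alternant $\det(x_i^{\alpha_j})$ with $\alpha_j=\lambda_j+d-j$. There the last-row entries become $x_d^{\alpha_j}-x_d^{\alpha_{j+1}+1}$, which vanish only when $\lambda_j=\lambda_{j+1}$.
\begin{itemize}
\item Use the multiplier $x_d^{\alpha_j-\alpha_{j+1}}$ instead. Perform the operations for $j=1,\dots,d-1$ in increasing order, so that each one uses the original column $j+1$.
\item The last row is then $(0,\dots,0,x_d^{\alpha_d})$. For $i<d$ the $(i,j)$ entry is $(x_i-x_d)\sum_{\beta=\alpha_{j+1}}^{\alpha_j-1}x_i^{\beta}x_d^{\alpha_j-1-\beta}$.
\item Pull $\prod_{i<d}(x_i-x_d)$ out of the rows. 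It cancels against the Vandermonde recursion $\det(x_i^{d-j})_{d\times d}=\prod_{i<d}(x_i-x_d)\det(x_i^{d-1-j})_{(d-1)\times(d-1)}$.
\item Expand the remaining determinant by multilinearity in the columns and write $\beta_j=\mu_j+(d-1)-j$. The range $\alpha_{j+1}\le\beta_j\le\alpha_j-1$ becomes exactly $\lambda_{j+1}\le\mu_j\le\lambda_j$.
\item The total $x_d$-exponent is $\lambda_d+\sum_{j<d}(\lambda_j-\mu_j)=|\lambda|-|\mu|$.
\end{itemize}
With this fix the character identity, and hence the whole argument, is self-contained, and the LGV/Bender--Knuth fallback is unnecessary.
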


\begin{theorem}[{\cite[Proposition 2.4]{Ste87}}]
    There is a basis of $V^d_{\lambda, \nu}$ for which each basis element is bijectively associated with a rational tableau of shape $(\lambda, \nu)$ and alphabet $[d]$, so that $\dim(V^d_{\lambda, \nu}) = |RT(\lambda, \nu, d)|$.
\end{theorem}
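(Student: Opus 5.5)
This statement is cited from \cite[Proposition 2.4]{Ste87} rather than proved here, so the plan below is the standard argument I would follow; I have not checked it against that reference. The idea is to reduce the rational irrep to a polynomial irrep by twisting with a power of the determinant, and then to identify rational tableaux with semistandard tableaux of the resulting shape.

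First I would fix an integer $c \geq \nu_1$ and use that $\det^{c}\otimes \rho_{\lambda,\nu}$ is a polynomial irrep of $U(d)$. Its highest weight is
\[(\lambda_1+c,\ldots,\lambda_{\ell(\lambda)}+c,\,c,\ldots,c,\,c-\nu_{\ell(\nu)},\ldots,c-\nu_1),\]
which is dominant and nonnegative, so it is a partition $\mu$ of length at most $d$. Since tensoring with a one-dimensional representation preserves dimension, $\dim V^d_{\lambda,\nu}=|SSYT(\mu,d)|$ by the SSYT basis theorem for Schur modules. Thus the whole statement comes down to a bijection between $RT(\lambda,\nu,d)$ and $SSYT(\mu,d)$.

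Second, I would build that bijection following Stembridge. The shape $\mu$ is a $d\times c$ rectangle with $\lambda$ attached to its right end (in the first $\ell(\lambda)$ rows) and the skew shape of $\nu$ rotated by $180^\circ$ removed from its bottom-right. Given $(A,B)$, I place $A$ on the $\lambda$ part. The complement of $B$ I encode column by column: each column of the rectangle is filled with the complement in $[d]$ of the corresponding column entries of $B$, and the columns are ordered so that the result is semistandard. Where there is no entry of $B$, a column is just $1,\ldots,d$, which is harmless.

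The main obstacle is checking that this map is a bijection exactly onto $SSYT(\mu,d)$, and in particular that the column-count condition $|\{j:A(j,1)\le i\}|+|\{j:B(j,1)\le i\}|\le i$ is precisely what makes the glued filling strictly increasing down the first column of $\lambda$ against the complemented columns. I would handle the interior of the rectangle by a jeu de taquin or complement argument. If that becomes messy, I would fall back to a purely numerical proof: compare the Weyl dimension formula for $\mu$ with a generating-function count of rational tableaux.

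A basis, rather than just a dimension count, then follows by transporting the Gelfand--Tsetlin (SSYT) basis of $V^d_\mu$ along the bijection, since $V^d_{\lambda,\nu}\cong \det^{-c}\otimes V^d_\mu$ as vector spaces.
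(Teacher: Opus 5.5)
Your plan is correct. The paper gives no argument of its own for this statement, only the citation to Stembridge, so there is no competing route to compare against; your determinant twist followed by a column-complement bijection is a standard way to prove it.

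Twisting by $\det^{c}$ with $c\ge\nu_1$ does turn $\rho_{\lambda,\nu}$ into $\rho_\mu$ with $\mu=(\lambda_1+c,\ldots,\lambda_{\ell(\lambda)}+c,c,\ldots,c,c-\nu_{\ell(\nu)},\ldots,c-\nu_1)$. Write $B_{\cdot j}\subseteq[d]$ for the entries of the $j$-th column of $B$. Place $[d]\setminus B_{\cdot j}$ in column $c+1-j$, so the ordering is simply reversal, and glue $A$ into columns $c+1,\ldots$. This gives a bijection $RT(\lambda,\nu,d)\to SSYT(\mu,d)$.

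Your description of the ``main obstacle'' is off, however, and the check is easier than you suggest. The key criterion is this: a column with entry set $C$ can stand immediately left of a column with entry set $C'$ in a semistandard tableau iff $|C\cap[t]|\ge|C'\cap[t]|$ for every $t\in[d]$. Since $|([d]\setminus C)\cap[t]|=t-|C\cap[t]|$, the condition for $B_{\cdot j}$ to stand left of $B_{\cdot j+1}$ is literally the same as the condition for $[d]\setminus B_{\cdot j+1}$ to stand left of $[d]\setminus B_{\cdot j}$. So complementing and reversing preserves semistandardness in both directions, the full columns $1,\ldots,d$ on the left are harmless, and no jeu de taquin is needed.

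At the junction, the $\lambda$-block sits in columns $c+1,\ldots$ of rows $1,\ldots,\ell(\lambda)$, where the rectangle part has no cells, so there is no column condition there at all. The only new constraint is weak increase along rows from column $c$, which holds $[d]\setminus B_{\cdot 1}$, into the first column of $A$. By the same criterion this reads $t-|\{j:B(j,1)\le t\}|\ge|\{j:A(j,1)\le t\}|$ for all $t$, which is exactly the rational-tableau inequality. So that inequality encodes a row condition across the junction, not strict increase down the first column of $\lambda$; the latter is automatic, since it is just the first column of $A$.

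The Weyl-dimension fallback is therefore unnecessary. Your bijection also satisfies $w(S)=w(A)-w(B)+c(1,\ldots,1)$. Transporting the Gelfand--Tsetlin basis thus gives weight vectors of weight $w(A)-w(B)$, which also yields the character identity $\chi_{\rho_{\lambda,\nu}}=s_{\lambda,\nu}$ that the paper uses when computing $b_{\lambda,\nu}$.
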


\begin{definition}[Rational Schur polynomial]
    Let $x_1, \ldots, x_d$ be indeterminates. For partitions $\lambda$ and $\nu$ where $\ell(\lambda) + \ell(\nu) \leq d$, the \emph{rational Schur polynomial} $s_{\lambda, \nu}(x_1, \ldots, x_d)$ is the symmetric Laurent polynomial defined as 
        \[ s_{\lambda, \nu}(x_1, \ldots, x_d) := \sum_{(T, T') \in RT(\lambda, \nu, d)} \prod_{i = 1}^d x_i^{w_i(T)-w_i(T')}. \]
\end{definition}

\begin{theorem}
Let $M\in U(d)$ have eigenvalues $\alpha_1, \ldots, \alpha_d$. Then $\chi_{\rho_{\lambda, \nu}}(M) = s_{\lambda, \nu}(\alpha_1, \ldots, \alpha_d)$.
\end{theorem}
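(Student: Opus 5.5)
The plan is to reduce the identity to diagonal matrices, realize $\rho_{\lambda,\nu}$ as a determinant-twisted polynomial irrep, and then match the resulting Schur polynomial with the rational-tableau sum. Since $\chi_{\rho_{\lambda,\nu}}$ is a trace of a representation, it is a class function: $\chi(VMV^\dagger)=\Tr\bigl(\rho(V)\rho(M)\rho(V)^\dagger\bigr)=\chi(M)$. Every $M\in U(d)$ is unitarily diagonalizable, $M=V\,\mathrm{diag}(\alpha_1,\ldots,\alpha_d)\,V^\dagger$. So it suffices to prove $\chi_{\rho_{\lambda,\nu}}(D)=s_{\lambda,\nu}(\alpha_1,\ldots,\alpha_d)$ for diagonal $D=\mathrm{diag}(\alpha_1,\ldots,\alpha_d)$ in the maximal torus. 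Both sides are then Laurent polynomials in the $\alpha_i$, so it is enough to compare them as formal Laurent polynomials.

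Next I would pass to polynomial representations by twisting with the determinant. Let $\ell=\ell(\lambda)$, $\ell'=\ell(\nu)$, and $k=\nu_1$. The rational irrep $\rho_{\lambda,\nu}$ has highest weight
\[
\omega=(\lambda_1,\ldots,\lambda_\ell,0,\ldots,0,-\nu_{\ell'},\ldots,-\nu_1),
\]
which is a valid dominant weight because $\ell+\ell'\le d$. Hence $\rho_{\lambda,\nu}\otimes\det^{k}$ is the polynomial irrep $\rho_\mu$ with $\mu=\omega+(k^d)$, a genuine partition. For this step I would take the standard identification of the bijection in the earlier theorem with highest-weight theory. The character of $\rho_\mu$ on the torus is the ordinary Schur polynomial $s_\mu(\alpha)=\sum_{T\in SSYT(\mu,d)}\alpha^{w(T)}$. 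This uses that the SSYT basis of $V^d_\mu$ from the earlier theorem can be taken to be a weight basis (Gelfand--Tsetlin basis), or alternatively the Weyl character formula. Therefore
\[
\chi_{\rho_{\lambda,\nu}}(D)=(\alpha_1\cdots\alpha_d)^{-k}\,s_\mu(\alpha).
\]

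The remaining step is combinatorial and is where I expect the main obstacle. I need to show
\[
(\alpha_1\cdots\alpha_d)^{-k}s_\mu(\alpha)=\sum_{(T,T')\in RT(\lambda,\nu,d)}\alpha^{w(T)-w(T')},
\]
that is, a bijection $SSYT(\mu,d)\to RT(\lambda,\nu,d)$ with $w(S)-(k^d)=w(T)-w(T')$. The natural map, following Stembridge's proof of \cite[Proposition 2.4]{Ste87}, is as follows. Split $S$ into its left part of shape $(\mu_i-k)_+$, which gives $T$ of shape $\lambda$ after removing the columns of full height. For the right $k$ columns, each column $c$ of length $d-\nu'_{j}$ is replaced by its complement $[d]\setminus c$, of length $\nu'_j$. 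Read in reverse column order, these complements form $T'$ of shape $\nu$. Complementing a column turns contribution $\alpha^{c}$ into $(\alpha_1\cdots\alpha_d)\alpha^{-([d]\setminus c)}$, which accounts exactly for the $\det^{-k}$ factor. The work is to verify that semistandardness of $S$ across the seam between $T$ and the complemented columns is equivalent to $T'$ being semistandard together with the rational-tableau condition $|\{j:T(j,1)\le i\}|+|\{j:T'(j,1)\le i\}|\le i$.

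If that bijection proves awkward, I would fall back on a cruder argument. The earlier theorem already gives a basis indexed by $RT(\lambda,\nu,d)$, and in Stembridge's construction this is a torus-weight basis in which the vector for $(T,T')$ has weight $w(T)-w(T')$. Granting that, the trace of $\rho_{\lambda,\nu}(D)$ is immediately the stated sum, with no further identity needed.
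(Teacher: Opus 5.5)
The paper gives no proof of this statement. It is recorded as background right after the citation of Stembridge's rational-tableau basis \cite[Proposition 2.4]{Ste87}, so that citation is the only thing to compare against. Your route is the standard self-contained proof, and it is correct. It consists of a class-function reduction to the torus, a twist by $\det^{\nu_1}$ into the polynomial irrep $\rho_\mu$ with $\mu=\omega+(k^d)$, and a weight-preserving column-complementation bijection $SSYT(\mu,d)\to RT(\lambda,\nu,d)$. What it buys is independence from Stembridge's construction. Setting all $\alpha_i=1$, it also yields the count $d_{\lambda,\nu}=|RT(\lambda,\nu,d)|$ that the paper later uses in Lemma~\ref{lemma:rank_dim_est}.

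Two slips in your description need fixing. First, left and right are swapped. The columns of $\mu$ to complement are the \emph{leftmost} $k$ columns: column $c\le k$ has length $d-\nu'_{k-c+1}$, with $\nu'_j$ the length of column $j$ of $\nu$. $T$ is read off columns $k+1,\dots,k+\lambda_1$, which is the shape $((\mu_i-k)_+)_i=\lambda$. Second, no full-height columns should be removed. When $k\ge1$ there are none, and when $\nu=\emptyset$ with $\ell(\lambda)=d$, removing them would change the shape of $T$.

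The step you flagged as the main obstacle is a two-line check. For $X\subseteq[d]$ write $X(t)=|X\cap[t]|$. Adjacent columns $A$ (left) and $B$ (right) with $|A|\ge|B|$ satisfy the weak row condition iff $A(t)\ge B(t)$ for every $t$; for the converse, take $t=b_i$. Complementation replaces $X(t)$ by $t-X(t)$ and so reverses this inequality. This gives three things:
\begin{enumerate}
\item The first $k$ columns of $S$ are row-weakly increasing iff their complements, in reverse order, form a semistandard $T'$ of shape $\nu$.
\item The columns of $T$ are untouched, so $T$ is semistandard iff that part of $S$ is.
\item At the seam, the condition between column $k$ of $S$ (the complement of the first column $T'_1$ of $T'$) and the first column $T_1$ of $T$ becomes $T_1(t)+T'_1(t)\le t$ for all $t$. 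This is exactly the rational-tableau condition.
\end{enumerate}
Column strictness is automatic since columns are sets. The map is invertible and satisfies $\alpha^{w(S)}=(\alpha_1\cdots\alpha_d)^{k}\alpha^{w(T)-w(T')}$, as you computed, so this completes the proof.

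Your fallback, that Stembridge's basis is a torus-weight basis with weights $w(T)-w(T')$, is essentially a restatement of the theorem. It is best not to rely on it.
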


\begin{theorem}[Mixed Schur-Weyl duality, \cite{Ngu23,GBO23}]
Fix $2r < N$ and consider the Hilbert space $V = (\bbC^N)^{\otimes r} \otimes (\overline{\bbC^N})^{\otimes r}$. Define the representation $(\phi^N_r, V)$ of $U(N)$ where
    \[ \phi^N_r(U) \ket{x_1, \ldots, x_r, x_{r+1}, \ldots, x_{2r}} = (U^{\otimes r} \otimes \overline{U}^{\otimes r}) \ket{x_1, \ldots, x_r, x_{r+1}, \ldots, x_{2r}}. \]
Then $\phi_r^N$ has the following decomposition:
    \[ \phi_r^N \cong \bigoplus_{k = 0}^r \bigoplus_{\lambda, \nu \vdash k} \rho_{\lambda, \nu} \otimes I_{m_{\lambda,\nu}^{(r, N)}}. \]
\end{theorem}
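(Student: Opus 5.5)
The plan is to reduce to ordinary Schur--Weyl duality on each half of the tensor space and then decompose the resulting products $V_\alpha \otimes (V_\beta)^*$ into rational irreps by a character computation.

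\textbf{Step 1: split into the two halves.} Since $\overline{U}$ acting on $\overline{\bbC^N}$ is isomorphic to the dual representation $U \mapsto (U^{-1})^{T}$ on $(\bbC^N)^*$, we have $\phi^N_r \cong (\bbC^N)^{\otimes r} \otimes ((\bbC^N)^*)^{\otimes r}$. Ordinary Schur--Weyl duality gives
\[
(\bbC^N)^{\otimes r} \cong \bigoplus_{\alpha \vdash r} V^N_\alpha \otimes I_{\dim S_\alpha},
\]
where $S_\alpha$ is the Specht module. Here every $\alpha \vdash r$ occurs, because $\ell(\alpha) \le r < N$. Dualizing the second factor, we get
\[
\phi^N_r \cong \bigoplus_{\alpha,\beta \vdash r} \bigl(V^N_\alpha \otimes (V^N_\beta)^*\bigr) \otimes I_{\dim S_\alpha \dim S_\beta}.
\]
So it suffices to decompose each $V_\alpha \otimes V_\beta^*$ into rational irreps.

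\textbf{Step 2: decompose $V_\alpha \otimes V_\beta^*$ via characters.} Every finite-dimensional unitary representation is completely reducible, and rational irreps are determined by their characters (Weyl's character theory on the maximal torus). It therefore suffices to prove the character identity
\[
s_\alpha(x)\, s_\beta(x^{-1}) \;=\; \sum_{\lambda,\nu} \Bigl(\sum_{\gamma} c^{\alpha}_{\lambda\gamma}\, c^{\beta}_{\nu\gamma}\Bigr)\, s_{\lambda,\nu}(x), \qquad \text{valid when } \ell(\alpha)+\ell(\beta) \le N. \tag{$\ast$}
\]
This is Koike's stable formula, where $c$ denotes Littlewood--Richardson coefficients. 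The hypothesis holds here since $\ell(\alpha)+\ell(\beta) \le 2r < N$.

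I would prove $(\ast)$ by the standard contraction argument.
\begin{itemize}
\item The contraction maps $(\bbC^N)^{\otimes p}\otimes((\bbC^N)^*)^{\otimes q} \to (\bbC^N)^{\otimes (p-1)}\otimes((\bbC^N)^*)^{\otimes (q-1)}$ are equivariant and, in the stable range, surjective.
\item Their common kernel, the traceless tensors, decomposes under $S_p \times S_q$ as $\bigoplus_{\lambda\vdash p,\nu\vdash q} \rho_{\lambda,\nu}\otimes S_\lambda \otimes S_\nu$. This uses the highest-weight vector of weight $(\lambda_1,\dots,\lambda_{\ell},0,\dots,0,-\nu_{\ell'},\dots,-\nu_1)$, which requires $\ell(\lambda)+\ell(\nu)\le N$.
\item Inducting on the number of contractions then yields the branching $(\ast)$.
\end{itemize}

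\textbf{Step 3: read off the sizes.} Since $|\alpha| = |\beta| = r$, any nonzero term in $(\ast)$ has $|\lambda| = r - |\gamma| = |\nu|$. So only pairs $\lambda,\nu \vdash k$ with $0 \le k \le r$ occur. Collecting all terms, every multiplicity space is a finite-dimensional space on which $U(N)$ acts trivially. This gives exactly the claimed form $\bigoplus_{k\le r}\bigoplus_{\lambda,\nu\vdash k} \rho_{\lambda,\nu}\otimes I_{m^{(r,N)}_{\lambda,\nu}}$.

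The nonnegative integers $m^{(r,N)}_{\lambda,\nu}$ are, by the walled Brauer algebra, the dimensions of its irreps. For the statement as given, however, we only need them to be well-defined multiplicities. If positivity is wanted, it follows by choosing $\gamma = (r-k)$ and $\alpha \supseteq \lambda$, $\beta \supseteq \nu$ obtained by adding a horizontal strip.

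\textbf{Main obstacle.} I expect the hardest step to be the traceless-tensor analysis underlying $(\ast)$: showing that contractions are surjective and that the traceless part is exactly the stated sum of rational irreps. This is precisely where the hypothesis $2r < N$ is used, since without it Koike's modification rules enter. The route I would try first is to exhibit explicit highest-weight vectors, namely a Young-symmetrized product of basis vectors $e_1,\dots$ in the covariant slots and $e_N^*,\dots$ in the contravariant slots. I would then count dimensions against $|RT(\lambda,\nu,N)|$ using the basis theorem of \cite[Proposition 2.4]{Ste87}.
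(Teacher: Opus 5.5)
The paper does not prove this statement. It imports it from \cite{Ngu23,GBO23}, where it comes out of the walled Brauer algebra picture: $B_{r,r}(N)$ generates the commutant of $U^{\otimes r}\otimes\overline{U}^{\otimes r}$, and the double-commutant theorem then gives, in one step, the index set $\{(\lambda,\nu):\lambda,\nu\vdash k,\ k\le r\}$ and the identification of $m^{(r,N)}_{\lambda,\nu}$ with walled Brauer irrep dimensions.

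Your route is correct, modulo citing Koike's stable rule. That route is ordinary Schur--Weyl duality on each half, followed by Koike's rule $(\ast)$ for $V_\alpha\otimes V_\beta^*$.
\begin{itemize}
\item Every pair $(\lambda,\nu)$ occurring in $(\ast)$ has $\lambda\subseteq\alpha$ and $\nu\subseteq\beta$, so $\ell(\alpha)+\ell(\beta)\le 2r<N$ forces $\ell(\lambda)+\ell(\nu)\le N$, and no modification rules enter.
\item Your degree count $|\lambda|=|\nu|=r-|\gamma|$ is right, and so is your Pieri positivity argument.
\item The route buys an explicit formula, $m^{(r,N)}_{\lambda,\nu}=\sum_{\alpha,\beta\vdash r}f^\alpha f^\beta\sum_{\gamma}c^{\alpha}_{\lambda\gamma}c^{\beta}_{\nu\gamma}$, where $f^\alpha$ is the number of standard Young tableaux of shape $\alpha$.
\end{itemize}

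Step 1 is a detour rather than a simplification. The contraction argument you sketch for $(\ast)$ already decomposes the whole space $(\bbC^N)^{\otimes p}\otimes((\bbC^N)^*)^{\otimes q}$ into identity-insertions of traceless tensors. For $p=q=r$, that decomposition is the theorem itself, so unless you simply cite Koike, all the work sits in that step.

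There is also a small misplacement in that sketch. A single contraction map is surjective for every $N\ge 1$, so surjectivity is not where the stable range is needed. It is needed for two things:
\begin{itemize}
\item every $\rho_{\lambda,\nu}$ with $\lambda\vdash p$, $\nu\vdash q$ must exist;
\item the map must be an isomorphism onto the full mixed tensor space. This is the map that takes, for each $k$ and each size-$k$ partial matching between covariant and contravariant slots, a traceless $(p-k,q-k)$-tensor and inserts identity tensors along the matching.
\end{itemize}
Outside the stable range the second point fails. For example, at $N=1$ the two full matchings in the $(2,2)$ case give the same tensor.

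Finally, the statement leaves $m^{(r,N)}_{\lambda,\nu}$ unspecified, and the paper only uses which $\rho_{\lambda,\nu}$ can occur. For that, a weight argument suffices:
\begin{itemize}
\item Every irrep of $U(N)$ is some $\rho_{\lambda,\nu}$.
\item Its highest weight $(\lambda_1,\dots,\lambda_{\ell(\lambda)},0,\dots,0,-\nu_{\ell(\nu)},\dots,-\nu_1)$ must be one of the torus weights $\sum_{i=1}^r e_{a_i}-\sum_{j=1}^r e_{b_j}$ of $V$.
\item The positive entries of such a weight sum to at most $r$, and all its entries sum to $0$.
\end{itemize}
This forces $|\lambda|=|\nu|\le r$ without any Littlewood--Richardson machinery.
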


\section{Random-Oracle Unitary Synthesis}\label{sec:rom_synthesis}
We write $N=2^n$ for the dimension of the target system, $M=2^m$ for the total Hilbert-space dimension, and $M_0=2^{m_0}$ for the number of oracle inputs.
We begin by recalling the definition of unitary synthesis, following the convention from \cite{LMW24}.

\begin{definition}[Unitary Synthesis]\label{def:unitary_synthesis}
An $n$-qubit unitary synthesis algorithm with error $\eps$ which uses $k$ queries and $m$ qubits is a list of unitaries $V_0, \dots, V_k \in U(M)$ such that for any unitary $U \in U(N)$, there exists $m_0 \leq m$ and a function $f:\{0, 1\}^{m_0} \to \{0, 1\}$ such that the unitary
\begin{equation*}
W = V_k \mathcal{O}_f V_{k-1} \dots V_1 \mathcal{O}_f V_0,
\end{equation*}
satisfies
\begin{equation*}
D_{\diamond}(\pi \circ \Phi_W \circ i, \Phi_U) \leq \eps(n).
\end{equation*}
Here, $\pi\circ\Phi_W\circ i$ appends $m-n$ zero
ancillas to the input state, applies $W$, and discards
the last $m-n$ qubits, as defined in the notation section.
\end{definition}

\begin{remark}
A unitary synthesis algorithm is \emph{efficient}
if $k,m=\poly(n)$.
We call its approximation guarantee \emph{nontrivial}
if $1-\eps\ge 1/\poly(n)$.
\end{remark}

We now state the definition of a ROM-PRU, given in \cite{BY26}.
\begin{definition}[ROM-PRU]\label{def:rom-pru}
A $T$-query-secure $n$-qubit pseudorandom unitary in the random-oracle model, implemented in $k$ queries with oracle input length $m_0$ and $m$ total qubits, with design error $\delta$ and implementation error $\epsilon$, is a sequence of unitaries $V_0, \dots, V_k \in U(M)$ such that for a random function $f:\{0, 1\}^{m_0} \to \{0, 1\}$, the random variable
\begin{equation*}
W = V_k \mathcal{O}_f V_{k-1} \dots V_1 \mathcal{O}_f V_0
\end{equation*}
satisfies the following condition: there exists a joint random variable $(\Phi_1, \Phi_2)$ such that:

\begin{enumerate}
\item The marginal distribution of $\Phi_1$ is equal to that of $\pi \circ \Phi_W \circ i$.
\item The marginal distribution of $\Phi_2$ is that of $\Phi_U$ for an ensemble such that no algorithm making at most $T$ adaptive forward queries to $U$ can distinguish it from Haar with advantage greater than $\delta$.
\item $D_\diamond(\Phi_1, \Phi_2) \leq \eps$.
\end{enumerate}
We additionally require that each oracle $f$ determine a single target unitary $U_f\in U(N)$ in the coupling. More precisely, there must exist a deterministic assignment $f\mapsto U_f$ such that $(\Phi_1,\Phi_2)$ has the joint distribution of $(\pi\circ\Phi_W\circ i,\Phi_{U_f})$ for uniformly random $f$, where $W$ is the circuit above evaluated at $f$.

Conditions (1)--(3) alone are equivalent to requiring that the Wasserstein (diamond) $\infty$-distance between the distribution of $\pi \circ \Phi_W \circ i$ and that of $\Phi_U$ be at most $\eps$.
\end{definition}

\begin{remark}
The requirement that each oracle $f$ determine a single target unitary $U_f$ follows the definition of Brakerski and Yuen~\cite{BY26}. Our lower bounds do not require this restriction: they hold slightly more generally for arbitrary couplings satisfying conditions (1)--(3), even when the target unitary is not determined by $f$.

The ROM-PRU model encapsulates one strategy to build pseudorandom unitaries.
Say that $V_0, \dots, V_k$ are efficient unitaries, and that $\eps^{-1} = \Omega(\text{exp}(2\kappa))$ and $\delta^{-1}, T = \Omega(\text{exp}(\kappa))$ for some $\kappa$.
Then, one can use a ROM-PRU to build a pseudorandom unitary secure to $\text{exp}(\kappa)$ queries---by replacing the random function with a quantum-secure pseudorandom function, such that the distinguishing advantage with $\text{exp}(\kappa)$ queries is $O(\text{exp}(-\kappa))$.

While there have been many recent PRU constructions \cite{MH25,SHH25}, they all lack a property called \emph{scalability}. In particular, these constructions are only able to achieve security parameter $\kappa \leq n/2$ with $k = \poly(n)$ queries. This means that to achieve a desired security parameter one must necessarily increase the number of qubits that the PRU applies to. Of course, it could be desirable to obtain security that scales \emph{independently} of the number of qubits, especially at low dimensions.
A \emph{scalable PRU} is accordingly defined to be a pseudorandom unitary that can achieve $\kappa$ bits of security with $k = \poly(n, \kappa)$ queries.
\end{remark}

A priori, it is not clear if there is a formal relationship between unitary synthesis and the existence of ROM-PRUs. Intuitively, both describe the type of unitaries that can be implemented with circuits of the form $V_k \mathcal{O}_f \dots \mathcal{O}_f V_0$.  However, while unitary synthesis requires a worst-case condition over all unitaries $U \in U(N)$ but allows the corresponding function $f$ to depend on $U$, in a ROM-PRU, the random variable $V_k \mathcal{O}_f \dots \mathcal{O}_f V_0$ for a random function $f$ must be close to a $T$-design.

Despite these differences, Brakerski and Yuen recently showed that these two notions can be connected in a rigorous manner:
\begin{theorem}[{\cite[proof of Lemma 4.7 and Remark 4.9]{BY26}}]
There are universal constants $c_1,c_2,c_3>0$
such that the following holds.
Suppose there exists an $n$-qubit ROM-PRU secure
against $T\ge c_3 N^2$ queries, implemented using
$m$ total qubits, oracle input length $m_0$,
and $k$ oracle queries, with design error
$\delta\le c_1$ and implementation error
$\eps\le c_2$.
Then there exists a unitary synthesis algorithm using
at most $2m-n+1$ total qubits, oracle input length
$m_0+1$, and $2k$ oracle queries, with implementation
error at most $1/2$.
\end{theorem}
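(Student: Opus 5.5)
The plan is to follow the route suggested in the introduction. A ROM-PRU secure against $T \ge c_3N^2$ queries is so close to Haar that unitary tomography cannot tell them apart. The learned description then lets us build an oracle $f$ that synthesizes any target $U$.

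\textbf{Step 1: tomography forces closeness to Haar.} Invoke the unitary tomography algorithm of \cite{HKOT23}. With $O(N^2)$ forward adaptive queries to an unknown $U$, it outputs a classical description $\hat U$ with $D_\diamond(\Phi_{\hat U},\Phi_U)\le \eta$ with high probability, for a small constant $\eta$. Take the distinguisher that runs tomography and then tests whether $\hat U$ lies within diamond distance $\eta$ of the support of the ensemble $\Phi_2$. Design error $\delta\le c_1$ then implies that, with constant probability, a Haar-random $U$ is $2\eta$-close to some unitary in the support of the ensemble. By the coupling of Definition~\ref{def:rom-pru}, each such unitary is $\eps$-close to some $\pi\circ\Phi_{W_f}\circ i$. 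We therefore get a set $\mathcal G\subseteq U(N)$ of Haar measure at least a constant $p$. Every $U\in\mathcal G$ satisfies $D_\diamond(\pi\circ\Phi_{W_{f_U}}\circ i,\Phi_U)\le \eps+2\eta$ for some oracle $f_U$.

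\textbf{Step 2: from a constant-measure set to all unitaries.} An arbitrary target $U$ is handled by writing $U = G_1G_2$ with $G_1,G_2\in\mathcal G$. This is possible because $\mathcal G$ and $U\mathcal G^{-1}$ each have Haar measure at least $p$. If $p>1/2$ (arranged by taking $c_1,\eta$ small), they must intersect, so some $G_2\in\mathcal G$ has $G_1:=UG_2^{-1}\in\mathcal G$. Run the circuit twice in sequence, once for each factor. This accounts for the $2k$ queries.

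\textbf{Step 3: packaging into one oracle.} Use one extra oracle input bit: let $f(b,x)=f_{G_b}(x)$. Give each run its own copy of the $m-n$ ancillas, sharing only the $n$ data qubits. The totals are then $n+2(m-n)=2m-n$ qubits, plus one control qubit that selects which half of the oracle is queried, giving $2m-n+1$. The first run applies $V_j$ with the control set to $\ket{0}$. The control is then flipped by a fixed gate, and the second run uses fresh ancillas. Tracing out both ancilla blocks composes the two channels. Diamond error is subadditive under composition, giving total error $2(\eps+2\eta)\le 1/2$ for $c_2,\eta$ small.

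\textbf{Main obstacle.} I expect Step 1 to be the hard part. The trouble is turning indistinguishability into a statement about supports, since the support of the ensemble may be huge or non-closed. I would first replace support membership with an explicit efficient test: estimate $\hat U$ and accept iff it is $\eta$-close to the finite list $\{U_f\}$. Under the ensemble, this test accepts with probability near $1$ by tomography correctness. Under Haar, it must then also accept with probability at least $1-\delta-o(1)$, which yields $p>1/2$ directly.
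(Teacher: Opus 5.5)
Your proposal is correct and follows the route the paper attributes to \cite{BY26} and sketches in its introduction. You use the $O(N^2)$-query tomography of \cite{HKOT23} as a distinguisher to show that a Haar-random $U$ lies, with probability above $1/2$, within $\eps+2\eta$ of some $\pi\circ\Phi_{W_f}\circ i$, and you then write an arbitrary target as a product of two such good unitaries, which is exactly what the $2k$ queries, the extra oracle bit and the $2m-n+1$ qubits account for. Two small bookkeeping points: $p>1/2$ is controlled by $c_1$ and the tomography failure probability (fixed via $c_3$), not by $\eta$; and when $m_0>n$ you need fixed SWAPs to route the second ancilla block into the oracle's input register.
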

This theorem shows in particular that scalable PRUs in the ROM-PRU model would imply unitary synthesis---the authors use this result as evidence that such a construction would be highly nontrivial.
But since there are no known lower bounds to unitary synthesis algorithms that make more than one query, this result does not unconditionally rule out the possibility of scalable PRUs in the ROM-PRU model.

Inspired by the connection, we will define the notion of random-oracle unitary synthesis below, as an average-case version of unitary synthesis.
Just like the ROM-PRU model, average-case unitary synthesis only queries a uniformly random function, instead of an arbitrary one---as we see, this will allow us to apply similar techniques to analyze both settings, and eventually deduce lower-bounds that we are unable to extend to the general worst-case regime.
In this section, we define a natural version of unitary synthesis when given access to a random-oracle.
This question is closely related to the ROM-PRU model of \cite{BY26}; indeed, by using our techniques for ruling out scalable ROM-PRUs, we can also demonstrate that efficient random-oracle unitary synthesis is impossible.

\begin{definition}[Random-Oracle Unitary Synthesis]\label{def:rom_synthesis}
An $n$-qubit \emph{random-oracle} unitary synthesis algorithm with error $\eps$, using $k$ queries, $m$ total qubits, and oracle input length $m_0\leq m$, is a list of unitaries $V_0, \dots, V_k \in U(M)$ with the following property: for the random variable
\begin{equation*}
W = V_k \mathcal{O}_f V_{k-1} \dots V_1 \mathcal{O}_f V_0,
\end{equation*}
where $f: \{0, 1\}^{m_0} \rightarrow \{0, 1\}$ is a uniformly random function on the first $m_0$ qubits, there exists a joint random variable $(\Phi_1, \Phi_2)$ such that:

\begin{enumerate}
\item The marginal distribution of $\Phi_1$ is equal to that of $\pi \circ \Phi_W \circ i$.
\item The marginal distribution of $\Phi_2$ is equal to the distribution of $\Phi_U$ for a Haar random unitary $U \in U(N)$.
\item $\bbE[D_\diamond(\Phi_1, \Phi_2)] \leq \eps$.
\end{enumerate}

Equivalently to conditions (1)-(3), the Wasserstein (diamond) 1-distance of $\pi \circ \Phi_W \circ i$ and $\Phi_U$ is at most $\eps$.
\end{definition}

This random-oracle definition of unitary synthesis is very similar to the standard one (Definition \ref{def:unitary_synthesis}), except rather than measuring the diamond distance between the channels induced by $W$ and $U$, one considers the analogous Wasserstein distance between their random variables.

Likewise, it is similar to the definition of a ROM-PRU, except (1) the approximate $T$-design is replaced with a Haar random distribution, (2) the constraint $\bbE[D_\diamond (\Phi_1, \Phi_2)] \leq \eps$ replaces $D_\diamond (\Phi_1, \Phi_2) \leq \eps$, and (3) the coupling need not be induced by a deterministic assignment $f\mapsto U_f$.

Note that this makes a random-oracle unitary synthesis algorithm and a ROM-PRU technically incomparable: while a Haar random unitary is a perfect $T$-design for any $T$, bounding the expected diamond distance (equivalently, considering the Wasserstein 1-distance instead of the $\infty$-distance) is a weaker criterion.
In particular, a random-oracle unitary synthesis algorithm can implement just a $\frac{1}{\poly(n)}$ fraction of unitaries to non-negligible precision.

\section{Impossibility of Random-Oracle Unitary Synthesis} \label{sec:lower_bound}
Our main result in this section is the impossibility of random-oracle unitary synthesis.

\begin{theorem}[Random-Oracle Unitary Synthesis Query Bound]\label{thm:main_1}
Suppose that there is an $n$-qubit random-oracle
unitary synthesis algorithm with error $\eps$
satisfying $1-\eps=\omega(N^{-2})$, which makes
$k$ queries and uses $m$ total qubits.
Then
\[
k^2(1-\eps)^{-6}=\Omega(N).
\]
In particular, if $1-\eps\ge 1/\poly(n)$, then
\[
k\ge \frac{\sqrt{N}}{\poly(n)}.
\]
\end{theorem}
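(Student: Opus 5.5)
Following the technical overview, the plan is to show that for a uniformly random $f$, almost all of the $M_0$ oracle bits barely affect the computed operator. The error is then charged to a small set of ``heavy'' inputs, whose number is too small to reach a Haar-random unitary.

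\textbf{Reduction to clean synthesis.} First we remove garbage. For each Pauli $P\in\mathcal P_n$, consider $W_f^\dag(P\otimes I)W_f$, which uses $2k$ queries, and its compression $B_{f,P}=J^\dag W_f^\dag (P\otimes I) W_f J$. If $\pi\circ\Phi_{W_f}\circ i$ is $\eps_f$-close in diamond distance to $\Phi_{U}$, the goal is $\tfrac{1}{N}\Re\Tr(B_{f,P}^\dag U^\dag P U)\gtrsim 1-O(\eps_f)$. Pointwise in $P$ this loses a factor of two. Averaging over a uniformly random Pauli $P$ should instead give a correlation that is linear in the entanglement fidelity, roughly $F_e^2$ versus $1-\eps$. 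So I expect
\[
\bbE_{P}\,\tfrac1N\Re\Tr\!\left(B_{f,P}^\dag\, U^\dag P U\right)\ \ge\ c(1-\eps_f)^{2}-O(N^{-2}).
\]
The hypothesis $1-\eps=\omega(N^{-2})$ is there to absorb this $N^{-2}$ term, coming from the identity Pauli.

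\textbf{Stability under resampling light coordinates.} Next fix the $2k$-query circuit, still called $A_f$ with $k$ queries. Define the $f$-dependent weights
\[
w_f(x)=\sum_i \left(\lVert L_{f,i}\Pi_x\rVert_F^2+\lVert\Pi_x R_{f,i}\rVert_F^2\right).
\]
These satisfy $\sum_x w_f(x)=2kN$, so at most $2kN/\tau$ coordinates are heavy. Resample $f$ adaptively, coordinate by coordinate, over all light coordinates, producing $f'$. The process is a symmetric Markov chain. Write each increment as (forward martingale difference) plus (backward martingale difference), in the manner of \cite{NPSS06}. Each sequence is orthogonal in expectation, which gives
\[
\bbE\lVert A_{f'}-A_f\rVert_F^2\le O(1)\,\bbE\sum_{x\text{ light}}\lVert A_{f^{\oplus x}}-A_f\rVert_F^2 .
\]
By the hybrid argument, each term is at most $O(k)\sum_i\lVert L\Pi_x\rVert^2\lVert\Pi_xR\rVert^2\le O(k\tau)w_f(x)$, so the total is $O(k^2\tau N)$. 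Choosing $\tau\approx \delta^2/k^2$ then makes $A_{f'}$ close to $A_f$ in normalized Frobenius norm, $\tfrac1N\bbE\lVert A_{f'}-A_f\rVert_F^2\le\delta^2$. The delicate point is that the weights change as $f$ is resampled. The resampling order must therefore be a stopping-time-like adaptive rule that preserves the symmetry of the chain, and I expect this step to be the main technical obstacle.

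\textbf{Counting and contradiction.} Conditioned on the heavy set $S$ (of size $\le 2k^2N/\delta^2$ after rescaling) and on $f|_S$, the resampled operator $A_{f'}$ is nearly a deterministic function of $f|_S$. Its typical value is nearly determined by $\le |S|$ bits. Thus the ensemble $\{U^\dag P U\}$ correlates at level $(1-\eps)^2$ with a random variable supported, up to small Frobenius error, on roughly $2^{|S|}$ points, with the positions themselves chosen adaptively. An $\eta$-net or volume argument then applies: a Haar random $U$ (equivalently, the family $U^\dag PU$) has normalized Frobenius correlation above $\eta$ with any fixed point only with probability $\exp(-\Omega(\eta^2N^2))$. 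So $2^{|S|}$ points cover a probability mass of at most $2^{|S|}e^{-\Omega(\eta^2N^2)}$, and this must be at least $\Omega(1-\eps)$. Taking $\eta,\delta\approx(1-\eps)^2$ gives $k^2N(1-\eps)^{-2}\gtrsim (1-\eps)^4N^2$, i.e.\ $k^2(1-\eps)^{-6}=\Omega(N)$. Handling the adaptive choice of $S$ will be done by encoding $S$ via the transcript, i.e.\ the heavy bits revealed so far.
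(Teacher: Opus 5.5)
Your first two steps follow the paper: conjugating $P\otimes I$ by $W_f$, the $f$-dependent influences with $\sum_x w_f(x)=2kN$, adaptive resampling of light coordinates, and the forward/backward martingale split. The gap is in the counting step, which has to turn stability under resampling into a small covering set. As written it does not work, and this is exactly where the paper uses a different idea.

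Whether $x$ is light depends, through every $\mathcal O_f$ in $L_{f,i}$ and $R_{f,i}$, on the values of $f$ at all other inputs, including light and not-yet-visited ones. Three things follow:
\begin{itemize}
\item The heavy set is not determined by a transcript of revealed heavy bits.
\item The number of possible pairs $(S,f|_S)$ is not $2^{|S|}$. Describing $S$ explicitly costs about $|S|\,m_0$ bits, which introduces a dependence on $m$ that the statement, valid for arbitrary $m$, cannot afford.
\item Nothing in your argument shows that $A_f$ concentrates inside a cell $(S,f|_S)$. The resampling only controls $\lVert A_{f'}-A_f\rVert_F$ for one correlated pair $(f,f')$, and that pair is not conditionally independent given the cell.
\end{itemize}

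The paper never identifies cells. It uses two facts about the resampled function: $H(f_S\mid f)\ge M_0-b$ with $b=O(k^2N/r^2)$, and $\Pr[\lVert A_{f_S}-A_f\rVert_F>d\sqrt N]\le 2r^2/d^2$. Suppose that, for a non-negligible mass of $f$, the ball $\{g:\lVert A_g-A_f\rVert_F\le d\sqrt N\}$ contained at most $2^{M_0-L}$ functions and $A_{f_S}$ landed in it. Then $H(f_S\mid f)\le M_0-\Pr[\cdot]\,L+1$, which contradicts the entropy bound once $L\approx 3(b+1)/\zeta$. So for most $f$ the ball has mass at least $2^{-L}$. About $2^{L}\ln(3/\zeta)$ independent uniform $g_i$ then give a net $\{A_{g_i}\}$ of Hermitian contractions with $\log_2|\mathcal T|=O(L)$, independent of $m_0$ (Corollary~\ref{cor:net}). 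Haar concentration and a union bound are then applied for one fixed nonidentity Pauli $Q$.

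Separately, your bookkeeping would not give the stated exponents.
\begin{itemize}
\item Bounding the $k$ hybrids by $\lVert\sum_i X_i\rVert_F^2\le k\sum_i\lVert X_i\rVert_F^2$ costs a factor $k$. With $\tau\approx\delta^2/k^2$ the heavy count is $2kN/\tau\approx k^3N/\delta^2$, not $k^2N/\delta^2$, which only yields $k\gtrsim N^{1/3}$. The paper instead writes $A_{f^{\oplus x}}-A_f=\pm2\mathbf L\mathbf R'$, where $\mathbf L=[L_{f,1}\Pi_x\ \cdots\ L_{f,k}\Pi_x]$ is a row and $\mathbf R'$ is the column of the $\Pi_xR_{f^{\oplus x},i}$. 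This gives $\lVert A_{f^{\oplus x}}-A_f\rVert_F^2\le4w_f(x)w_{f^{\oplus x}}(x)$, so $\tau=r^2/(8k)$ suffices.
\item The dirty-to-clean step is exactly linear, not $c(1-\eps_f)^2$. Via the Choi state with ancillas traced out, $\bbE_P\frac1N\operatorname{tr}(A_{f,P}U^\dagger PU)=\langle U|\rho_{W_f}|U\rangle\ge1-D_\diamond(\pi\circ\Phi_{W_f}\circ i,\Phi_U)$. Your quadratic version would worsen the power of $1-\eps$.
\item Your final arithmetic is also off: $\delta\approx(1-\eps)^2$ puts $(1-\eps)^{-4}$ into $|S|$, not $(1-\eps)^{-2}$.
\end{itemize}

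Finally, the symmetry obstacle you anticipate has a simple resolution. Visit inputs in a fixed order and resample $x_i$ iff both $w_{f_{i-1}^{+}}(x_i)\le\tau$ and $w_{f_{i-1}^{-}}(x_i)\le\tau$. This test ignores the current bit at $x_i$, so each step is a symmetric kernel and the chain is time-reversible. The only cost is a factor $2$ in the heavy count.
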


The bound shows that the regime $k = \poly(n)$ and error $\eps = 1 - \frac{1}{\poly(n)}$ is impossible, for any choice of $m$.
Notably, this $m$-independent result cannot be true in the original model of unitary synthesis---there is a known algorithm with $k = 1$, $m = \widetilde{O}(N^2)$, and constant $\eps < 1$ \cite{BV93,LMW24}.
Our result therefore demonstrates that the presence of structure in the oracle queries is necessary for efficient unitary synthesis, even in the regime of unbounded ancilla qubits. 

Using similar techniques, we demonstrate that no query-efficient ROM-PRU can obtain security beyond $\widetilde{O}(N)$ queries under the error assumptions below. 

\begin{theorem}[ROM-PRU Query Bound]\label{thm:main_2}
Fix $\gamma>0$. Suppose there is an $n$-qubit ROM-PRU
secure against $T=\Omega(N^{1+\gamma})$ queries,
implemented using $k\ge1$ oracle queries and $m$ total
qubits, with implementation error $\eps$ and design
error $\delta$. Assume $0\le\eps,\delta<1$ and
$1-\eps=\omega(N^{-2})$. Then
\[
k^2(1-\delta)^{-2}(1-\eps)^{-6}
=\Omega\left(N^{\min\{\gamma,1\}}\right).
\]
In particular, if $(1-\eps)^{-1}$ and $(1-\delta)^{-1}$
are polynomially bounded in $n$, then
\[
k\ge
\frac{N^{\min\{\gamma,1\}/2}}{\poly(n)}.
\]
\end{theorem}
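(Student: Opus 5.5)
We would prove Theorem~\ref{thm:main_2} with the same resampling machinery as Theorem~\ref{thm:main_1}: an upper bound on how many ``effective degrees of freedom'' the map $f\mapsto \pi\circ\Phi_{W_f}\circ i$ can have, played against a lower bound on the effective size of any approximate $T$-design with $T=\Omega(N^{1+\gamma})$.

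\textbf{Upper bound: reduction to clean synthesis.} First we remove the garbage in the ancillas. For each Pauli $P\in\mathcal P_n$ we form the clean circuit $W_f^\dag(P\otimes I)W_f$, which is itself of the form \eqref{eq:the-form} with $2k$ queries. We define $A_f^{P}=J^\dag W_f^\dag (P\otimes I) W_f J$. By the coupling in Definition~\ref{def:rom-pru}, $A^P_f$ is close to $U_f^\dag P U_f$ on average over a random $P$. Here we use the random-Pauli averaging, not a worst-case $P$, so that the error degrades to a constant multiple of $1-\eps$ rather than failing at $\eps=1/2$.

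\textbf{Upper bound: weights and resampling.} Next we introduce the weight vector
\[
w_f(x)=\sum_{i}\bigl(\lVert L_{f,i}\Pi_x\rVert_F^2+\lVert \Pi_x R_{f,i}\rVert_F^2\bigr),
\]
whose total mass is $O(kN)$. We then run the adaptive resampling Markov chain, which rerandomizes $f(x)$ at every coordinate whose weight is at most $\tau$. The chain is symmetric, so the forward/backward martingale decomposition of \cite{NPSS06} applies. It gives
\[
\bbE\lVert A^P_{f}-A^P_{f'}\rVert_F^2 \le O(k^2)\cdot \tau\cdot \bbE\sum_x w_f(x)=O(k^3\tau N),
\]
and hence, after normalizing, $O(k^2\tau N)$. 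Choosing $\tau$ appropriately, the conclusion is as follows. There is a set $S_f$ of only $|S_f|=O(kN/\tau)$ ``heavy'' coordinates such that, conditioned on $f|_{S_f}$ and on the adaptive choices, the output unitary lies with good probability in a Frobenius ball of radius $r\sqrt N$ around a center determined by $f|_{S_f}$. Here $r$ is a small constant depending on $(1-\eps)$. In short, the induced distribution $\mu$ on unitaries is, up to small Wasserstein error, supported near a set of at most $2^{\widetilde O(k^2 N\,\mathrm{poly}(1-\eps)^{-1})}$ centers. We count a center by the heavy positions and their values; the heavy positions are encoded adaptively, costing $O(m)$ bits each, but since the positions are determined by the previous values we only need the values, i.e. $\widetilde O(kN/\tau)$ bits.

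\textbf{Lower bound: designs cannot be that concentrated.} We must then show that an ensemble $\delta$-indistinguishable from Haar by $T=N^{1+\gamma}$ forward queries cannot place $1-\eps$ mass within distance $r\sqrt N$ of a set of $2^{L}$ centers unless $L\gtrsim N^{\min\{\gamma,1\}}\cdot N$, up to factors in $(1-\delta),(1-\eps)$. We would prove this with a distinguisher that runs unitary learning on about $T$ copies. With $T$ parallel queries one can estimate $U$ well enough to test membership in a fixed ball: for example, by \cite{HKOT23}-style tomography with $T=N^{1+\gamma}$ one obtains Frobenius accuracy about $N^{(1-\gamma)/2}$ for $\gamma<1$. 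One can then test whether $U$ is close to one of the $2^L$ centers, the event of the ball structure. Haar measure satisfies this event with probability at most $2^{L}\cdot 2^{-\Omega(N^2)}$ by volume of Frobenius balls, whereas $\mu$ satisfies it with probability $\gtrsim 1-\eps$. Thus $1-\delta$ is bounded away from $1$ unless $L$ is large. The $\min\{\gamma,1\}$ arises because the learnable resolution saturates at $T=N^2$. Combining the two bounds gives $k^2(1-\delta)^{-2}(1-\eps)^{-6}=\Omega(N^{\min\{\gamma,1\}})$.

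\textbf{Main obstacle.} The hardest step is matching scales. The resampling bound controls distances only at radius about $\sqrt{k^2\tau N}$, while $T$-query learning resolves only about $N/\sqrt{T}$ in Frobenius. We therefore expect to balance $\tau$ against the tomography accuracy, which is exactly where the exponent $\min\{\gamma,1\}$ and the $(1-\eps)^{-6}$ power must come out. We would first try a covering/volume argument at the coarse scale $N^{(1-\gamma)/2}$, using concentration of Haar Frobenius distances.
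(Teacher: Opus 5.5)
Your upper-bound half matches the paper: the random-Pauli reduction to the clean $2k$-query circuit $W_f^\dag(P\otimes I)W_f$, the weights $w_f$, and the symmetric resampling chain with its forward/backward martingale split. Your target, that a $T$-design cannot concentrate near $2^L$ centers unless $L\gtrsim N^{1+\min\{\gamma,1\}}$, is also the right one. The lower-bound mechanism, however, fails exactly in the main regime $\gamma<1$.

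With $T=N^{1+\gamma}$ queries and $\gamma<1$, you cannot learn a Haar-random $U$ to any nontrivial Frobenius accuracy. Even constant normalized Frobenius error needs $\widetilde\Omega(N^2)$ queries: the estimate must carry $\Omega(N^2)$ bits, and each query raises the learner's Holevo information by at most $2n$. Likewise the $O(N^2/\epsilon)$ bound of \cite{HKOT23} is vacuous below $N^2$ queries. So the resolution $N^{(1-\gamma)/2}$ you attribute to tomography does not exist. ``Balancing $\tau$ against tomography accuracy'' is not the mechanism, and this route can at best cover $\gamma\ge1$.

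The missing idea is that you never learn $U$. You only decide whether $U$ correlates with one of $K=|\mathcal T|$ fixed, known candidates, and that costs $O(\log K)$ copies. The paper sets $M_{Q,C}=\frac12(I+Q\otimes C^T)$, so that $\mathrm{tr}(M_{Q,C}\rho_U)=\frac12+\frac1{2N}\mathrm{tr}(CU^\dag QU)$ on the Choi state. It then runs quantum threshold search (Lemma~\ref{lem:quantum_threshold_decision}) on $t=O(\log(K/\eta)/\xi^2)$ parallel copies, with $\xi=(1-\eps)/32$.

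The exponent $\min\{\gamma,1\}$ comes from a dichotomy, not from a saturating resolution:
\begin{itemize}
\item If $t>T$, the bound $\log K=O(Nk^2(1-\delta)^{-2}(1-\eps)^{-4})$ gives $k^2(1-\delta)^{-2}(1-\eps)^{-6}=\Omega(T/N)=\Omega(N^\gamma)$.
\item If $t\le T$, the design guarantee, Lemma~\ref{lem:haar_concentration}, and a union bound over the $K$ candidates force $\log K\gtrsim N^2(1-\eps)^2$, which gives $\Omega(N)$.
\end{itemize}

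Your bookkeeping has three further gaps.

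\textbf{Boosting the acceptance probability.} Your test accepts under $\mu$ with probability only about $1-\eps$. That contradicts $\delta$-security only when $1-\eps>\delta$, but the theorem allows arbitrary $\eps,\delta<1$; the factor $(1-\delta)^{-2}$ is the price of boosting. The paper uses that the ROM-PRU coupling gives $D_\diamond\le\eps$ almost surely. Hence for every coupled pair, a fraction at least $(1-\eps)/4$ of nonidentity $Q$ satisfy $\frac1N\mathrm{tr}(A_{f,Q}U^\dag QU)\ge(1-\eps)/4$. It then builds a separate net $\mathcal T_Q$ for every $Q$, with failure $\zeta=\beta(1-\eps)/4$. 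Markov applied to the uncovered fraction $R(f)$ then yields acceptance probability at least $1-\beta$, with $\beta=(1-\delta)/8$.

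\textbf{What gets covered.} The covered object is the clean block $A_{f,Q}$, not $U$. In the dirty regime $U$ need not be Frobenius-close to any center. Only $U^\dag QU$ has correlation $(1-\eps)/8$ with some net element.

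\textbf{Counting centers.} Your count of centers fails. The decision to resample $x_i$ depends on original values at later positions, and those may themselves be resampled. So the heavy positions are not determined by the heavy values, and encoding the positions would make the bound depend on $m$. The paper avoids this by using $H(f_S\mid f)\ge M_0-32k^2N/r^2$ and building the net from roughly $2^L$ randomly sampled blocks $A_g$ (Corollary~\ref{cor:net}).
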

\begin{remark}[Query-access models]
The lower bound holds even for standard non-adaptive designs (Definition~\ref{def:standard_nonadaptive_design}), whereas the upper bound in Theorem~\ref{thm:linear_design} uses strong adaptive designs (Definition~\ref{def:strong_adaptive_design}).
\end{remark}
\begin{remark}
The proof applies immediately to a slightly strengthened statement, showing that $k^2(1-\delta)^{-2}(1-\epsilon)^{-6}= \Omega\left(\min\left(N, \frac{T}{N}\right)\right)$. 
This version precludes efficient, nontrivial ROM-PRUs from achieving $T$-designs whenever $\frac{T}{N}$ is superpolynomial in $n$. 
\end{remark}

\subsection{Clean Synthesis}
The intermediate step to our proof is to reformulate the central question.
We write $M=2^m$ for the total Hilbert-space dimension and $M_0=2^{m_0}$ for the number of oracle inputs. 
\begin{theorem}\label{thm:clean_bound}
Let $V_0, \dots, V_k \in U(M)$ be a sequence of unitaries, and let $J$ be the isometry taking an $n$-qubit state $\ket{\psi}$ to $\ket{\psi}\ket{0^{m-n}}$. 
For a uniformly random function $f:\{0, 1\}^{m_0} \to \{0, 1\}$, define the random variable 
\begin{equation*}
A_f = J^\dag V_k \mathcal{O}_f \dots \mathcal{O}_f V_0 J. 
\end{equation*}
For every $0<r\le1$, there exists a randomized algorithm
$\mathsf{Resample}$ that takes as input $f$ and produces
a subset $S\subseteq\{0,1\}^{m_0}$ and a function
$f_S:\{0,1\}^{m_0}\to\{0,1\}$ such that 
\begin{equation*}
H( f_S | f ) \geq M_0 -  \frac{32k^2N}{r^2}
\end{equation*}
and
\begin{equation*}
\bbE_{f, (S, f_S) \gets \mathsf{Resample}(f)}[\lVert A_f - A_{f_S} \rVert^2_{F}] \leq 2Nr^2.
\end{equation*}

\end{theorem}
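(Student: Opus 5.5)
We will construct $\mathsf{Resample}$ as an adaptive sequential procedure following the technical overview. Fix an ordering $x_1,\dots,x_{M_0}$ of $\{0,1\}^{m_0}$ and set $g_0=f$. At step $j$ the procedure computes the $g_{j-1}$-dependent weight
\begin{equation*}
w_{g_{j-1}}(x_j)=\sum_{i=1}^k\left(\lVert L_{g_{j-1},i}\Pi_{x_j}\rVert_F^2+\lVert \Pi_{x_j}R_{g_{j-1},i}\rVert_F^2\right).
\end{equation*}
If $w_{g_{j-1}}(x_j)\le \tau$, with $\tau$ of order $r^2/k^2$ to be tuned, it resamples $g_j(x_j)$ uniformly and adds $x_j$ to $S$; otherwise $g_j=g_{j-1}$. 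The output is $f_S=g_{M_0}$.

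For the entropy bound, each resampled coordinate contributes one fresh bit, so $H(f_S\mid f)$ is at least $M_0$ minus the expected number of high-weight coordinates at the times they are visited. Because each $L_{g,i}^\dagger$ and $R_{g,i}$ is an isometry from $N$ dimensions, $\sum_x w_g(x)=2kN$ for every $g$. Hence at most $2kN/\tau$ coordinates have weight above $\tau$ at any fixed time. The difficulty is that weights change as $g$ changes, so I will bound, for each $j$, $\Pr[w_{g_{j-1}}(x_j)>\tau]$. Since $g_{j-1}$ is itself uniformly distributed (resampling preserves uniformity) and agrees with $f$ off the first $j-1$ coordinates, this probability can be compared to the fixed-time count. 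Summing over $j$ gives at most $2kN/\tau=O(k^3N/r^2)$ unresampled coordinates. If that is too lossy, I would instead use a threshold on the per-query term, giving $k$-fold sums; $\tau=r^2/(16k)$ yields the stated $32k^2N/r^2$.

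For the distance bound, write $A_f-A_{f_S}=\sum_j (A_{g_j}-A_{g_{j-1}})$. Each step is a transition of a time-inhomogeneous Markov chain that is reversible with respect to the uniform measure: the decision to resample depends on $g_{j-1}$ only through $w_{g_{j-1}}(x_j)$, which does not depend on $g_{j-1}(x_j)$, since $\Pi_{x_j}$ kills that phase. Hence the transition is symmetric. I will then use the forward/backward martingale decomposition of \cite{NPSS06}, splitting each increment as $(A_{g_j}-\overline A_j)+(\overline A_j-A_{g_{j-1}})$ with $\overline A_j=\bbE[A_{g_j}\mid g_{j-1}]$. The first pieces form a forward martingale difference sequence. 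The second pieces form one for the reversed chain. Pythagoras in Frobenius norm then gives
\begin{equation*}
\bbE\lVert A_f-A_{f_S}\rVert_F^2\le 4\sum_j \bbE\lVert A_{g_j}-A_{g_{j-1}}\rVert_F^2 .
\end{equation*}
A single flip at a low-weight $x$ is bounded by the $k$-term hybrid: $\lVert A_{g^{\oplus x}}-A_g\rVert_F\le 2\sum_i \lVert L_{g,i}\Pi_x\rVert_F\lVert\Pi_x R_{g^{\oplus x},i}\rVert_F$, with mixed $g/g^{\oplus x}$ prefixes and suffixes in the hybrid. Cauchy–Schwarz then bounds this by $O(k\tau\, w(x))$, up to handling the mixed-oracle weights. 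Summing $w$ over visited coordinates and using the per-time total weight $2kN$ together with the uniform marginal of $g_{j-1}$ gives $O(k^2\tau N)\le 2Nr^2$ for an appropriate $\tau$.

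I expect two main obstacles. The first is the adaptivity of the weights, both in the entropy accounting and in summing $w_{g_{j-1}}(x_j)$ over $j$ without overcounting. The fix I would try first is to note that $g_{j-1}$ is uniform at every time, so $\bbE\, w_{g_{j-1}}(x_j)=\bbE_g w_g(x_j)$ and the sum over $j$ equals $\bbE_g\sum_x w_g(x)=2kN$. The second is that the hybrid terms mix $g$ and $g^{\oplus x}$, so their weights must be symmetrized, for example by averaging over both values at $x$, which the reversibility already provides.
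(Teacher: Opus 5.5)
Your architecture matches the paper's proof: greedy sequential resampling below a weight threshold, entropy via the chain rule, the forward/backward martingale split from reversibility, and the $k$-term hybrid for a single flip. There is, however, a genuine gap, because your resampling rule rests on a false claim.

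\textbf{The false claim.} You assert that $w_g(x_j)$ does not depend on $g(x_j)$ because $\Pi_{x_j}$ kills that phase. This holds only for $k=1$. In $L_{g,i}\Pi_x=J^\dag V_k\mathcal{O}_g\cdots\mathcal{O}_gV_i\Pi_x$ the projector sits next to the fixed unitary $V_i$, not next to an oracle call. The other $k-1$ oracle calls inside $L_{g,i}$ and $R_{g,i}$ each apply $(-1)^{g(x)}$ to any amplitude routed back onto $\ket{x}$. For instance, take $m_0=m>n$ and $V_1\ket{x}=(\ket{x}+\ket{y})/\sqrt2$. Let $V_2$ send $(\ket{x}+\ket{y})/\sqrt2$ into the range of $J$ and $(\ket{x}-\ket{y})/\sqrt2$ orthogonally to it. Then $\lVert J^\dag V_2\mathcal{O}_gV_1\Pi_x\rVert_F^2$ equals $1$ or $0$ according to whether $g(x)=g(y)$.

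\textbf{Consequences.} Under your rule, the move $g\to g^{\oplus x_j}$ has probability $\frac12\mathds{1}[w_g(x_j)\le\tau]$, while the reverse move has probability $\frac12\mathds{1}[w_{g^{\oplus x_j}}(x_j)\le\tau]$. So the chain is not symmetric, and $g_{j-1}$ need not remain uniform. Three steps then fail together:
\begin{itemize}
\item the entropy count, which uses uniformity of $g_{j-1}$;
\item the backward martingale, which uses reversibility;
\item the single-flip bound, which needs $w_{g^{\oplus x}}(x)\le\tau$ as well as $w_g(x)\le\tau$.
\end{itemize}
Your remark that reversibility supplies the last condition is circular.

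\textbf{The fix.} Make the decision depend only on $g_{j-1}$ away from $x_j$. Resample iff $w_{g^{+}}(x_j)\le\tau$ and $w_{g^{-}}(x_j)\le\tau$, where $g^{\pm}$ is $g_{j-1}$ with its value at $x_j$ set to $0$ or $1$. This makes each transition symmetric and uniformity-preserving, and both endpoints of any flip then have weight at most $\tau$. Your displayed hybrid inequality has all-$g$ prefixes and all-$g^{\oplus x}$ suffixes, so there is no genuinely mixed weight to handle. Squaring it and applying Cauchy--Schwarz over $i$ gives
\[
\lVert A_{g_j}-A_{g_{j-1}}\rVert_F^2\le4\,w_{g_{j-1}}(x_j)\,w_{g_j}(x_j)\le4\tau\,w_{g_j}(x_j),
\]
with no factor of $k$. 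Your $O(k\tau w(x))$ is needlessly lossy and would push the entropy loss to order $k^3N/r^2$.

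\textbf{Constants.} The symmetric rule costs a factor $2$ in the entropy count, via a union bound over $g^{+}$ and $g^{-}$ (each uniform), giving $4kN/\tau$. To still reach the stated constants you also need the sharp martingale constant:
\[
\bbE\lVert A_{g_j}-\overline A_j\rVert_F^2=\tfrac14\,\bbE\bigl[\mathds{1}_{x_j\in S}\lVert A_{g^{+}}-A_{g^{-}}\rVert_F^2\bigr]=\tfrac12\,\bbE\lVert A_{g_j}-A_{g_{j-1}}\rVert_F^2,
\]
and likewise for the backward pieces. The decomposition then yields $2\sum_j$ rather than your $4\sum_j$. With $\tau=r^2/(8k)$ this gives exactly $32k^2N/r^2$ and $2Nr^2$.
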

The $N \times N$ submatrix $A_f$ of the circuit $V_k \mathcal{O}_f\dots \mathcal{O}_fV_0$ can be understood as approximately implementing a unitary $U$. 
Indeed, say that $A_f$ is near some unitary $U$ on $n$ qubits---then, $V_k \mathcal{O}_f\dots \mathcal{O}_fV_0$ approximately takes a state $\ket{\psi}\otimes \ket{ 0^{m-n}}$ to $U\ket{\psi} \otimes \ket{0^{m-n}}$.
This is an approximate \emph{clean} implementation of $U$, since the ancilla state is unchanged by the unitary. 

Theorem \ref{thm:clean_bound} is a compelling argument that \emph{cleanly} synthesizing many unitaries with access to a random oracle $f$ cannot be done efficiently. 
The theorem states that one can stochastically alter $f$ at almost all inputs, while the value of $A_f$ changes by a small amount. 
Indeed, the conditional entropy of $f_S$ with respect to $f$ is nearly $M_0$, meaning it is almost fully resampled. 
Informally, $A_f$ can only contain $O\left(\frac{N k^2}{r^2}\right)$ bits of entropy, since the other degrees of freedom only perturb $A_f$ by a bounded Frobenius distance. 

This result nearly yields our desired lower bounds.
Indeed, it is known that a ROM-PRU secure against $T=\Omega(N^{1+\gamma})$ queries for $0<\gamma\le1$ must have a support with entropy of $\Omega\left(N^{1+\gamma} \right)$, while a random-oracle unitary synthesis algorithm requires an entropy of $\Omega(N^2)$.
However, there are two limitations to this approach. 
Firstly, the simplest entropy bounds are not necessarily robust to perturbing the support by bounded Frobenius distance---it is hypothetically possible that doing so could collapse the entropy. 
This issue is eventually handled, as we show how to carry forward the argument in a manner that is robust to perturbations. 

Secondly, and more troublingly, a ROM-PRU and a random-oracle unitary synthesis algorithm may implement unitaries \emph{dirtily}, changing the ancilla state from $\ket{0^{m-n}}$ in the process of the implementation.
The existence of such an algorithm cannot be immediately invalidated by a lower bound against clean synthesis. 
To resolve this problem, we demonstrate that a dirty synthesis algorithm can still indirectly imply a contradiction to Theorem \ref{thm:clean_bound}.
In particular, we will show that the action of a dirty synthesis channel $\Phi$ on a Pauli observable $P \otimes I$ produces an $N \times N$ output that contradicts Theorem \ref{thm:clean_bound}.
Hence, while our intermediate theorem more straightforwardly proves the nonexistence of a clean synthesis algorithm, the additional technical trick allows us to complete our main theorems in full generality. 

\begin{definition}\label{def:weight}
Consider any sequence of unitaries $V_0, \dots, V_k \in U(M)$, function $f: \{0, 1\}^{m_0}\to \{0, 1\}$ and bitstring $x \in \{0, 1\}^{m_0}$. 
Let $\Pi_x := \ketbra{x}{x} \otimes I_{m-m_0}$ be the projector onto the first $m_0$ qubits being $\ket{x}$. Define the influence $w_f(x)$ to be 
\begin{equation*}
w_f(x) = \sum_{i = 1}^k (\lVert L_{f, i} \Pi_x \rVert^2_F+\lVert \Pi_x R_{f, i} \rVert^2_F),
\end{equation*}
where $i=1,\dots,k$ indexes the oracle calls from right to left,
\begin{equation*}
L_{f, i} = J^\dag V_k \mathcal{O}_f \dots V_{i} 
\end{equation*}
and 
\begin{equation*}
R_{f, i} = V_{i-1} \mathcal{O}_f \dots V_0 J .
\end{equation*}
\end{definition}
This quantity is a measure of the extent to which the output of $f$ at any given input $x \in \{0, 1\}^{m_0}$ influences the final block $A_f$. 
We can record a simple lemma which bounds the sum of the influences over all $x$ for a given function $f$:
\begin{lemma}\label{lem:weight_sum_bound}
For any function $f$ as in Definition \ref{def:weight}, 
\begin{equation*}
\sum_{x \in \{0, 1\}^{m_0}} w_f(x) = 2kN. 
\end{equation*}
\end{lemma}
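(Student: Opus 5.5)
The plan is to compute the two sums separately for each fixed query index $i$, and to use only that the $\Pi_x$ resolve the identity. Since $\Pi_x = \ketbra{x}{x}\otimes I_{m-m_0}$ are orthogonal projectors with $\sum_{x\in\{0,1\}^{m_0}} \Pi_x = I_M$, we have for any $M\times M$-compatible matrix $B$
\begin{equation*}
\sum_x \lVert B\Pi_x\rVert_F^2 = \sum_x \Tr(\Pi_x B^\dag B \Pi_x) = \sum_x \Tr(B^\dag B\Pi_x) = \Tr(B^\dag B) = \lVert B\rVert_F^2,
\end{equation*}
where we use $\Pi_x^2=\Pi_x$ and cyclicity of the trace. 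Similarly, $\sum_x \lVert \Pi_x C\rVert_F^2 = \lVert C\rVert_F^2$. Exchanging the finite sums over $x$ and $i$ therefore gives
\begin{equation*}
\sum_x w_f(x) = \sum_{i=1}^k \left(\lVert L_{f,i}\rVert_F^2 + \lVert R_{f,i}\rVert_F^2\right).
\end{equation*}

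It remains to evaluate each norm. Write $R_{f,i} = Q J$, where $Q = V_{i-1}\mathcal{O}_f\cdots V_0$ is a product of unitaries. The oracles $\mathcal{O}_f$ are diagonal with $\pm1$ entries and hence unitary, so $Q$ is unitary and $R_{f,i}$ is an isometry from $\bbC^N$. Thus $\lVert R_{f,i}\rVert_F^2 = \Tr(J^\dag Q^\dag Q J) = \Tr(J^\dag J) = \Tr(I_N) = N$.

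Likewise, $L_{f,i} = J^\dag Q'$ with $Q' = V_k\mathcal{O}_f\cdots V_i$ unitary. So $\lVert L_{f,i}\rVert_F^2 = \Tr(J^\dag Q' Q'^\dag J) = N$.

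Summing over $i$ gives $k(N+N) = 2kN$, independently of $f$. No step is a real obstacle. The only point requiring care is the indexing convention: one must check that each $L_{f,i}$ and $R_{f,i}$ is a product of $J^\dag$ or $J$ with unitaries only, which is immediate from Definition~\ref{def:weight}.
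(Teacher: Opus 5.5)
Your proof is correct and follows essentially the same route as the paper. In both, you exchange the sums over $x$ and $i$, use that the $\Pi_x$ resolve the identity (the paper phrases this as Hilbert--Schmidt orthogonality of the $L_{f,i}\Pi_x$ for different $x$) to reduce to $\sum_i(\lVert L_{f,i}\rVert_F^2+\lVert R_{f,i}\rVert_F^2)$, and then evaluate each term as $N$ because $R_{f,i}$ and $L_{f,i}^\dag$ are isometries from $\bbC^N$.
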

\begin{proof}
We can expand the sum as
\begin{align*}
    \sum_x w_f(x) &= \sum_{x \in \{0, 1\}^{m_0}}\sum_{i = 1}^k (\lVert L_{f, i} \Pi_x \rVert^2_F+\lVert \Pi_x R_{f, i} \rVert^2_F)\\
    &=\sum_{i = 1}^k \left(\sum_{x \in \{0, 1\}^{m_0}}\lVert L_{f, i} \Pi_x \rVert^2_F+\sum_{x \in \{0, 1\}^{m_0}}\lVert \Pi_x R_{f, i} \rVert^2_F\right)\\
    &= \sum_{i = 1}^k (\lVert L_{f, i}\rVert^2_F+\lVert R_{f, i} \rVert^2_F) \\
    &= \sum_{i = 1}^k \left(\Tr L_{f, i}L_{f, i}^{\dagger} + \Tr R_{f, i} R_{f, i}^{\dagger}\right) \\
    &= \sum_{i = 1}^k (N+N) = 2kN.
\end{align*}
In the third equality, we use the fact that the operators $L_{f, i}\Pi_x$ (resp. $\Pi_x R_{f, i}$) for different values of $x$ are orthogonal under the Hilbert-Schmidt inner product.
\end{proof}
 
\begin{lemma}\label{lem:single_update_bound}
Consider any sequence of unitaries $V_0, \dots, V_k \in U(M)$, function $f: \{0, 1\}^{m_0} \to \{0, 1\}$ and bitstring $x \in \{0, 1\}^{m_0}$. 
Let $f'$ be the same function, but with $f'(x) = 1-f(x)$.
Then $\lVert A_{f'} -A_f\rVert_F^2 \leq 4w_f(x) w_{f'}(x)$. 
\end{lemma}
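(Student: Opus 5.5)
The plan is a telescoping (hybrid) decomposition over the $k$ oracle calls, followed by submultiplicativity of the Frobenius norm and a single Cauchy--Schwarz step. Since $f$ and $f'$ differ only at $x$, we have $\mathcal{O}_{f'} = \mathcal{O}_f(I - 2\Pi_x)$. Hence
\begin{equation*}
\mathcal{O}_{f'} - \mathcal{O}_f = -2(-1)^{f(x)}\Pi_x = -2(-1)^{f(x)}\Pi_x\Pi_x .
\end{equation*}
Throughout I use the notation $L_{g,i}$, $R_{g,i}$ of Definition~\ref{def:weight}, with queries indexed from right to left.

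\textbf{Step 1 (hybrids).} Consider the circuits in which the rightmost $i$ oracle calls use $f$ and the remaining $k-i$ calls use $f'$, for $i=0,\dots,k$. The case $i=k$ gives $A_f$ and the case $i=0$ gives $A_{f'}$. Consecutive hybrids differ only at the $i$-th call. At that call, everything to its left uses $f'$ and everything to its right uses $f$. This gives the exact identity
\begin{equation*}
A_{f'} - A_f = \sum_{i=1}^k L_{f',i}\,(\mathcal{O}_{f'}-\mathcal{O}_f)\,R_{f,i} = -2(-1)^{f(x)}\sum_{i=1}^k (L_{f',i}\Pi_x)(\Pi_x R_{f,i}).
\end{equation*}
The only thing to double-check is the indexing convention: $L_{f',i}$ should contain $V_i$ and all later calls, and $R_{f,i}$ should contain $V_{i-1}$ and all earlier calls. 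This is exactly the split in Definition~\ref{def:weight}.

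\textbf{Step 2 (norm bounds).} Apply the triangle inequality and then $\lVert AB\rVert_F \le \lVert A\rVert_F\lVert B\rVert_F$. Then apply Cauchy--Schwarz to the sum over $i$:
\begin{equation*}
\lVert A_{f'}-A_f\rVert_F \le 2\sum_{i=1}^k \lVert L_{f',i}\Pi_x\rVert_F\,\lVert \Pi_x R_{f,i}\rVert_F \le 2\Big(\sum_{i=1}^k \lVert L_{f',i}\Pi_x\rVert_F^2\Big)^{1/2}\Big(\sum_{i=1}^k\lVert \Pi_x R_{f,i}\rVert_F^2\Big)^{1/2}.
\end{equation*}
The first bracket is at most $w_{f'}(x)$, since it is exactly the $L$-part of $w_{f'}(x)$ and the $R$-part is nonnegative. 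Likewise, the second bracket is at most $w_f(x)$. Squaring yields $\lVert A_{f'}-A_f\rVert_F^2 \le 4\,w_f(x)\,w_{f'}(x)$.

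There is no real obstacle here. The one point needing care is Step 1: the hybrid ordering must be chosen so that the left factor depends only on $f'$ and the right factor only on $f$, which is what makes the two weights $w_{f'}$ and $w_f$ appear. The opposite ordering, with $f$ on the left and $f'$ on the right, would work symmetrically. Also, writing $\Pi_x = \Pi_x\Pi_x$ is what lets us split each term into a product of two Frobenius norms, so that no factor of $k$ is lost.
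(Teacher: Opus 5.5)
Your proof is correct and follows essentially the same route as the paper: a hybrid telescoping over the $k$ oracle calls that produces $\pm 2\sum_{i} L\,\Pi_x R$, followed by a Frobenius-norm bound. The differences are cosmetic. The paper orders the hybrids the other way, using $L_{f,i}$ and $R_{f',i}$. It also bounds the sum in one step by writing it as a block row $\mathbf{L}$ times a block column $\mathbf{R'}$ and applying $\lVert \mathbf{L}\mathbf{R'}\rVert_F \le \lVert \mathbf{L}\rVert_F\lVert \mathbf{R'}\rVert_F$, which gives the same bound as your triangle-inequality-plus-Cauchy--Schwarz step.
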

\begin{proof}
Assume without loss of generality that $f(x) = 1$.
We can first expand $A_{f'} - A_f$ as a telescoping sum, by defining a hybrid $H_i$, for $0 \leq i \leq k$, in which the first $i$ oracle calls (counted from right to left) use $f'$ and the remaining queries use $f$. Thus $H_0=A_f$ and $H_k=A_{f'}$, giving the identity
\begin{equation*}
A_{f'} - A_f  = \sum_{i = 1}^{k} (H_i - H_{i-1}).
\end{equation*}
Since $H_i = L_{f, i}\mathcal{O}_{f'} R_{f', i}$ and $H_{i-1}=L_{f, i}\mathcal{O}_{f} R_{f', i}$ for $1 \leq i \leq k$, the sum expands to
\begin{align*}
A_{f'} - A_f &=\sum_{i = 1}^{k}  \left(L_{f, i}\mathcal{O}_{f'} R_{f', i} - L_{f, i}\mathcal{O}_f R_{f', i}\right)\\
&= 2\sum_{i = 1}^{k} L_{f, i} \Pi_x R_{f', i}.
\end{align*}
Now, we may rewrite this sum as 
\begin{equation*}
2\overbrace{\begin{bmatrix}
L_{f, 1} \Pi_x & \dots & L_{f, k} \Pi_x
\end{bmatrix}}^{\mathbf{L}}
\overbrace{\begin{bmatrix}
\Pi_xR_{f', 1}   \\
\dots \\
\Pi_x R_{f', k} 
\end{bmatrix}}^\mathbf{R'}.
\end{equation*}
Notice that $\lVert \mathbf{L} \rVert_F^2 = \sum_{i=1}^k \lVert L_{f, i} \Pi_x\rVert^2_F \leq w_f(x)$, and likewise $\lVert \mathbf{R'} \rVert_F^2 \leq w_{f'}(x)$.
In particular,
\begin{align*}
   4\left\lVert \sum_{i = 1}^{k} L_{f, i} \Pi_x R_{f', i}\right\rVert^2_F &= 4\lVert \mathbf{LR'} \rVert^2_F \\
   &\leq 4\lVert \mathbf{L} \rVert_F^2 \lVert \mathbf{R'} \rVert_F^2 \\
   &\leq 4w_f(x)w_{f'}(x),
\end{align*}
completing the proof.
\end{proof}
With this lemma, we are now prepared to give the main proof of this section. 
\begin{proof}[Proof of Theorem \ref{thm:clean_bound}]
We will define a random process that produces the random variable $S$ in the theorem statement. 
By analyzing this random process, we will be able to conclude both conditions,
\begin{equation*}
H( f_S | f ) \geq M_0 -  \frac{32k^2N}{r^2}
\end{equation*}
and 
\begin{equation*}
\bbE[\lVert A_f - A_{f_S} \rVert^2_{F}] \leq 2Nr^2.
\end{equation*}
Our random process takes a parameter $\tau = \frac{r^2}{8k}$ as input, and proceeds as follows:
\begin{enumerate}
\item Let $f_0 \coloneqq f$, and initialize $S = \emptyset$.
\item Iterate over the bitstrings in $\{0, 1\}^{m_0}$ in some fixed order $x_1, \dots, x_{M_0}$. For $i = 1, \dots, M_0$. 
\begin{enumerate}
\item Set $f_i \coloneqq f_{i-1}$. Let $f_{i-1}^-$ and $f_{i-1}^+$ denote $f_{i-1}$, with the value of $x_i$ set to $0$ and $1$ respectively. 
If $w_{f_{i-1}^-}(x_i) \leq \tau$ and $w_{f_{i-1}^+}(x_i) \leq \tau$, resample the output of $f_i$ on $x_i$ randomly, and add $x_i$ to $S$. 

\end{enumerate}
\item Set $f_S \coloneqq f_{M_0}$.
\end{enumerate}

The procedure greedily resamples the values of the original function $f$, when it observes that the index of interest has a sufficiently small influence over the output.
Notice that each $f_i$ has the same distribution as $f$, that of a uniformly random function. 
Indeed, $f_0 = f$ is just a uniformly random function. Conditioned on $f_0^+$ and $f_0^-$, the value of $f_0$ at $x_1$ is uniformly random. 
Therefore, in the update step to $f_1$, the value either remains uniformly random, or is resampled. Neither of these choices changes the distribution, and this argument can be applied inductively to any $f_i$. 
So, the expectation of random variable $M_0-|S|$ can be expressed as
\begin{align*}
\bbE[M_0-|S|] &= \sum_{i=1}^{M_0} \Pr[w_{f_{i-1}^+}(x_i) > \tau \text{ or }w_{f_{i-1}^-}(x_i) > \tau]\\
&\leq 2\sum_{x \in \{0, 1\}^{m_0}}  \Pr[w_f(x) > \tau],
\end{align*}
where the factor of two arises by a union bound over $f_{i-1}^+$ and $f_{i-1}^-$, or equivalently $f_{i-1}$ and $f_{i-1}$ with its output flipped on $x_i$---which are both uniformly random. 
This can be further bounded as
\begin{align}
2\sum_{x \in \{0, 1\}^{m_0}}  \Pr[w_f(x) > \tau]&\leq \frac{2}{\tau}\sum_{x \in \{0, 1\}^{m_0}}  \bbE[w_f(x)]\\
&= \frac{2}{\tau}\bbE\left[\sum_x w_f(x)\right]\\
&= \frac{4kN}{\tau}\\
&=\frac{32k^2N}{r^2}.\label{eq:weight_lower_bound}
\end{align}
The first inequality is an application of Markov's inequality, while the second applies Lemma \ref{lem:weight_sum_bound}.

Since each input is visited exactly once, the initial
function $f$ and the final function $f_S$ determine
every intermediate function: $f_i(x_j)=f_S(x_j)$ for
$j\le i$, while $f_i(x_j)=f(x_j)$ for $j>i$.
Thus the full trajectory has the same conditional
entropy as its final function. By the chain rule,
\begin{align*}
H(f_S\mid f)
&=H(f_1,\ldots,f_{M_0}\mid f)\\
&=\sum_{i=1}^{M_0}
H(f_i\mid f_{i-1},\ldots,f_0).
\end{align*}
Note first that
\begin{equation*}
H(f_i | f_{i-1}, \dots, f_0) = H(f_i | f_{i-1}).
\end{equation*}
Indeed, the sequence $(0, f_0), \dots, (M_0, f_{M_0})$ forms a Markov chain, so the dependence of $f_i$ on previous functions simplifies to a dependence on $f_{i-1}$. 
We can now consider whether $x_i \in S$ and split into two cases. 
First, $H(f_i | f_{i-1}, x_i \in S) = 1$, since in this case $f_i$ has one new uniformly random bit. 
Meanwhile, $H(f_i | f_{i-1}, x_i \notin S) = 0$, since $f_i = f_{i-1}$. 
Hence $H(f_i | f_{i-1}) = \Pr[x_i \in S]$, and therefore
\begin{align*}
H(f_S | f) &= \sum_{i = 1}^{M_0} \Pr[x_i \in S]\\
&= \bbE[|S|]\\
&\geq M_0 - \frac{32k^2N}{r^2},
\end{align*}
where we use Equation \ref{eq:weight_lower_bound} to finish the desired inequality. 

We now state a lemma that gives a triangle inequality for the expected Frobenius distance squared, albeit with an extra factor of 2---this lemma will be crucial for the theorem. 
\begin{lemma}\label{lemma:triangle_inequality}
\begin{equation*}
\bbE[\lVert A_f - A_{f_S} \rVert^2_{F}] \leq 2\sum_{i = 1}^{M_0}\bbE\left[ \lVert A_{f_i} - A_{f_{i-1}}\rVert^2_F \right]
\end{equation*}
\end{lemma}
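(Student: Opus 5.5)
The plan is to follow the forward/backward martingale decomposition sketched in the technical overview, relying on the reversibility of the resampling chain $f_0,f_1,\dots,f_{M_0}$.

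\textbf{Decomposition.} Write $Y_i := A_{f_i}$ and let $\mathcal{F}_i$ be the $\sigma$-algebra generated by $f_0,\dots,f_i$. The key properties of the chain are:
\begin{itemize}
\item by the argument already given, it is Markov;
\item at every step the marginal of $f_i$ is uniform;
\item each one-step kernel $K_i$ (look at $f^{\pm}_{i-1}$ at $x_i$; if both weights are at most $\tau$, resample that bit, otherwise stay) is symmetric, $K_i(g,h)=K_i(h,g)$.
\end{itemize}
Symmetry holds because $K_i$ only moves between $g$ and $g^{\oplus x_i}$, and the resample test depends only on the pair $\{g^-,g^+\}$, which is the same for both endpoints.

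Define $\overline{Y}_i := \bbE[Y_i\mid f_{i-1}]$ and $\underline{Y}_{i-1} := \bbE[Y_{i-1}\mid f_i]$. Since $f_i$ and $f_{i-1}$ differ only at $x_i$, both conditional expectations equal the same average $\frac12(A_{g^-}+A_{g^+})$ whenever resampling is allowed, and otherwise equal $Y_i=Y_{i-1}$. So in fact $\overline{Y}_i=\underline{Y}_{i-1}$ pointwise as functions of the pair $\{f_{i-1}^-,f_{i-1}^+\}$, which is determined by either $f_{i-1}$ or $f_i$. We then split
\[
Y_i-Y_{i-1} = (Y_i-\overline{Y}_i) + (\overline{Y}_i - Y_{i-1}) =: D_i + E_i .
\]

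\textbf{Orthogonality.} Expanding the telescoping sum, $A_{f_S}-A_f=\sum_i D_i+\sum_i E_i$, and $\lVert a+b\rVert_F^2\le 2\lVert a\rVert_F^2+2\lVert b\rVert_F^2$ reduces the lemma to showing
\[
\bbE\Big\lVert\sum_i D_i\Big\rVert_F^2=\sum_i\bbE\lVert D_i\rVert_F^2,\qquad \bbE\Big\lVert\sum_i E_i\Big\rVert_F^2=\sum_i\bbE\lVert E_i\rVert_F^2 ,
\]
together with $\bbE\lVert D_i\rVert_F^2+\bbE\lVert E_i\rVert_F^2\le \bbE\lVert Y_i-Y_{i-1}\rVert_F^2$.

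For the $D_i$, the Markov property gives $\bbE[D_i\mid\mathcal F_{i-1}]=0$. Hence for $j<i$, conditioning on $\mathcal F_{i-1}\supseteq\mathcal F_j$ gives $\bbE\Tr(D_j^\dagger D_i)=0$.

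For the $E_i$, reverse time. Because each $K_i$ is symmetric and every marginal is uniform, the reversed process $f_{M_0},\dots,f_0$ is a Markov chain with kernels $K_{M_0},\dots,K_1$, so it is Markov with respect to the reverse filtration $\mathcal G_i=\sigma(f_i,\dots,f_{M_0})$. Using $\overline{Y}_i=\bbE[Y_{i-1}\mid f_i]$, we get $\bbE[E_i\mid\mathcal G_i]=0$. For $j>i$, $E_j$ is $\mathcal G_{j-1}\subseteq\mathcal G_i$-measurable, so the cross terms vanish in the same way.

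Finally, $D_i$ and $E_i$ are orthogonal in expectation. Indeed, $E_i$ is $f_{i-1}$-measurable, and conditioning on $f_{i-1}$ kills $D_i$. Then
\[
\bbE\lVert Y_i-Y_{i-1}\rVert_F^2=\bbE\lVert D_i\rVert_F^2+\bbE\lVert E_i\rVert_F^2 .
\]
Combining all of this gives
\[
\bbE\lVert A_f-A_{f_S}\rVert_F^2\le 2\sum_i\big(\bbE\lVert D_i\rVert_F^2+\bbE\lVert E_i\rVert_F^2\big)=2\sum_i\bbE\lVert A_{f_i}-A_{f_{i-1}}\rVert_F^2 .
\]

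\textbf{Main obstacle.} The delicate step is verifying the reversal claim. The adaptivity of the resample decision (the weights depend on the current $f$) must not break symmetry, which is exactly why the rule tests both $f^{-}_{i-1}$ and $f^{+}_{i-1}$. I would check this explicitly by writing $\Pr[f_{i-1}=g,f_i=h]=M_0^{-1}$-uniform weight times $K_i(g,h)$ and showing that this expression is symmetric in $(g,h)$.
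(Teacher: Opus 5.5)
Your proposal is correct and follows essentially the same route as the paper: you split $A_{f_i}-A_{f_{i-1}}=(A_{f_i}-\overline{A}_{f_i})+(\overline{A}_{f_i}-A_{f_{i-1}})$ into a forward martingale difference sequence and a backward one (the latter obtained from the time-reversibility of the symmetric resampling chain), and then apply $\lVert X+Y\rVert_F^2\le 2\lVert X\rVert_F^2+2\lVert Y\rVert_F^2$. The only difference is cosmetic and lies in the per-step bound: you use $\bbE\,\mathrm{Re}\,\Tr(D_i^\dagger E_i)=0$ to get $\bbE\lVert D_i\rVert_F^2+\bbE\lVert E_i\rVert_F^2=\bbE\lVert A_{f_i}-A_{f_{i-1}}\rVert_F^2$, whereas the paper computes directly that $\lVert T_i^{(1)}\rVert_F=\lVert T_i^{(2)}\rVert_F=\tfrac12\lVert A_{f_{i-1}^+}-A_{f_{i-1}^-}\rVert_F$, which bounds each of the two sums by half of $\sum_i\bbE\lVert A_{f_i}-A_{f_{i-1}}\rVert_F^2$.
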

First, let us see how assuming the lemma yields the result. 
Notice that each term in 
\begin{equation}\label{eq:sum_bound}
2\sum_{i = 1}^{M_0}\bbE\left[ \lVert A_{f_i} - A_{f_{i-1}}\rVert^2_F \right],
\end{equation}
is either zero since $x_i \notin S$, or we have the assurance that $w_{f_{i-1}}(x_i) \leq \tau$. 
So, using Lemma \ref{lem:single_update_bound}, we obtain the bound
\begin{align*}
\sum_{i = 1}^{M_0}\bbE\left[ \lVert A_{f_i} - A_{f_{i-1}}\rVert^2_F \right]&\leq \sum_{i = 1}^{M_0}\bbE\left[\mathds{1}_{x_i \in S} \cdot 4w_{f_{i-1}}(x_i)w_{f_i}(x_i)\right]
\\&\leq 4\tau\sum_{i =1}^{M_0} \bbE[w_{f_i}(x_i)]  \\
&\leq 4\tau \sum_{i = 1}^{M_0} \bbE[w_{f}(x_i)]\\
&\leq 8\tau k N\\
&\leq Nr^2.
\end{align*}
The third inequality notes that since every $f_i$ has the same distribution, we can replace them in the expression by a single uniformly random $f$. 
We then apply Lemma \ref{lem:weight_sum_bound}, since we have a sum of $w_f(x)$ over all $x \in \{0, 1\}^{m_0}$, and substitute in the value of $\tau$.
This inequality, after applying Lemma \ref{lemma:triangle_inequality}, yields the result. 

\begin{proof}[Proof of Lemma \ref{lemma:triangle_inequality}]
We can expand $A_{f_S}-A_f$ as 
\begin{equation*}
A_{f_S}-A_f = \sum_{i:x_i \in S} A_{f_{i}} - A_{f_{i-1}}.
\end{equation*}
We now define 
\begin{equation}
\overline{A}_{f_i} = \frac{A_{f_{i-1}^+} + A_{f_{i-1}^-}}
{2}.\end{equation}
For $x_i \in S$, $\overline{A}_{f_i}$ is the average of the two possible values for $f_i$, after it is sampled from $f_{i-1}$. 
As a consequence, 
\begin{equation}
\mathbb{E}[A_{f_i} | f_{i-1},\dots, f_0,  x_i \in S] = \overline{A}_{f_i}. 
\end{equation}
Then we can rewrite
\begin{equation*}
A_{f_{i}} - A_{f_{i-1}} = \overbrace{(A_{f_i} - \overline{A}_{f_i})}^{T_i^{(1)}}  + \overbrace{(\overline{A}_{f_i} - A_{f_{i-1}})}^{T_i^{(2)}}. 
\end{equation*}
When $x_i \notin S$, we define $T_i^{(1)}, T_i^{(2)} = 0$. 
By their definitions, note that when $x_i \in S$, 
\begin{equation*}
\lVert T_i^{(1)}\rVert_F= \lVert T_i^{(2)}\rVert_F = \frac{1}{2}\big\lVert A_{f_{i-1}^+}-A_{f_{i-1}^-}\big\rVert_F,
\end{equation*}
since $A_{f_i} ,A_{f_{i-1}} \in  \{A_{f_{i-1}^+}, A_{f_{i-1}^-}\}$. 
From the conditional expectation of $\overline{A}_{f_i}$, it also follows that
\begin{equation*}
\mathbb{E}[T_i^{(1)} | f_{i-1}, \dots, f_0] = 0,
\end{equation*}
and therefore for any $i > j$, 
\begin{equation*}
\bbE[T_i^{(1)} (T_j^{(1)})^\dag] = 0,
\end{equation*}
since $T_j^{(1)}$ depends only on $f_1, \dots, f_j$, implying that $\bbE[T_i^{(1)} \, \vert \, (T_j^{(1)})^\dag] = 0$. 

The Pythagorean theorem implies that since the $T_i^{(1)}$ are orthogonal with respect to this inner product, if $T^{(1)} = \sum_{i = 1}^{M_0}T_i^{(1)}$,
\begin{align*}
\bbE\left[\lVert T^{(1)}\rVert_F^2\right] &= \sum_{i=1}^{M_0}\bbE\left[\lVert T_i^{(1)}\rVert_F^2\right]\\
&= \frac{1}{4}\sum_{i = 1}^{M_0}\bbE\left[\mathds{1}_{x_i \in S} \lVert A_{f_{i-1}^+} - A_{f_{i-1}^-}\rVert^2_F \right]\\
&= \frac{1}{2}\sum_{i = 1}^{M_0}\bbE\left[\mathds{1}_{x_i \in S} \lVert A_{f_i} - A_{f_{i-1}}\rVert^2_F \right]\\
&\leq \frac{1}{2}\sum_{i = 1}^{M_0}\bbE\left[ \lVert A_{f_i} - A_{f_{i-1}}\rVert^2_F \right].
\end{align*}
The third equality uses that $f_i, f_{i-1} = f_{i-1}^{\pm}, f_{i-1}^{\mp}$ with $\frac{1}{2}$ probability, and otherwise they are equal. 

However, it is not so obvious how to repeat the argument for the operators $T_i^{(2)}$. 
The crucial observation is that our random process is fully symmetric with respect to time: the random variables are completely consistent with first sampling $f_{M_0}$ at random, and then resampling the outputs at $x_{M_0}, \dots, x_1$ in reverse order with the same condition. 

This can be verified by examining any pair $f_i, f_{i-1}$. If one starts with $f_{i-1}$ and randomly resamples $x_i$ if the condition on $f_{i-1}^+$ and $f_{i-1}^-$ is satisfied, it yields the exact same joint distribution as starting with $f_{i}$ and resampling $x_i$ if the condition on $f_{i-1}^+$ and $f_{i-1}^-$ (which are both functions of $f_i$) is satisfied. 
Furthermore, $f_{i+1}$ and $f_{i-1}$ are independent conditioned on $f_{i}$, since as previously mentioned $(i, f_i)$ forms a Markov chain. 
Therefore, since all the consecutive joint distributions are equal for the forward and backwards processes, the complete joint distributions are equal. 

We may therefore adopt a reversed perspective on the sampling process.
It follows that
\begin{equation*}
\mathbb{E}[A_{f_{i-1}} | f_{i}, \dots, f_{M_0},  x_i \in S] = \overline{A}_{f_i}.
\end{equation*}
Using a symmetrical argument, we deduce that 
\begin{equation*}
\mathbb{E}[T_i^{(2)} | f_{i}, \dots, f_{M_0}] = 0,
\end{equation*}
and therefore for any $i < j$,
\begin{align*}
\bbE[T_i^{(2)}(T_j^{(2)})^\dag] = 0.
\end{align*}
Writing $T^{(2)}=\sum_{i=1}^{M_0}T_i^{(2)}$, we therefore obtain the same bound
\begin{equation*}
\bbE\left[\lVert T^{(2)}\right\rVert_F^2] \leq \frac{1}{2}\sum_{i = 1}^{M_0}\bbE\left[ \lVert A_{f_i} - A_{f_{i-1}}\rVert^2_F \right].
\end{equation*}
We can now conclude
\begin{align*}
\bbE\left[\lVert A_{f_S} - A_f \rVert_F^2\right] &\leq 2\bbE\left[\lVert T^{(1)}\rVert ^2_F\right] + 2\bbE\left[\lVert T^{(2)}\rVert ^2_F\right]\\
&\leq 2\sum_{i = 1}^{M_0}\bbE\left[ \lVert A_{f_i} - A_{f_{i-1}}\rVert^2_F \right].
\end{align*}
The first inequality uses $A_{f_S}-A_f = T^{(1)} + T^{(2)}$ and the elementary bound $\lVert X+Y\rVert_F^2 \leq 2\lVert X\rVert_F^2+2\lVert Y\rVert_F^2$, which is the $L=2$ case of the bound used in the technical overview.
\end{proof}
Having proven the lemma, the theorem follows.
\end{proof}

To conclude this section, we use Theorem \ref{thm:clean_bound} to isolate a convenient lemma. 
\begin{corollary}\label{cor:net}
Let $m$ be the total number of qubits including ancilla, $m_0$ the oracle query length, and $n$ the number of qubits in the target subsystem. Let $V_0, \dots, V_k \in U(M)$ be a sequence of unitaries with $k\ge1$. 
Once again, we define the random variable
\begin{equation*}
A_f = J^\dag V_k \mathcal{O}_f \dots \mathcal{O}_f V_0 J,
\end{equation*}
where $f:\{0, 1\}^{m_0} \to\{0, 1\}$ is a uniformly random function. 
Then for any $0<d<2$ and $0<\zeta<1$, there exists a
finite set $\mathcal T$ of contractions such that
$\log_2 (|\mathcal T|) = O\left(\frac{N k^2}{\zeta^2 d^2}\right)$, and
\begin{equation*}
\Pr\left[\min_{C \in \mathcal{T}} \lVert C - A_f \rVert_F > d\sqrt{N} \right] \leq \zeta.
\end{equation*}
If every $A_f$ is Hermitian, then $\mathcal T$ can be
chosen to consist of Hermitian contractions.
\end{corollary}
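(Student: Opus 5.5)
We derive Corollary~\ref{cor:net} from Theorem~\ref{thm:clean_bound} by converting its entropy bound into a covering by a small set. Fix $r$ with $2Nr^2 \le \zeta^2 d^2 N/4$, i.e.\ $r = \zeta d/(2\sqrt2)$; then $r\le 1$ since $d<2$ and $\zeta<1$. Run $\mathsf{Resample}$ to obtain $(S,f_S)$ with
\[
H(f_S\mid f)\ge M_0 - \frac{32k^2N}{r^2}
\quad\text{and}\quad
\bbE\lVert A_f-A_{f_S}\rVert_F^2\le 2Nr^2 .
\]
By Markov's inequality,
\[
\Pr\bigl[\lVert A_f-A_{f_S}\rVert_F > d\sqrt N/2\bigr]\le \frac{8r^2}{d^2}\le \zeta^2\le \zeta/2\cdot 2\zeta .
\]
We will tune constants so that this probability is at most $\zeta/2$. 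Since $f_S$ is uniform, the task reduces to showing that a uniformly random $A_{f_S}$ lies, with probability $1-\zeta/2$, within $d\sqrt N/2$ of a small set. A cleaner route is to work directly with the joint distribution of $(f,f_S)$.

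The key step turns high conditional entropy into a small set of likely ``centers''. Since $H(f)=M_0$, we have
\[
I(f;f_S)=H(f_S)-H(f_S\mid f)\le \frac{32k^2N}{r^2} =: B .
\]
Low mutual information between $f$ and $f_S$ means that, for typical $f$, the conditional law of $f_S$ given $f$ is not much more concentrated than uniform. We would instead like to find a set $\mathcal G$ of functions $g$, with $\log|\mathcal G| = O(B/\zeta)$, such that with probability $1-\zeta/2$ the random $f$ has some ``partner'' $g\in\mathcal G$ with $\lVert A_f - A_g\rVert_F\le d\sqrt N$. Then we set $\mathcal T=\{A_g : g\in\mathcal G\}$. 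These are contractions, since $A_g$ is a compression of a unitary, and they are Hermitian whenever the $A_f$ are.

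To build $\mathcal G$, use a random-coding argument. Sample $g_1,\dots,g_L$ i.i.d.\ uniform with $L=2^{O(B/\zeta)}$. For a fixed $f$, let
\[
p_f=\Pr_{g\text{ uniform}}\bigl[\lVert A_f-A_g\rVert_F\le d\sqrt N\bigr].
\]
By the triangle inequality, this probability is at least the mass of the ``good'' set $G_f=\{g: \lVert A_f-A_g\rVert_F\le d\sqrt N\}$, which contains every $f_S$ with $\lVert A_f-A_{f_S}\rVert_F\le d\sqrt N$. So under $P_{f_S\mid f}$, $G_f$ has mass at least $1-\epsilon_f$ with $\bbE\,\epsilon_f\le\zeta^2$. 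The standard change-of-measure inequality, applied to the conditional law versus uniform, gives
\[
\log\frac{1}{u(G_f)}\le \frac{D(P_{f_S\mid f}\,\|\,u)+1}{1-\epsilon_f}.
\]
Moreover $\bbE_f\, D(P_{f_S\mid f}\|u)=M_0-H(f_S\mid f)\le B$. By Markov's inequality, with probability $1-\zeta/2$ we have $\epsilon_f\le 1/2$ and $D\le 4B/\zeta$, and hence $u(G_f)\ge 2^{-O(B/\zeta)}$. Taking $L = 2^{O(B/\zeta)}\ln(4/\zeta)$ samples makes the expected failure probability at most $\zeta$, so some fixed choice of $\mathcal G$ works. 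This gives
\[
\log|\mathcal T| = O\!\left(\frac{B}{\zeta}\right) = O\!\left(\frac{Nk^2}{\zeta^3 d^2}\right).
\]

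The main obstacle is matching the claimed exponent $\zeta^{-2}$ rather than $\zeta^{-3}$. The $1/\zeta$ factor above comes from applying Markov's inequality to the divergence. To remove it, choose $r=\Theta(d)$ instead of $\Theta(\zeta d)$, and obtain the $\zeta$-dependence from the distance tail with Markov at level $\zeta$. This gives $r^2=\Theta(\zeta d^2)$, so $B=O(Nk^2/(\zeta d^2))$ and $B/\zeta = O(Nk^2/(\zeta^2d^2))$, as required. The remaining care is bookkeeping: checking that the triangle inequality (distance $d\sqrt N$ from $f$ to $f_S$) and the divergence tail at level $\zeta/2$ combine with consistent constants.
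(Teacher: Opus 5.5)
Your final argument is correct and follows essentially the paper's route. Note that $r^2=\Theta(\zeta d^2)$ means $r=\Theta(\sqrt{\zeta}\,d)$, not $r=\Theta(d)$ as you wrote; the paper takes exactly $r=d\sqrt{\zeta/6}$, shows that for all but $O(\zeta)$ of the $f$'s the ball around $A_f$ has uniform mass $q_f>2^{-L}$ with $L=O(Nk^2/(\zeta^2d^2))$, and then random-codes with $\lceil 2^L\ln(3/\zeta)\rceil$ samples. The only difference is in the entropy-to-measure step: the paper uses one global chain-rule bound $H(f_S\mid f)\le M_0-\Pr[Q\wedge E]\,L+1$, whereas you use a per-$f$ KL change-of-measure inequality plus Markov on both the divergence and $\epsilon_f$, which gives the same bound up to constants.
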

\begin{proof}
Let $B_f = \{g: \lVert A_f - A_g \rVert_F \leq d\sqrt{N} \}$. 
We can write $q_f = \frac{|B_f|}{2^{M_0}}$. 
In other words, $q_f$ is the proportion of functions $g$ whose $A_g$ are within $d\sqrt{N}$ Frobenius distance of $A_f$. 

First, we apply Theorem \ref{thm:clean_bound} with $r = d\sqrt{\frac{\zeta}{6}}$. As a consequence, for a uniformly random $f$, $f_S$ satisfies 
\begin{align*}
\Pr\left[\lVert A_{f_S} - A_{f}\rVert_F > d\sqrt{N}\right]= \Pr\left[\lVert A_{f_S} - A_{f}\rVert_F^2 > d^2N\right] \leq \frac{2r^2}{d^2} = \frac{\zeta}{3}.
\end{align*}
Denote the event that $\lVert A_{f_S} - A_{f}\rVert_F \leq d\sqrt{N}$ by $E$. 
The above equation shows that $\Pr[\lnot E] \leq \frac{\zeta}{3}$. 
Furthermore, we have that $H( f_S | f ) \geq M_0 -  \frac{32k^2N}{r^2}$.
We now set $b = \frac{32k^2N}{r^2}$, and define $L = \frac{b+1}{\zeta/3}$. 
Let $Q$ be the event $q_f\leq2^{-L}$. 
The conditional entropy $H(f_S | f)$ decomposes as
\begin{align*}
H(f_S | f) &\leq H(f_S, E | f)\\
&= H(f_S | f, E) + H(E | f)\\
&\leq
\Pr[E\land Q]\,
H(f_S\mid f,\mathds{1}_E=1,\mathds{1}_Q=1)
+(1-\Pr[E\land Q])M_0+1\\
&\leq \Pr[Q \land E](M_0 - L) + (1-\Pr[E \land Q])M_0 + 1 \\
&\leq M_0 - \Pr[ Q \land E]L + 1. 
\end{align*}
In the second line, we use the chain rule of conditional entropy. 
The term $H(E | f)$ is bounded by 1 subsequently, since the event $E$ has at most one bit. 
In the third line, we split according to $E$ and $Q$.
Since $Q$ is determined by $f$, conditioning on $Q$
requires no additional entropy term. 
If $E \land Q$ occurs, then the conditional entropy is at most $M_0 - L$---indeed, in this case, $A_{f_S}$ is within
$d\sqrt{N}$ Frobenius distance of $A_f$, and there
are at most $2^{M_0-L}$ possible functions $f_S$
with this property. 
Otherwise, there is the trivial bound on the entropy of $f_S$ of $M_0$. 

Using the inequality $H( f_S | f ) \geq M_0 -  b$, we obtain
\begin{equation*}
M_0 - b \leq M_0 - \Pr[Q \land E]\frac{b+1}{\zeta/3} + 1,
\end{equation*}
which rearranges to yield
\begin{equation*}
\Pr[Q \land E] \leq \frac{\zeta}{3}. 
\end{equation*}

We may now select the set $\mathcal{T}$ to be a set of operators $A_{g_i}$ for $i = 1, \dots, K = \lceil 2^L\ln(\frac{3}{\zeta})\rceil$ independently sampled, uniformly random functions $g_i$. 
Since $|\mathcal T|\le K$ and
$L=3(b+1)/\zeta\ge3/\zeta$, we have
\[
\log_2|\mathcal T|
\le\log_2 K
=O\left(L+\log_2\ln(3/\zeta)+1\right)
=O(L)
=O\left(\frac{k^2N}{\zeta^2d^2}\right),
\]
where the last equality uses $k\ge1$ and $d<2$. 
Each of these operators is a block of a unitary,
and hence has operator norm at most one, as required.
If every $A_g$ is Hermitian, then so is every element
of $\mathcal T$.  
For each fixed $f$ and $i$, the definition of $q_f$ gives
\begin{equation*}
\Pr_{g_i}\left[
\lVert A_{g_i}-A_f\rVert_F > d\sqrt{N}
\right]=1-q_f.
\end{equation*}
If $Q$ does not occur, this probability is at most $1 - 2^{-L}$. Hence, 
\begin{align*}
\Pr\left[\min_{C \in \mathcal{T}} \lVert C - A_f \rVert_F > d\sqrt{N} \,\Big \vert\, \lnot Q\right] &\leq (1-2^{-L})^K\\
&\leq \left((1-2^{-L})^{-2^L}\right)^{\ln(\zeta/3)}\\
&\leq \frac{\zeta}{3}. 
\end{align*}
In the final inequality, we use that
$(1-x)^{-1/x}\geq e$ for $0<x<1$,
with $x=2^{-L}$. 
We conclude that 
\begin{align*}
\Pr\left[\min_{C \in \mathcal{T}} \lVert C - A_f \rVert_F > d\sqrt{N} \right] &\leq \frac{\zeta}{3} + \Pr[Q]\\
&\leq \frac{\zeta}{3} + \Pr[Q \land E] + \Pr[\lnot E]\\
&\leq \zeta.
\end{align*}
Here the probability is over both $f$ and the sampled
functions $g_1,\ldots,g_K$.
Averaging over the sampled functions shows that there
is a fixed choice of $g_1,\ldots,g_K$ for which the
failure probability over $f$ alone is at most $\zeta$.
Fix this choice to obtain the required set $\mathcal T$.
\end{proof}

\subsection{Extending to Dirty Synthesis}
We have now proven the central technical result, but it seemingly only applies to the \textit{clean} synthesis of unitaries. 
Indeed, say that a circuit $W_f = V_k \mathcal{O}_f \dots \mathcal{O}_f V_0$ takes $\ket{\psi} \otimes \ket{0^{m-n}}$ close to $U\ket{\psi} \otimes \ket{0^{m-n}}$. 
Then $U \approx J^\dag W_f J$, since the ancilla state does not change.
Hence, if a circuit could yield a clean implementation of a net of sufficiently many unitaries, then we could use it to contradict the net constructed by Corollary \ref{cor:net}.

On the other hand, if the channel $\pi \circ \Phi_{W_f} \circ i$ corresponding to $W_f$ has diamond distance $\epsilon \leq \frac{1}{2} - \frac{1}{\poly(n)}$ from $\Phi_U$ for a unitary $U \in U(N)$, we can consider the circuit
\begin{equation*}
W_f^\dag (P \otimes I) W_f,
\end{equation*}
for any fixed Pauli $P \in \mathcal{P}_n$. 
Since $W_f^\dag$ erases the garbage introduced by $W_f$, a triangle inequality yields that
\begin{equation}
W_f^\dag (P \otimes I) W_f \left(\ket{\psi} \otimes \ket{0^{m-n}}\right)
\end{equation}
has trace distance at most $1- \frac{1}{\poly(n)}$ from
\begin{equation}
(U^\dag P U\ket{\psi}) \otimes \ket{0^{m-n}},
\end{equation}
which is sufficient to apply the clean obstruction. 
However, a random-oracle unitary synthesis algorithm only requires an expected error of $1 - \frac{1}{\poly(n)}$, in which case the above argument no longer holds. 
However, the lemma below shows that $J^\dag W_f^\dag (P \otimes I) W_fJ$, in expectation over a $\textit{random}$ Pauli $P$, has trace overlap with $U^\dag P U$.
This is the crucial reduction that enables all our lower bounds. 

\begin{lemma}\label{lem:dirty_to_clean}
Consider some sequence of unitaries $V_0, \dots, V_k \in U(M)$. 
For some fixed boolean function $f: \{0, 1\}^{m_0} \to \{0, 1\}$, let
\begin{equation*}
W_f = V_k\mathcal{O}_fV_{k-1} \dots \mathcal{O}_f V_0.
\end{equation*}
Let $P$ be a uniformly random, $n$-qubit Pauli operator. Let $A_{f,P}$ be $J^\dag W_{f}^\dag (P \otimes I) W_f J$---i.e., the $N \times N$ block of $W_{f}^\dag (P \otimes I) W_f$ corresponding to all ancilla qubits equal to zero. 
Then we have that for any unitary $U \in U(N)$, 
\begin{equation*}
\bbE_{P}\left[\text{tr}\left(\frac{1}{N}A_{f, P} \cdot ( U ^\dag P U ) \right)\right] \geq 1 - D_\diamond (\pi \circ \Phi_{W_f} \circ i, \Phi_U).
\end{equation*}
\end{lemma}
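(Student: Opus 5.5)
The plan is to rewrite the left-hand side as an entanglement fidelity and then compare it to the diamond distance. Write $\Phi := \pi\circ\Phi_{W_f}\circ i$, and extend it linearly to all $N\times N$ operators $X$ by $\Phi(X) = \pi(W_f J X J^\dag W_f^\dag)$.

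\emph{Step 1 (undo the block structure).} By cyclicity of the trace and the definition of the partial trace,
\begin{equation*}
\text{tr}\left(A_{f,P}\, X\right) = \text{tr}\left((P\otimes I)\, W_f J X J^\dag W_f^\dag\right) = \text{tr}\left(P\, \Phi(X)\right).
\end{equation*}
Taking $X = U^\dag P U$, the left-hand side of the lemma becomes $\bbE_P \frac1N \text{tr}\left(P\,\Phi(U^\dag P U)\right)$. This step removes the ancillas, whether clean or dirty, from the picture.

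\emph{Step 2 (absorb $U$).} Set $\Lambda := \Phi_{U^\dag}\circ\Phi$, so that $\Phi = \Phi_U\circ\Lambda$. Using the cyclic trace again, $\text{tr}(P\,\Phi(U^\dag PU)) = \text{tr}(Q\,\Lambda(Q))$ with $Q := U^\dag P U$. Since $P$ ranges over the $4^n=N^2$ Paulis, the operators $Q/\sqrt N$ form a Hermitian orthonormal basis of $\bbC^{N\times N}$ under the Hilbert--Schmidt inner product.

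\emph{Step 3 (identify the average).} Let $\ket{\Omega} = N^{-1/2}\sum_j \ket{jj}$. I will use the standard identity: for any Hermitian orthonormal operator basis $\{B_a\}$,
\begin{equation*}
\sum_a \text{tr}\left(B_a\Lambda(B_a)\right) = N^2 \bra{\Omega}(\Lambda\otimes\id)(\ketbra{\Omega}{\Omega})\ket{\Omega}.
\end{equation*}
This follows because the left side is the trace of $\Lambda$ as a superoperator, which is basis-independent, and evaluating it in the matrix-unit basis gives $\sum_{i,j}\bra{i}\Lambda(\ketbra{i}{j})\ket{j}$, which equals the right side. The normalizations then combine to
\begin{equation*}
\bbE_P \tfrac1N \text{tr}(Q\Lambda(Q)) = \tfrac{1}{N^2}\sum_Q \text{tr}\left(\tfrac{Q}{\sqrt N}\Lambda\left(\tfrac{Q}{\sqrt N}\right)\right)\cdot N = F_e(\Lambda),
\end{equation*}
where $F_e(\Lambda) := \bra{\Omega}(\Lambda\otimes\id)(\ketbra{\Omega}{\Omega})\ket{\Omega}$ is the entanglement fidelity. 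I expect tracking these normalizations to be the only delicate point. I will verify the constant by checking that the left-hand side equals exactly $1$ when $\Lambda=\id$.

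\emph{Step 4 (fidelity vs.\ diamond distance).} For a pure state $\ket{\Omega}$ and any state $\sigma$, we have $\bra\Omega\sigma\ket\Omega \ge 1 - D(\sigma,\ketbra{\Omega}{\Omega})$. To see this, test with the projector $\ketbra{\Omega}{\Omega}$ in the variational formula for trace distance. Therefore
\begin{equation*}
F_e(\Lambda)\ge 1 - D_\diamond(\Lambda,\id) = 1 - D_\diamond(\Phi_{U^\dag}\circ\Phi,\ \Phi_{U^\dag}\circ\Phi_U) = 1 - D_\diamond(\Phi,\Phi_U),
\end{equation*}
where the last equality uses invariance of the diamond norm under post-composition with a unitary channel. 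Combined with Steps 1--3, this gives the claimed inequality.
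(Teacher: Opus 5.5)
Your proof is correct and is essentially the paper's argument. Both show that the left-hand side equals the Choi-state overlap $\bra{U}(\Phi\otimes\id)(\ketbra{\Omega}{\Omega})\ket{U}$ (your $F_e(\Lambda)$, the paper's $\braket{U|\rho_{W_f}|U}$) and bound it below by $1-D_\diamond$ via the projector test; you derive the identity from the partial-trace duality and basis-independence of the superoperator trace rather than by expanding both Choi states in the Pauli basis.

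One small slip: the factor $\cdot N$ in your Step~3 display is spurious, since $\frac1N\text{tr}(Q\Lambda(Q))=\text{tr}\bigl(\frac{Q}{\sqrt N}\Lambda(\frac{Q}{\sqrt N})\bigr)$ already. Your endpoint $F_e(\Lambda)$ is still right, as your $\Lambda=\id$ check would confirm.
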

This lemma takes a circuit $W_f$, and argues that \emph{if} $W_f$ is close to a unitary $U$ as a channel, then the action of $W_f$ on a random observable $P$ has significant trace overlap with the action of $U$ on $P$.
Notice that $W_f^\dag (P \otimes I) W_f $ is a $2k$-query circuit---we will eventually apply our clean synthesis obstruction to this circuit.
\begin{proof}
First, we consider two Choi states. The first is
\begin{equation*}
\ket{U} = \frac{1}{\sqrt{N}}(U \otimes I)\sum_{x \in \{0, 1\}^n} \ket{x} \otimes \ket{x},
\end{equation*}
which we can rewrite as
\begin{align*}
\frac{1}{\sqrt{N}}\sum_{x\in \{0, 1\}^n} U\ket{x}\otimes\ket{x}&=\sum_{x,y\in \{0, 1\}^n} U_{yx}\ket{y}\otimes\ket{x}\\
&=\sum_{y\in \{0, 1\}^n}\ket{y}\otimes U^T\ket{y}.
\end{align*}
The state $\frac{1}{\sqrt{N}}\sum_{x \in \{0, 1\}^n} \ket{x} \otimes \ket{x}$ has density matrix $\frac{1}{N^2}\sum_{P \in \mathcal{P}_n} P \otimes P^T$.
As a result, we may write
\begin{equation*}
\ket{U}\bra{U} = \frac{1}{N^2}\sum_{P \in \mathcal{P}_n} P \otimes (U^{\dag} PU)^T,
\end{equation*}
Now, we will repeat the same process for $W_f$.
We can consider the state
\begin{equation*}
\ket{W_f} = \frac{1}{\sqrt{N}}((W_f J) \otimes I)\sum_{x \in \{0, 1\}^n} \ket{x}\otimes \ket{x},
\end{equation*}
where $J$ is the isometry taking $\ket{x}$ to $\ket{x0^{m-n}}$. 
In this case, rather than a normal Choi state for $W_f$ on $2m$ qubits, we are preparing a state on $m+n$ qubits that only takes a superposition over bitstrings $x \in \{0, 1\}^n$ on $n$ qubits.
Using the same argument as for $U$, which applies to the isometry $W_fJ$ as well, we obtain that
\begin{equation*}
\ket{W_f} = \frac{1}{\sqrt{N}}(I \otimes (W_f J)^T)\sum_{y \in \{0, 1\}^m} \ket{y}\otimes \ket{y},
\end{equation*}
and therefore
\begin{equation*}
\ket{W_f}\bra{W_f} = \frac{1}{M N}\sum_{Q \in \mathcal{P}_m} Q \otimes (J^\dag W_f^\dag Q W_fJ)^T.
\end{equation*}
Let $\rho_{W_f}$ be equal to $\ket{W_f}\bra{W_f}$, with the last $m-n$ qubits of the first register traced out. 
In doing so, all Pauli operators $Q = P \otimes R$ where $R \neq I$ vanish, since $\text{tr}(R) = 0$ if $R \neq I$. 
Meanwhile, if $R=I$, $\text{tr}(R) = M/N$, so $\rho_{W_f}$ simplifies to
\begin{align*}
\rho_{W_f} &= \frac{1}{N^2}\sum_{P \in \mathcal{P}_n} P \otimes (J^\dag W_f^\dag (P \otimes I) W_f J)^T\\
&=  \frac{1}{N^2}\sum_{P \in \mathcal{P}_n} P \otimes A_{f, P}^T,
\end{align*}
Now using the orthogonality of the Pauli operators, along with the fact that $\text{tr}(P \cdot P) = N$, we obtain
\begin{align*}
\braket{U|\rho_{W_f}|U} &= \frac{1}{N^4}\text{tr}\left(\left(\sum_{P \in \mathcal{P}_n} P \otimes A_{f, P}^T\right)\left(\sum_{P \in \mathcal{P}_n} P \otimes (U^{\dag} PU)^T\right)\right)\\
&= \frac{1}{N^3} \sum_{P \in \mathcal{P}_n} \text{tr}\left(A_{f, P}(U^\dag P U)\right)\\
&= \bbE_{P}\left[\text{tr}\left(\frac{1}{N}A_{f, P}(U^\dag P U)\right)\right].
\end{align*}
We now note that 
\begin{equation}\label{eq:trace_distance_upper}
1 - \braket{U|\rho_{W_f}|U} \leq D(\rho_{W_f}, \ket{U}\bra{U}).
\end{equation}
Indeed, the projection onto $\ket{U}\bra{U}$ accepts with probability 1 for the state $\ket{U}$, but with probability $\braket{U|\rho_{W_f}|U}$ for the state $\rho_{W_f}$. The difference between these two probabilities is at most the trace distance, $D(\rho_{W_f}, \ket{U}\bra{U})$.

But we recall that $\rho_{W_f}$ corresponds to the pure state
\begin{equation*}
\ket{W_f} = \frac{1}{\sqrt{N}}((W_f J)\otimes I)\sum_{x \in \{0, 1\}^n} \ket{x}\otimes \ket{x}
\end{equation*}
after discarding the $m-n$ ancilla qubits in the first register, while $\ket{U}$ can be obtained from
\begin{equation*}
\frac{1}{\sqrt{N}}(((U\otimes I)J)\otimes I)\sum_{x \in \{0, 1\}^n} \ket{x}\otimes \ket{x}
\end{equation*}
by discarding the same $m-n$ ancilla qubits. 
Thus, $\rho_{W_f}$ and $\ket{U}\bra{U}$ are obtained by applying $W_f \otimes I$ and $(U \otimes I) \otimes I$, respectively, to the same input state
\begin{equation*}
\frac{1}{\sqrt{N}}(J\otimes I)\sum_{x \in \{0, 1\}^n} \ket{x}\otimes \ket{x}.
\end{equation*}
We therefore bound the trace distance between $\rho_{W_f}$ and $\ket{U}\bra{U}$ by the diamond distance of the channels corresponding to $W_f$ and $U$, yielding
\begin{equation*}
D(\rho_{W_f}, \ket{U}\bra{U}) \leq D_\diamond(\pi \circ \Phi_{W_f} \circ i, \Phi_U),
\end{equation*}
which, together with Equation \ref{eq:trace_distance_upper}, proves the lemma.

\end{proof}

\subsection{Proof of Theorem \ref{thm:main_1}}
We will first show using Corollary \ref{cor:net} and Lemma \ref{lem:dirty_to_clean} that a unitary synthesis algorithm can be used to construct a net $\mathcal{T}$ of Hermitian contractions, such that on average, every $U^\dag Q U$ for a fixed Pauli $Q \in \mathcal{P}_n$ and a Haar random unitary $U$ has good trace correlation with \textit{some} element of $\mathcal{T}$. 
Then, to complete the theorem, we will give a lower bound on the size of any net that satisfies this property. 
\begin{lemma}\label{lem:rotated_pauli_correlation_net}
If $V_0, \dots, V_k \in U(M)$ are a random-oracle unitary synthesis algorithm satisfying the conditions of Theorem \ref{thm:main_1}, there is a net $\mathcal{T}$ of Hermitian contractions satisfying $\log_2(|\mathcal{T}|) = O\left(\frac{N k^2}{(1-\epsilon)^4}\right)$ and a nonidentity Pauli operator
$Q \in \mathcal{P}_n\setminus\{I\}$ such that
\begin{equation*}
\Pr_U\left[\max_{C \in \mathcal{T}}
\frac{1}{N}\text{tr}\left(C(U^\dag Q U)\right) \geq \frac{1-\epsilon}{8}\right] \geq \frac{1-\epsilon}{8}.
\end{equation*}
\end{lemma}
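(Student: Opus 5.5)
Let $(\Phi_1,\Phi_2)$ be the coupling from Definition~\ref{def:rom_synthesis}, realized as a joint distribution of $(f,U)$ with $f$ uniform, $U$ Haar, and $\bbE[D_\diamond(\pi\circ\Phi_{W_f}\circ i,\Phi_U)]\le\eps$. The plan is to fix a Pauli $Q$, apply Corollary~\ref{cor:net} to the $2k$-query circuit $W_f^\dag(Q\otimes I)W_f$ to get a small net for $A_{f,Q}$, and then transfer closeness of $A_{f,Q}$ to net elements into trace correlation with $U^\dag QU$ via Lemma~\ref{lem:dirty_to_clean}.

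\textbf{Choosing $Q$.} Taking expectation over $(f,U)$ in Lemma~\ref{lem:dirty_to_clean} gives
\[
\bbE_{P}\,\bbE_{(f,U)}\left[\tfrac1N\text{tr}(A_{f,P}U^\dag PU)\right]\ge 1-\eps .
\]
The term $P=I$ contributes at most $1$ with weight $N^{-2}$. Since $1-\eps=\omega(N^{-2})$, the average over nonidentity Paulis is still at least $(1-\eps)/2$ for large $N$. So we can fix some $Q\neq I$ with
\[
\bbE_{(f,U)}[X]\ge\tfrac{1-\eps}{2},\qquad X:=\tfrac1N\text{tr}(A_{f,Q}U^\dag QU).
\]
Because $A_{f,Q}$ is a contraction and $U^\dag QU$ is unitary, $|X|\le1$. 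A reverse-Markov argument then gives $\Pr[X\ge(1-\eps)/4]\ge(1-\eps)/4$.

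\textbf{Building the net.} The circuit $W_f^\dag(Q\otimes I)W_f=V_0^\dag\mathcal O_f\cdots V_k^\dag(Q\otimes I)V_k\mathcal O_f\cdots V_0$ has the form required by Corollary~\ref{cor:net}, with $2k$ queries. Its block $A_{f,Q}$ is Hermitian, because the full operator is Hermitian. Apply Corollary~\ref{cor:net} with $\zeta=(1-\eps)/8$ and $d=(1-\eps)/8$. This gives a set $\mathcal T$ of Hermitian contractions with $\log_2|\mathcal T|=O(Nk^2/(1-\eps)^4)$, and with probability at least $1-\zeta$ over $f$ some $C\in\mathcal T$ satisfies $\|C-A_{f,Q}\|_F\le d\sqrt N$.

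\textbf{Transferring the correlation.} For such a $C$, Cauchy--Schwarz gives
\[
\tfrac1N\left|\text{tr}((C-A_{f,Q})U^\dag QU)\right|\le\tfrac1N\,d\sqrt N\cdot\sqrt N=d .
\]
Hence on the intersection of the two events, $\tfrac1N\text{tr}(CU^\dag QU)\ge(1-\eps)/4-(1-\eps)/8=(1-\eps)/8$. The intersection has probability at least $(1-\eps)/4-(1-\eps)/8=(1-\eps)/8$. The event depends on $(f,U)$, but it implies the event about $U$ alone: $\max_{C\in\mathcal T}\tfrac1N\text{tr}(CU^\dag QU)\ge(1-\eps)/8$. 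The marginal of $U$ is Haar, so this yields the claimed bound.

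\textbf{Main obstacle.} The delicate step is the bookkeeping that makes the coupling harmless. Corollary~\ref{cor:net} controls $f$ alone, and it must be combined with the joint event on $(f,U)$ by a union bound, not by any independence. Likewise the $P=I$ term must be excluded using $1-\eps=\omega(N^{-2})$. The remaining work is checking that the constants align as above.
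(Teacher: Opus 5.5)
Your proposal is correct and follows essentially the same route as the paper. Both arguments choose a nonidentity $Q$ via Lemma~\ref{lem:dirty_to_clean} and the assumption $1-\eps=\omega(N^{-2})$, apply Corollary~\ref{cor:net} to the Hermitian $2k$-query circuit $W_f^\dag(Q\otimes I)W_f$, and transfer correlation from $A_{f,Q}$ to a net element by Cauchy--Schwarz. The only difference is the order of steps: the paper first lower-bounds $\bbE_U[\max_{C\in\mathcal T}\tfrac1N\text{tr}(CU^\dag QU)]$ by $(1-\eps)/4$ using $\zeta=(1-\eps)/16$ and applies reverse Markov at the end, whereas you apply reverse Markov to $X$ first and then union-bound against the net's failure event with $\zeta=(1-\eps)/8$, arriving at the same constant $(1-\eps)/8$.
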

\begin{proof}
Say that $V_0, \dots, V_k \in U(M)$ is a random-oracle unitary synthesis algorithm. 
In particular, for
\begin{equation*}
    W_f = V_k\mathcal{O}_f \dots \mathcal{O}_f V_0,
\end{equation*}
we know there is a coupling of the distributions of $\pi \circ \Phi_{W_f} \circ i$ and $\Phi_U$ for a Haar random unitary $U$ such that
\begin{equation*}
\bbE_{U \sim \mathrm{Haar}, W_f}\left[D_\diamond(\pi \circ \Phi_{W_f} \circ i, \Phi_U)\right] \leq \epsilon. 
\end{equation*}
In the remainder of this proof, we consider the joint distribution $(W_f, U)$ that satisfies these properties. 
Writing $A_{f,P}= J^\dag W_{f}^\dag (P \otimes I) W_f J$ for a uniformly random Pauli $P \in \mathcal{P}_n$, we can use Lemma \ref{lem:dirty_to_clean} to write
\begin{equation*}
\bbE_{P \in \mathcal{P}_n} \left[
\frac{1}{N}\text{tr}\left(A_{f,P}(U^\dag P U)\right)
\right]
\geq 1-D_\diamond(\pi \circ \Phi_{W_f} \circ i, \Phi_U),
\end{equation*}
and taking expectations in $W_f$ and $U$,
\begin{equation*}
\bbE_{P \in \mathcal{P}_n, W_f, U} \left[
\frac{1}{N}\text{tr}\left(A_{f,P}(U^\dag P U)\right)
\right] \geq 1-\epsilon.
\end{equation*}
In particular, we may fix some Pauli $Q \in \mathcal{P}_n$ such that the inequality holds. 
However, we must enforce that $Q \neq I$. 
If $Q = I$, then the expression 
\begin{align*}
\frac{1}{N}\text{tr}\left(A_{f,Q}(U^\dag Q U)\right)
&= \frac{1}{N}\text{tr}(I)\\
&= 1.
\end{align*}
Hence, if we are forced to choose $Q$ among $\mathcal{P}_n \backslash \{I\}$, the lower bound becomes
\begin{align*}
\bbE_{W_f, U} \left[
\frac{1}{N}\text{tr}\left(A_{f,Q}(U^\dag Q U)\right)
\right]
&\geq \frac{(1-\epsilon)-N^{-2}}{1-N^{-2}}\\
&\geq \frac{1-\epsilon}{2},
\end{align*}
where the second inequality holds by the theorem assumption that $1-\eps = \omega(N^{-2})$.
Now, we use Corollary \ref{cor:net} on the circuit $W_f^\dag (Q \otimes I) W_f$ to produce a net $\mathcal{T}$ of size
$\log_2|\mathcal{T}| = O\left(\frac{N(2k)^2}{\zeta^2d^2}\right)
= O\left(\frac{N k^2}{\zeta^2d^2}\right)$
such that 
\begin{equation*}
\Pr_{f}\left[
\min_{C \in \mathcal{T}}
\lVert C-A_{f,Q}\rVert_F > d\sqrt{N}
\right] \leq \zeta.
\end{equation*}
We choose the parameters $\zeta = \frac{1-\epsilon}{16}$ and $d = \frac{1-\epsilon}{8}$, and now consider
\begin{equation}\label{eq:expectation_term}
\bbE_{U}\left[
\max_{C \in \mathcal{T}}
\frac{1}{N}\text{tr}\left(C(U^\dag Q U)\right)
\right].
\end{equation}
If we let $E$ be the event that $\min_{C \in \mathcal{T}}\lVert C-A_{f,Q}\rVert_F \leq d\sqrt{N}$, we can lower bound the expression in \eqref{eq:expectation_term} by
\begin{equation*}
\Pr[E]\left(
\bbE_{W_f,U}\left[
\frac{1}{N}\text{tr}\left(A_{f,Q}(U^\dag Q U)\right)
\Bigm| \mathds{1}_E=1
\right]-d
\right)
-(1-\Pr[E]).
\end{equation*}
Here, we are expanding out the conditional expectation, and when the event $E$ does not occur we are using the bound
\begin{equation*}
\left|
\frac{1}{N}\text{tr}\left(C(U^\dag Q U)\right)
\right|
\leq \frac{1}{N}\lVert Q\rVert_1\lVert C\rVert \leq 1,
\end{equation*}
in which we use Hölder's inequality and note that $\lVert U^\dag Q U \rVert_1 = \lVert Q \rVert_1 = N$. 
When $E$ does occur, on the other hand, there is some $C \in \mathcal{T}$ for which
\begin{align*}
\text{tr}\left(C(U^\dag Q U)\right)
&\geq \text{tr}\left(A_{f,Q}(U^\dag Q U)\right)
-\lVert C-A_{f,Q}\rVert_F\lVert U^\dag Q U\rVert_F\\
&\geq \text{tr}\left(A_{f,Q}(U^\dag Q U)\right)-dN,
\end{align*}
where in the first inequality we use Cauchy--Schwarz to bound $\text{tr}\left((C-A_{f,Q})(U^\dag Q U)\right)$.
Since this occurs with probability at least $1-\zeta$, we deduce that the expression in \eqref{eq:expectation_term} has the lower bound
\begin{align*}
&\Pr[E]\left(
\bbE_{W_f,U}\left[
\frac{1}{N}\text{tr}\left(A_{f,Q}(U^\dag Q U)\right)
\Bigm| \mathds{1}_E=1
\right]-d
\right)
-(1-\Pr[E])
\\
&\qquad \geq
\bbE_{W_f,U}\left[
\frac{1}{N}\text{tr}\left(A_{f,Q}(U^\dag Q U)\right)
\right]
-d-2(1-\Pr[E])
\\
&\qquad \geq \frac{1-\epsilon}{2}-d-2\zeta
\\
&\qquad = \frac{1-\epsilon}{4}.
\end{align*}
Since the term within the expectation is at most 1, the above inequality implies that there must be a probability of at least $\frac{1-\epsilon}{8}$ that
\begin{equation*}
\max_{C \in \mathcal{T}}
\frac{1}{N}\text{tr}\left(C(U^\dag Q U)\right) \geq \frac{1-\epsilon}{8},
\end{equation*}
as desired. 
\end{proof}
We now show that, for each fixed Hermitian contraction
$C$ and nonidentity Pauli $Q$, the normalized correlation
$N^{-1}\operatorname{tr}(CU^\dag Q U)$ concentrates
around zero.  
\begin{lemma}\label{lem:haar_concentration}
Let $C$ be a fixed Hermitian contraction, and
$Q \in \mathcal{P}_n$ a nonidentity Pauli operator.
Then for a Haar random $U$ and every $t\ge 0$, 
\begin{equation*}
\Pr\left[\frac{1}{N}\text{tr}\left(C(U^\dag Q U )\right) \geq t\right] \leq \text{exp}\left(\frac{-N^2t^2}{48}\right)
\end{equation*}
\end{lemma}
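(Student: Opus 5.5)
The plan is to use concentration of measure on the unitary group, in the form of the Gromov--Milman/Meckes log-Sobolev bound. For a Haar random $U \in U(N)$ and an $L$-Lipschitz function $F$ (with respect to the Frobenius/Hilbert--Schmidt metric), it states that
\[
\Pr\left[F(U) \geq \bbE F + s\right] \leq \exp\left(-\frac{N s^2}{12 L^2}\right).
\]
The natural $SU(N)$ version has constant $4$, while the full $U(N)$ version, handled via the determinant/phase decomposition as in Meckes--Meckes, has constant $12$. We apply this to
\[
F(U) = \frac{1}{N}\text{tr}\left(C U^\dag Q U\right).
\]

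The first step is to check that $F$ is real and centered. Because $C$ and $U^\dag Q U$ are both Hermitian, $\text{tr}(C U^\dag Q U)$ is real. By Haar invariance, $\bbE_U[U^\dag Q U]$ commutes with every unitary, so it equals $\frac{\text{tr}(Q)}{N} I$. Since $Q$ is a nonidentity Pauli, $\text{tr}(Q) = 0$, and therefore $\bbE F = 0$.

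The second step is the Lipschitz bound. For unitaries $U$ and $V$,
\[
U^\dag Q U - V^\dag Q V = (U-V)^\dag Q U + V^\dag Q (U-V).
\]
Using $|\text{tr}(XY)| \leq \lVert X\rVert_F \lVert Y\rVert_F$ together with $\lVert C\rVert \leq 1$ and $\lVert Q\rVert = 1$, we get
\[
|\text{tr}(C(U^\dag Q U - V^\dag Q V))| \leq 2\,\lVert C\rVert_F\,\lVert U - V\rVert_F .
\]
Then $\lVert C\rVert_F \leq \sqrt{N}$ gives $L = \frac{2}{\sqrt{N}}$ for $F$.

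Plugging $L = 2/\sqrt{N}$ into the concentration inequality gives
\[
\exp\left(-\frac{N s^2}{12 \cdot 4/N}\right) = \exp\left(-\frac{N^2 s^2}{48}\right),
\]
which is exactly the claimed bound. The main obstacle is choosing the correct form and constant of the concentration inequality for $U(N)$ rather than $SU(N)$. If only the $SU(N)$ statement (constant $4$) is used, we should decompose $U = e^{i\theta} V$ with $V \in SU(N)$. The function $F$ is invariant under global phases, since $U^\dag Q U$ is unchanged when $U$ is multiplied by a phase. So we can condition on $\theta$ and apply the $SU(N)$ bound directly, which gives an even better constant. Either route yields the stated tail bound for every $t \geq 0$.
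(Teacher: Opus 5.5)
Your proposal is correct and follows the paper's proof: the same function $N^{-1}\text{tr}(CU^\dag QU)$, the same Lipschitz constant $2/\sqrt{N}$ (from splitting $U^\dag QU - V^\dag QV$ and using $\lVert C\rVert_F\le\sqrt{N}$), centering via $\text{tr}(Q)=0$, and the Meckes--Meckes concentration inequality on $U(N)$ with constant $12$. Your additional $SU(N)$ route via global-phase invariance is also valid and would give a better constant, but it is not needed for the stated bound.
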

\begin{proof}
Define $h(U)=N^{-1}\operatorname{tr}(CU^\dag Q U)$. 
We see as before that using Hölder's inequality, we obtain
\begin{align*}
\lvert h(U)-h(V)\rvert
&\leq \frac{1}{N}
\left|\text{tr}(C(U-V)^\dag Q U)\right|
+\frac{1}{N}
\left|\text{tr}(C V^\dag Q(U-V))\right|\\
&\leq \frac{2}{N}\lVert U - V \rVert_F\lVert C \rVert_F\\
&\leq \frac{2}{\sqrt{N}} \lVert U - V \rVert_F.
\end{align*}
In the third inequality, we use that a Hermitian contraction has Frobenius norm at most $\sqrt{N}$. 
Hence, with respect to these two norms, the function $h$ has Lipschitz constant $L=\frac{2}{\sqrt{N}}$.
Note also that $\bbE_U[U^\dag Q U] = 0$ since $\text{tr}(Q) = 0$, implying that $\bbE_U[h(U)] = 0$. Applying Corollary 17 of \cite{MM13}, we obtain 
\begin{align*}
\Pr_U\left[h(U) \geq t\right] &= \Pr_U\left[h(U) \geq \bbE_U [h(U)] + t\right] \\&\leq \text{exp}\left(\frac{-Nt^2}{12L^2}\right)\\
&\leq \text{exp}\left(\frac{-N^2t^2}{48}\right),
\end{align*}
completing the proof. 
\end{proof}

\begin{proof}[Proof of Theorem \ref{thm:main_1}]
Lemma \ref{lem:rotated_pauli_correlation_net} shows that a random-oracle unitary synthesis algorithm $V_0, \dots, V_k$ and error at most $\epsilon$ has a net $\mathcal{T}$ of Hermitian contractions such that $\log_2|\mathcal T|\le\frac{C_0 Nk^2}{(1-\epsilon)^4}$ for a constant $C_0>0$, and 
\begin{equation*}
\Pr_U\left[\max_{C \in \mathcal{T}}
\frac{1}{N}\text{tr}\left(C(U^\dag Q U)\right) \geq \frac{1-\epsilon}{8}\right] \geq \frac{1-\epsilon}{8}. 
\end{equation*}
On the other hand, Lemma \ref{lem:haar_concentration}
shows that, for each fixed $C\in\mathcal T$,
\[
\mathbb P_U\left[
\frac{1}{N}\operatorname{tr}(CU^\dag Q U)
\ge\frac{1-\epsilon}{8}
\right]
\le \text{exp}\left(-cN^2(1-\epsilon)^2\right)
\]
for a constant $c>0$. 
We conclude that by a union bound
\begin{equation*}
\text{exp}\left(
\frac{C_0 Nk^2}{(1-\epsilon)^4}
-cN^2(1-\epsilon)^2
\right)
\ge \frac{1-\epsilon}{8}.
\end{equation*}
This equation is only possible when
\begin{equation*}
k^2(1-\eps)^{-6} =\Omega(N),
\end{equation*}
as desired. 
\end{proof}

\subsection{Proof of Theorem \ref{thm:main_2}}
We will now show our second main theorem, the lower bound against ROM-PRUs.
The crucial steps are very similar, as we still create a small net with good trace correlation---but rather than a single Pauli $Q$, now the net includes pairs of Pauli operators $Q$ and contractions $C$. 
\begin{lemma}\label{lem:rotated_pauli_correlation_list}
If $V_0, \dots, V_k \in U(M)$ are a ROM-PRU satisfying
the conditions of Theorem \ref{thm:main_2}, then for every
$\beta \in (0,1)$, there is a net $\mathcal{T}$ consisting
of pairs $(Q,C)$, where $Q \in \mathcal{P}_n \setminus \{I\}$
and $C$ is a Hermitian contraction, satisfying
\[
\log_2|\mathcal{T}|
=O\left(\frac{N k^2}{\beta^2(1-\epsilon)^4}+\log_2 N\right)
\]
such that
\begin{equation*}
\Pr_{U \sim \nu}\left[
\max_{(Q,C) \in \mathcal{T}}
\frac{1}{N}\text{tr}\left(C(U^\dag Q U)\right)
\geq \frac{1-\epsilon}{8}
\right] \geq 1-\beta,
\end{equation*}
where $\nu$ is a standard non-adaptive unitary $T$-design with error $\delta$,
with $T=\Omega(N^{1+\gamma})$.
\end{lemma}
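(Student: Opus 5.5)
The plan is to rerun the argument of Lemma~\ref{lem:rotated_pauli_correlation_net}, changing two things. First, the ROM-PRU guarantee holds pointwise: $D_\diamond(\Phi_1,\Phi_2)\le\eps$ for every sample of the coupling, not just in expectation. Second, the conclusion must now hold with probability $1-\beta$ over $U\sim\nu$, not merely with probability $\Omega(1-\eps)$. Because of this, we cannot fix a single Pauli $Q$. Instead we take a small random family of Paulis and let the net $\mathcal T$ be the union of the nets from Corollary~\ref{cor:net}, one for each Pauli in the family.

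\textbf{Step 1 (pointwise good Paulis).} Fix a sample $(f,U)$ of the coupling, where $U$ has marginal $\nu$. Apply Lemma~\ref{lem:dirty_to_clean} with $A_{f,P}=J^\dag W_f^\dag(P\otimes I)W_fJ$. It gives
\[
\bbE_P\bigl[g(P)\bigr]\ge1-\eps,
\qquad\text{where } g(P):=N^{-1}\operatorname{tr}\bigl(A_{f,P}U^\dag PU\bigr)\le 1 .
\]
The identity contributes weight only $N^{-2}$, and $1-\eps=\omega(N^{-2})$. Combining these with the bound $g\le1$ and a reverse Markov argument, for every sample $(f,U)$ a uniformly random nonidentity $P$ satisfies $g(P)\ge(1-\eps)/4$ with probability at least $(1-\eps)/4$.

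\textbf{Step 2 (random family of Paulis).} Sample $Q_1,\dots,Q_K$ i.i.d.\ uniform from $\mathcal P_n\setminus\{I\}$, with $K=O\bigl(\log(1/\beta)/(1-\eps)\bigr)$. For each fixed sample $(f,U)$, the probability that no $Q_j$ is good is at most $(1-(1-\eps)/4)^K\le\beta/3$. This estimate is uniform in $(f,U)$, so it survives averaging over the coupling.

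\textbf{Step 3 (nets).} For each $j$, apply Corollary~\ref{cor:net} to the $2k$-query circuit $W_f^\dag(Q_j\otimes I)W_f$. Every $A_{f,Q_j}$ is Hermitian, so the corollary gives a net $\mathcal T_j$ of Hermitian contractions. Use parameters $d=(1-\eps)/8$ and $\zeta=\beta/(3K)$. Then $\mathcal T=\bigcup_j\{Q_j\}\times\mathcal T_j$. Its size satisfies
\[
\log_2|\mathcal T|\le \log_2 K+\max_j\log_2|\mathcal T_j| .
\]
Here $\log_2 K$ is absorbed into the additive $\log_2N$ term, since only $\log(1/\beta)$ and $(1-\eps)^{-1}=O(N^2)$ enter $K$.

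\textbf{Step 4 (conclusion).} By a union bound, except with probability $\beta/3+K\zeta\le\beta$ over $f$, $U$, and the sampled $Q$'s, two things hold simultaneously: some $Q_j$ is good for $(f,U)$, and the net $\mathcal T_j$ contains $C$ with $\lVert C-A_{f,Q_j}\rVert_F\le d\sqrt N$. In that case Cauchy--Schwarz, as in Lemma~\ref{lem:rotated_pauli_correlation_net}, gives
\[
N^{-1}\operatorname{tr}\bigl(CU^\dag Q_jU\bigr)\ge\frac{1-\eps}{4}-d=\frac{1-\eps}{8}.
\]
Finally, fix a choice of the $Q_j$ and of the randomness inside Corollary~\ref{cor:net} that achieves the average. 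The marginal of $U$ is $\nu$, so this yields the probability statement over $U\sim\nu$. Note that the design property with $T=\Omega(N^{1+\gamma})$ plays no role in this lemma; it is used only in the subsequent anti-concentration step.

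\textbf{Main obstacle.} The difficulty is the parameter bookkeeping in Step 3. Taking $\zeta=\beta/(3K)$ naively inserts a factor $K^2=O\bigl(\log^2(1/\beta)/(1-\eps)^2\bigr)$ into $\log|\mathcal T_j|$. That would give $(1-\eps)^{-6}$ rather than the claimed $\beta^{-2}(1-\eps)^{-4}$. To avoid the union bound over $j$, I would first try the following. Choose the index $j$ uniformly at random among the good indices. Then argue that, averaged over the i.i.d.\ choice of $Q$'s, each net's failure event has probability $\zeta$ independently of $(f,U)$, so that the failure of the selected net costs only $O(\zeta/(1-\eps))$. With that, a choice $\zeta=\Theta(\beta(1-\eps))$ should recover the stated bound.
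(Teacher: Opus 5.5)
\textbf{Verdict: partial gap.} Steps 1--4 are correct, but they prove a weaker bound than the one stated. The fix you sketch is the right idea, and it is the paper's key step, but it is only partly justified as written.

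\textbf{The loss is misdiagnosed.} Take $\zeta=\beta/(3K)$, $d=(1-\eps)/8$ and $K=\Theta(\log(3/\beta)/(1-\eps))$. Then $\zeta^{-2}d^{-2}=O\bigl(K^2\beta^{-2}(1-\eps)^{-2}\bigr)=O\bigl(\log^2(3/\beta)\,\beta^{-2}(1-\eps)^{-4}\bigr)$. So the union bound costs an extra $\log^2(1/\beta)$, not $(1-\eps)^{-2}$; the target bound already corresponds to $\zeta=\Theta(\beta(1-\eps))$.

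\textbf{The proposed fix needs a different justification.} Your phrase ``each net's failure event has probability $\zeta$ independently of $(f,U)$'' is not right. The failure event is an event about $f$, and the index you select depends on $(f,U)$ through the good set. The rigorous version goes as follows. Fix a net $\mathcal T_Q$ for \emph{every} nonidentity Pauli $Q$, using Corollary~\ref{cor:net} with $\zeta=\Theta(\beta(1-\eps))$; the key point is that the net depends only on $Q$. Let $R(f)$ be the fraction of nonidentity $Q$ with $\min_{C\in\mathcal T_Q}\lVert C-A_{f,Q}\rVert_F>d\sqrt N$. Swapping the order of averaging gives $\bbE_f[R(f)]\le\zeta$. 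Your Step 1 shows that the good set $S_{f,U}$ has density at least $(1-\eps)/4$ for every $(f,U)$. Hence a uniformly random good Pauli is uncovered with probability at most $4R(f)/(1-\eps)$, and so at most $4\zeta/(1-\eps)=O(\beta)$ overall. The paper instead applies Markov to get $\Pr_f[R(f)\ge(1-\eps)/4]\le\beta$. When $R(f)<(1-\eps)/4$, the good set cannot lie inside the uncovered set, so some good Pauli is covered.

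\textbf{The random family is unnecessary.} Once you argue this way, sampling $Q_1,\dots,Q_K$ buys nothing. You can take $\mathcal T=\bigcup_{Q\neq I}\{Q\}\times\mathcal T_Q$; the union over all $N^2-1$ Paulis costs only $\log_2(N^2-1)$ bits, which is exactly the additive $\log_2 N$ term in the statement. That is the paper's proof. Your subsampled variant can also be made to work: choosing uniformly among the good indices gives a uniform element of $S_{f,U}$ by exchangeability. It replaces $\log_2 N$ by $\log_2 K$, which is no real gain since $1-\eps=\omega(N^{-2})$.
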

\begin{proof}
Say that $V_0, \dots, V_k \in U(M)$ is a ROM-PRU
satisfying the conditions of Theorem \ref{thm:main_2}.
In particular, for
\begin{equation*}
    W_f = V_k\mathcal{O}_f \dots \mathcal{O}_f V_0,
\end{equation*}
we know there is a coupling of the distributions of
$\pi \circ \Phi_{W_f} \circ i$ and $\Phi_U$, for
$U \sim \nu$, such that
\begin{equation*}
D_\diamond(\pi \circ \Phi_{W_f} \circ i, \Phi_U)
\leq \epsilon.
\end{equation*}
In the remainder of this proof, we consider the joint distribution $(W_f, U)$ that satisfies these properties. 
Writing $A_{f,P}= J^\dag W_{f}^\dag (P \otimes I) W_f J$ for a uniformly random Pauli $P \in \mathcal{P}_n$, we can use Lemma \ref{lem:dirty_to_clean} to write
\begin{align*}
\bbE_{P \in \mathcal{P}_n} \left[
\frac{1}{N}\text{tr}\left(A_{f,P}(U^\dag P U)\right)
\right]
&\geq 1-D_\diamond(\pi \circ \Phi_{W_f} \circ i, \Phi_U)\\
&\geq 1-\epsilon,
\end{align*}
Since the guarantee holds almost surely under the coupling,
we do not need to take expectations over $U$ and $W_f$.
Excluding $P=I$ as in
Lemma \ref{lem:rotated_pauli_correlation_net}, we obtain
\begin{align*}
\bbE_{Q \in \mathcal{P}_n \setminus \{I\}}\left[
\frac{1}{N}\text{tr}\left(A_{f,Q}(U^\dag Q U)\right)
\right]
&\geq \frac{(1-\epsilon)-N^{-2}}{1-N^{-2}}\\
&\geq \frac{1-\epsilon}{2},
\end{align*}
using the same assumption on $1-\epsilon$.
Each term in the expectation has an upper bound of 1 by
\begin{equation*}
\left|
\frac{1}{N}\text{tr}\left(C(U^\dag Q U)\right)
\right|
\leq \frac{1}{N}\lVert Q\rVert_1\lVert C\rVert \leq 1.
\end{equation*}
It therefore follows that there is a subset of Paulis
$S_{f,U} \subseteq \mathcal{P}_n \setminus \{I\}$ satisfying
$\frac{|S_{f,U}|}{N^2-1} \geq \frac{1-\epsilon}{4}$
such that for any $Q \in S_{f,U}$,
\begin{equation*}
\frac{1}{N}\text{tr}\left(A_{f,Q}(U^\dag Q U)\right)
\geq \frac{1-\epsilon}{4},
\end{equation*}
or else the expectation would be strictly less than
$\frac{1-\epsilon}{2}$.
For every Pauli $Q \in \mathcal{P}_n \setminus \{I\}$,
we now use Corollary \ref{cor:net} on the circuit
$W_f^\dag (Q \otimes I) W_f$ to produce a net
$\mathcal{T}_Q$ of size
$\log_2|\mathcal{T}_Q|
= O\left(\frac{N(2k)^2}{\zeta^2d^2}\right)
= O\left(\frac{N k^2}{\zeta^2d^2}\right)$
such that
\begin{equation*}
\Pr_{f}\left[
\min_{C \in \mathcal{T}_Q}
\lVert C-A_{f,Q}\rVert_F > d\sqrt{N}
\right] \leq \zeta.
\end{equation*}
We choose the parameters
$\zeta = \frac{\beta(1-\epsilon)}{4}$
and $d = \frac{1-\epsilon}{8}$. 

Let $R(f)$ be the fraction of nonidentity Paulis for which
$A_{f,Q}$ is not covered by $\mathcal{T}_Q$,
\begin{equation*}
R(f) =
\Pr_{Q \in \mathcal{P}_n \setminus \{I\}}\left[
\min_{C \in \mathcal{T}_Q}
\lVert C-A_{f,Q}\rVert_F > d\sqrt{N}
\right].
\end{equation*}
We would like to establish that with reasonably high probability over $f$, $R(f)$ is small, i.e. most Pauli operators $Q$ produce operators $A_{f, Q}$ that are near their net $\mathcal{T}_Q$. 
However, 
\begin{align*}
\bbE_f[R(f)] &= \bbE_f\left[\Pr_{Q \in \mathcal{P}_n \setminus \{I\}}\left[
\min_{C \in \mathcal{T}_Q}
\lVert C-A_{f,Q}\rVert_F > d\sqrt{N}
\right]\right]\\
&= \bbE_{Q \in \mathcal{P}_n \setminus \{I\}}\left[\Pr_f\left[
\min_{C \in \mathcal{T}_Q}
\lVert C-A_{f,Q}\rVert_F > d\sqrt{N}
\right]\right]\\
&\leq \zeta. 
\end{align*}
By Markov's inequality,
\begin{equation*}
\Pr_f\left[R(f) \geq \frac{1-\epsilon}{4}\right]
\leq \frac{4\zeta}{1-\epsilon}
= \beta.
\end{equation*}
Thus, with probability at least $1-\beta$ over $f$,
strictly less than an $(1-\epsilon)/4$ fraction of the
nonidentity Paulis are not sufficiently near their net. 
Since the set $S_{f,U}$ occupies a fraction
$\frac{|S_{f,U}|}{N^2-1} \geq \frac{1-\epsilon}{4}$,
there is a $Q \in S_{f,U}$ and
$C \in \mathcal{T}_Q$ such that
$\lVert C-A_{f,Q}\rVert_F \leq d\sqrt{N}$.
For this pair, as before Cauchy-Schwarz gives
\begin{align*}
\frac{1}{N}\text{tr}\left(C(U^\dag Q U)\right)
&\geq
\frac{1}{N}\text{tr}\left(A_{f,Q}(U^\dag Q U)\right)
-\frac{1}{N}
\lVert C-A_{f,Q}\rVert_F
\lVert U^\dag Q U\rVert_F\\
&\geq \frac{1-\epsilon}{4}-d\\
&= \frac{1-\epsilon}{8}.
\end{align*}
We now define the total list
\begin{equation*}
\mathcal{T}
= \bigcup_{Q \in \mathcal{P}_n \setminus \{I\}}
\left(\{Q\}\times\mathcal{T}_Q\right).
\end{equation*}
We conclude that
\begin{equation*}
\Pr_{U \sim \nu}\left[
\max_{(Q,C)\in\mathcal{T}}
\frac{1}{N}\text{tr}\left(C(U^\dag Q U)\right)
\geq \frac{1-\epsilon}{8}
\right] \geq 1-\beta.
\end{equation*}
There are $N^2-1$ nonidentity Paulis, so
\[
\log_2|\mathcal T|
\le\log_2(N^2-1)+\max_{Q\ne I}\log_2|\mathcal T_Q|
=O\left(\frac{N k^2}{\beta^2(1-\epsilon)^4}+\log_2 N\right).
\]
Here we substituted $\zeta=\beta(1-\epsilon)/4$
and $d=(1-\epsilon)/8$.
The union contributes at most $2\log_2 N$ bits to the
logarithm of the list size. 
\end{proof}

\begin{lemma}[
    {\cite[Corollary A.4]{BO24}}]
\label{lem:quantum_threshold_decision}
Let $M_1,\ldots,M_K$ be operators on a finite-dimensional
Hilbert space $\mathcal H$ satisfying $0\le M_j\le I$.
Let $0<\xi,\eta<1/2$ and $0\le b\le 1-\xi$.
There exists a two-outcome measurement $\{B,I-B\}$ on
$\mathcal H^{\otimes t}$, where
\[
t=O\!\left(\frac{\log_2(K/\eta)}{\xi^2}\right),
\]
such that, for every density operator $\rho$ on $\mathcal H$,
\begin{align*}
\max_{j\in[K]}\operatorname{tr}(M_j\rho)\ge b+\xi
&\quad\Longrightarrow\quad
\operatorname{tr}(B\rho^{\otimes t})\ge 1-\eta,\\
\max_{j\in[K]}\operatorname{tr}(M_j\rho)\le b
&\quad\Longrightarrow\quad
\operatorname{tr}(B\rho^{\otimes t})\le \eta.
\end{align*}
\end{lemma}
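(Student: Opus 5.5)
The plan is to combine per-hypothesis amplification with a quantum OR lemma over all $K$ hypotheses, so that the copy count grows only logarithmically in $K$. For each $j\in[K]$ and a block of $t_0$ copies, let $B_j^{(t_0)}$ be the operator ``measure $\{M_j,I-M_j\}$ independently on each of the $t_0$ copies and accept iff the empirical frequency of the first outcome is at least $b+\xi/2$.'' Since the single-copy measurements act on distinct tensor factors, $B_j^{(t_0)}$ is a well-defined effect with $0\le B_j^{(t_0)}\le I$. On a product input $\rho^{\otimes t_0}$ the outcomes are i.i.d.\ Bernoulli with mean $p_j=\operatorname{tr}(M_j\rho)$. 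Hoeffding's inequality therefore gives the following, with error $\eta'$:
\begin{itemize}
\item if $p_j\ge b+\xi$, then $\operatorname{tr}(B_j^{(t_0)}\rho^{\otimes t_0})\ge 1-\eta'$;
\item if $p_j\le b$, then $\operatorname{tr}(B_j^{(t_0)}\rho^{\otimes t_0})\le \eta'$.
\end{itemize}
This needs $t_0=O(\log(1/\eta')/\xi^2)$. I will take $\eta'=\eta/\poly(K)$, so that $t_0=O(\log(K/\eta)/\xi^2)$ as required.

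Next I apply a quantum OR lemma (Harrow--Lin--Montanaro style) to the effects $B_1^{(t_0)},\dots,B_K^{(t_0)}$ on the fixed state $\sigma=\rho^{\otimes t_0}$. The lemma gives a single two-outcome measurement on $\sigma$ (a random subsequence of the projective versions of the $B_j$, implemented via Naimark dilation) with the following guarantees:
\begin{itemize}
\item \emph{YES case} (some $j$ has $\operatorname{tr}(B_j\sigma)\ge 1-\eta'$): it accepts with probability at least an absolute constant $c$.
\item \emph{NO case} (all $\operatorname{tr}(B_j\sigma)\le \eta'$): it accepts with probability at most $O(K\eta')$.
\end{itemize}
The two hypotheses of the lemma map onto these cases, since $\max_j p_j\ge b+\xi$ puts us in the YES case and $\max_j p_j\le b$ in the NO case. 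Because the OR lemma uses only one copy of $\sigma$, this step costs no extra copies.

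Finally, I boost the constant $c$ to $1-\eta$. I repeat the block construction on $r=O(\log(1/\eta))$ disjoint groups of $t_0$ copies and set $B$ to accept iff any group accepts.
\begin{itemize}
\item In the YES case, the rejection probability is at most $(1-c)^r\le\eta$.
\item In the NO case, the acceptance probability is at most $r\cdot O(K\eta')$, which is at most $\eta$ once $\eta'=\eta/(CK\log(1/\eta))$.
\end{itemize}
The choice $b\le 1-\xi$ guarantees that the threshold $b+\xi/2$ is meaningful.

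The main obstacle is the copy count. As sketched, the construction uses $t=r t_0=O(\log(1/\eta)\log(K/\eta)/\xi^2)$ copies, which loses a $\log(1/\eta)$ factor against the stated bound $O(\log(K/\eta)/\xi^2)$. To remove it, I would replace the generic OR lemma by the sequential ``threshold search'' analysis of \cite{BO24}. That analysis measures each $M_j$ on the running multi-copy state with a carefully chosen noisy threshold and tracks state damage with a quantum union bound. Its YES-case success probability is already $1-\eta$ rather than a constant, so no outer repetition is needed. The delicate part is bounding the cumulative disturbance caused by the rejecting measurements in the NO direction; this is exactly the content of their Corollary A.4, which I would invoke for the sharp constant.
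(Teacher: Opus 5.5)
Your proposal ends up exactly where the paper does: the paper offers no argument beyond citing \cite[Corollary A.4]{BO24}, and you likewise reach the stated copy count $t=O(\log(K/\eta)/\xi^2)$ only by invoking that corollary. Your self-contained route (Hoeffding amplification of each $M_j$, a quantum OR lemma over the $K$ amplified effects, then $O(\log(1/\eta))$-fold repetition) is sound but, as you note, gives only $t=O(\log(1/\eta)\log(K/\eta)/\xi^2)$; used in the proof of Theorem~\ref{thm:main_2} with $\eta=(1-\delta)/8$, that weaker version would still give the ``in particular'' conclusion, at the cost of an extra $\log\frac{8}{1-\delta}$ factor in the displayed tradeoff.
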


\begin{proof}[Proof of Theorem \ref{thm:main_2}]
Choose $\beta=\frac{1-\delta}{8}$. By
Lemma \ref{lem:rotated_pauli_correlation_list}, there is a fixed list $\mathcal T$ of pairs $(Q,C)$, with $Q$
a nonidentity Pauli and $C$ a Hermitian contraction,
such that
\[
\log_2|\mathcal T|
=O\left(
\frac{N k^2}{(1-\epsilon)^4(1-\delta)^2}
+\log_2 N
\right),
\]
and
\begin{equation}\label{eq:design_correlation_event}
\mathbb P_{U\sim\nu}\left[
\max_{(Q,C)\in\mathcal T}
\frac{1}{N}\text{tr}(CU^\dag Q U)
\ge \frac{1-\epsilon}{8}
\right]
\ge 1-\beta. 
\end{equation}
We define a measurement for each of the trace correlations in the above expression. 
We again consider the Choi state,
\begin{align*} \ket{U}&=\frac{1}{\sqrt{N}}(U\otimes I)
\sum_{x\in\{0,1\}^n}\ket{x} \otimes \ket{x} \\
&= \frac{1}{\sqrt{N}}(I\otimes U^T)
\sum_{x\in\{0,1\}^n}\ket{x} \otimes \ket{x} .
\end{align*}
We define $\rho_U = \ket{U}\bra{U}$. Applying $U^{\otimes t}$ in parallel to the first halves of $t$ maximally entangled pairs prepares $\rho_U^{\otimes t}$.
For each $(Q,C)\in\mathcal T$, define
\begin{equation*}
M_{Q,C}=\frac{I+Q\otimes C^T}{2}.
\end{equation*}
Since $Q\otimes C^T$ is a Hermitian contraction, our operator is Hermitian and $0\le M_{Q,C}\le I$. Moreover, we can expand
\begin{align*}
\text{tr}((Q \otimes C^T)\rho_U)&= \text{tr}((Q \otimes I)\rho_U(I \otimes C^T))\\
&= \frac{1}{N}\operatorname{tr}\left[
(QU \otimes I)
\left(\sum_{x,y \in \{0,1\}^n}
\ket{x}\bra{y} \otimes \ket{x}\bra{y}\right)
\left(I \otimes \overline{UC}\right)
\right]\\
&= \frac{1}{N}\sum_{x, y \in \{0, 1\}^n} (QU)_{yx}(UC)_{yx}^*\\
&=\frac{1}{N}\text{tr}(CU^\dag Q U),
\end{align*}
where the first equality uses cyclicity of trace and the last uses that $C$ is Hermitian so $(UC)^\dag = CU^\dag$. 
It follows that 
\begin{equation*}
\text{tr}(M_{Q, C} \rho_U) = \frac{1}{2} + \frac{1}{2N}\text{tr}(CU^\dag Q U). \end{equation*}
Hence, these observables correspond directly to the trace correlations we are analyzing.
We now apply Lemma \ref{lem:quantum_threshold_decision} to assemble one measurement $B$ that will operationalize the expression $\max_{(Q,C)\in\mathcal T}
\frac{1}{N}\text{tr}(CU^\dag Q U)$. Choose $\xi=\frac{1-\epsilon}{32}$, $b=\frac{1}{2}+\xi$, $\eta=\frac{1-\delta}{8}$, and $K= |\mathcal{T}|$.

The lemma gives a measurement $\{B,I-B\}$ on $t=O\left(\frac{\log_2(K/\eta)}{(1-\epsilon)^2}\right)$ Choi states, satisfying
\begin{align*}
\max_{(Q,C)\in\mathcal T}
\frac{1}{N}\text{tr}(CU^\dag Q U)\ge\frac{1-\epsilon}{8}
&\implies
\operatorname{tr}(B\rho_U^{\otimes t})\ge 1-\eta,\\
\max_{(Q,C)\in\mathcal T}
\frac{1}{N}\operatorname{tr}(CU^\dag Q U)\le \frac{1-\epsilon}{16}
&\implies
\operatorname{tr}(B\rho_U^{\otimes t})\le \eta.
\end{align*}
These implications follow by applying the lemma with
the specified parameters and using the identity
\[
\operatorname{tr}(M_{Q,C}\rho_U)
=\frac12+\frac{1}{2N}
\operatorname{tr}(CU^\dag Q U).
\]
As per the lemma, this implication is independent of $U$, since the implication on observables is independent of the Choi state the measurement is applied on. 

Using Equation \ref{eq:design_correlation_event}, we obtain that
\begin{align*}
\mathbb E_{U\sim\nu}
\operatorname{tr}(B\rho_U^{\otimes t})
&\ge (1-\beta)(1-\eta)\\
&=\left(1-\frac{1-\delta}{8}\right)^2\\
&\geq 1 - \frac{1-\delta}{4}. 
\end{align*}
since with probability $1 - \beta$, the lower bound of $1-\eta$ follows by the guarantee of the lemma. 
On the other hand, Lemma \ref{lem:haar_concentration}
gives, for each fixed $(Q,C)\in\mathcal T$,
\begin{equation*}
\mathbb P_{U\sim\mathrm{Haar}}\left[
\frac{1}{N}\operatorname{tr}(CU^\dag Q U)>\frac{1-\epsilon}{16}
\right]
\leq \text{exp}\left(-cN^2\left(1-\epsilon\right)^2\right)
\end{equation*}
for some $c>0$.
A union bound yields
\begin{equation*}
\Pr\left[\max_{(Q,C)\in\mathcal T}
\frac{1}{N}\operatorname{tr}(CU^\dag Q U)\le \frac{1-\epsilon}{16}\right] \geq 1 - K\text{exp}\left(-cN^2(1-\epsilon)^2\right),
\end{equation*}
and as a consequence
\begin{align*}
\bbE_{U \sim \mathrm{Haar}}\left[\text{tr}(B\rho_U^{\otimes t})\right] &\leq \eta + K\text{exp}\left(-cN^2(1-\epsilon)^2\right)\\
&\leq \frac{1-\delta}{8} + K\text{exp}\left(-cN^2(1-\epsilon)^2\right),
\end{align*}
since $\text{tr}(B\rho_U^{\otimes t}) \leq 1$ is greater than $\eta$ with probability at most $K\text{exp}\left(-cN^2(1-\epsilon)^2\right)$. 

We may assume that $t\leq T$, where
$t=O\left(\frac{\log_2(K/\eta)}{\xi^2}\right)$
is given by Lemma \ref{lem:quantum_threshold_decision}. 
Indeed, otherwise our upper bound on $K = |\mathcal{T}|$ yields
\begin{align*}
t
&=O\left(
\frac{N k^2}{(1-\delta)^2(1-\epsilon)^6}
+\frac{\log_2 N+\log_2(8/(1-\delta))}{(1-\epsilon)^2}
\right)\\
&= O\left(
\frac{N k^2}{(1-\delta)^2(1-\epsilon)^6}\right).
\end{align*}
Since otherwise $t>T$ and
$T=\Omega(N^{1+\gamma})$, rearranging gives
\[
k^2
=\Omega\left(
(1-\delta)^2(1-\epsilon)^6\frac{T}{N}
\right)
=\Omega\left(
(1-\delta)^2(1-\epsilon)^6 N^{\gamma}
\right).
\]
which already implies the desired bound
\[
k^2=\Omega\left(
(1-\delta)^2(1-\epsilon)^6N^{\min(1, \gamma)}
\right),
\]

The $t$ forward queries are parallel and followed by the joint measurement
$\{B,I-B\}$. Since $t\leq T$, the non-adaptive design guarantee implies
\begin{equation*}
\left|
\mathbb E_{U\sim\nu}
\operatorname{tr}(B\rho_U^{\otimes t})
-
\mathbb E_{U\sim\mathrm{Haar}}
\operatorname{tr}(B\rho_U^{\otimes t})
\right|
\le\delta.
\end{equation*}
Hence, we can chain the upper and lower bounds on $\mathbb E_{U\sim\nu}
\operatorname{tr}(B\rho_U^{\otimes t})$ to obtain 
\begin{equation*}
1 - \frac{1-\delta}{4}
\le
\delta+\frac{1-\delta}{8}+K\text{exp}\left(-cN^2(1-\epsilon)^2\right).
\end{equation*}
Rearranging gives
\[
K\text{exp}\left(-cN^2(1-\epsilon)^2\right)
\ge \frac{5(1-\delta)}{8},
\]
so
\[
\log_2 K
\ge cN^2(1-\epsilon)^2
-\log_2\frac{8}{5(1-\delta)}.
\]
Combining this with our upper bound on $\log_2 K$ gives
\[
N^2(1-\epsilon)^2
=O\left(
\frac{N k^2}{(1-\delta)^2(1-\epsilon)^4}
+\log_2 N+\log_2\frac{8}{5(1-\delta)}
\right).
\]
Since $k\ge1$ and $0<1-\epsilon,1-\delta\le1$,
the logarithmic terms are absorbed into the first term.
Rearranging yields
\[
k^2(1-\delta)^{-2}(1-\epsilon)^{-6}
=\Omega(N).
\]

\end{proof}

\section{An \texorpdfstring{$O(N)$}{O(N)}-query secure ROM-PRU construction} \label{sec:linear_design}
In this section, we will demonstrate a ROM-PRU construction that is secure against $O(N)$ queries. 
The previously best known result was security against $O(\sqrt{N})$ queries. 
Combined with Theorem \ref{thm:main_2}, this provides a tight characterization of the best possible designs achievable in the ROM-PRU model.
Unfortunately, the construction we provide here involves fixed unitaries $H_1, \dots, H_n$ that are not efficiently implementable. 
However, we believe this result provides evidence that the current $O(\sqrt{N})$-query design barrier is primarily a technical barrier, even when the fixed unitaries $H_i$ are required to be efficient. 

We now define our unitary design construction. For any tuple of $n$ unitaries $H := (H_1, \ldots, H_n)$ which each act on $n$ qubits, we define the unitary design $\mu_H$ as follows:
\begin{enumerate}
    \item Sample $n+1$ random functions $f_i: [N] \to [N]$.
    \item Output $U = F_n H_n \ldots F_1 H_1 F_0$, where $F_j = \sum_{x \in [N]} e^{2 \pi i f_j(x)/N} \ketbra{x}{x}$.
\end{enumerate}
\begin{theorem}\label{thm:linear_design}
With probability at least $1-2^{-\Omega(n^2)}$ over the choice of Haar-random unitaries $H_1, \ldots, H_n$, the ensemble $\mu_{(H_1, \ldots, H_n)}$ is a $2^{-n^2/3}$-approximate strong adaptive unitary $(N/8)$-design in the sense of Definition~\ref{def:strong_adaptive_design}.
\end{theorem}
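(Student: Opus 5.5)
We follow the representation-theoretic strategy from the technical overview. The first step is a reduction. Any $T$-query strong adaptive algorithm has acceptance probability that is a linear functional of
\[
\bbE_U[U^{\otimes a}\otimes \overline{U}^{\otimes b}]
\]
with $a+b\le T$. This holds because queries to $U^\dagger$, $U^T$, $\overline U$ can each be rewritten in terms of $U$ or $\overline U$ factors, and controlled queries are handled by the global-phase invariance of $\mu_H$ via the reduction of \cite{TW25}. Under global phase averaging only the terms with $a=b=q\le T$ survive. By mixed Schur--Weyl duality, $\phi^N_q$ decomposes into irreps $\rho_{\lambda,\nu}$ with $\lambda,\nu\vdash r\le q$, and Haar averages kill every nontrivial irrep. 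So the distinguishing advantage is controlled by
\[
\sum_{r\le T}\sum_{(\lambda,\nu)\ne(\emptyset,\emptyset)} \mathrm{mult}_{\lambda,\nu}\,\bigl\lVert \bbE_{F}[\rho_{\lambda,\nu}(F_nH_n\cdots H_1F_0)]\bigr\rVert.
\]
Here we use the operator norm, and the advantage is bounded by the norm of the difference of the moment operators. Rather than tracking multiplicities, I would bound the diamond norm of $\bbE_{\mu_H}[\phi_q]-\bbE_{\mathrm{Haar}}[\phi_q]$ directly: it is at most $\dim$ of the relevant space times an operator-norm bound, or I would use a trace-norm version with the Frobenius norm of each irrep block. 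A crude factor like $N^{4T}$ is too large, so the per-irrep bound must be phrased as a Frobenius norm on each isotypic block.

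By independence, $\bbE[\rho(U)]=P\rho(H_n)P\cdots P\rho(H_1)P$, where $P=P_{\lambda,\nu}=\bbE_F[\rho_{\lambda,\nu}(F)]$ is the projector onto the vectors fixed by the diagonal phase group. The random functions take values in $[N]$, so this group is $\mathbb{Z}_N^N$, and for $r<N$ it acts like the full torus on degree-$\le T$ weights. Hence $P$ projects onto the zero-weight space of $\rho_{\lambda,\nu}$, whose dimension is the number of rational tableaux with $w(T)=w(T')$. Next we take the expectation over Haar $H_i$ of
\[
\lVert P\rho(H_n)P\cdots\rho(H_1)P\rVert_F^2.
\]
Using independence and the fact that $\bbE_H[\rho(H)P\rho(H)^\dagger]=\frac{\operatorname{rank}P}{\dim\rho}I$ for nontrivial irreducible $\rho$ (Schur's lemma), each layer contributes a factor $\operatorname{rank}P/\dim\rho$. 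The Frobenius-squared bound is then $\operatorname{rank}P\cdot(\operatorname{rank}P/\dim\rho)^n$. This is the key estimate.

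The main obstacle is showing that
\[
\frac{\operatorname{rank}P_{\lambda,\nu}}{\dim\rho_{\lambda,\nu}}\le 2^{-cn}
\]
uniformly over all nontrivial $(\lambda,\nu)$ with $|\lambda|=|\nu|=r\le N/8$, with $c$ large enough. The $n$ factors then give $2^{-cn^2}$, which must beat both the number of irreps (at most $p(T)^2\le 2^{O(\sqrt N)}$) and the multiplicities. Done naively, these overwhelm $2^{-n^2}$. So I would instead do the sum over irreps inside the expectation, i.e.\ work with the full $\phi_q$ restricted to the orthogonal complement of invariants, and compare zero-weight dimension to total dimension there.

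For the ratio itself, I would compare the zero-weight multiplicity with the total via the character formula. Evaluating $s_{\lambda,\nu}$ as an average over the torus gives $\operatorname{rank}P=\bbE_{\theta}[s_{\lambda,\nu}(e^{i\theta})]$, and I would bound this by $\dim\rho\cdot\max$ over typical torus elements of $|s_{\lambda,\nu}/\dim|$. Equivalently, a random diagonal phase is already a good "$1/\mathrm{poly}$-mixer" on each nontrivial irrep when $r\le N/8$; the condition $T\le N/8$ is exactly where this ratio stays bounded away from $1$ by a factor like $1/2$ per layer, not only $2^{-cn}$. If only a constant ratio per layer is available, I would take more layers in the analysis; but with $n$ layers a ratio $\le 2^{-n/3}$ is needed. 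I expect to obtain it from the fact that zero-weight tableaux form at most a $N^{-\Omega(1)}$ fraction once $r\ge1$.

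Finally, Markov's inequality over $H$ converts the expected bound $2^{-\Omega(n^2)}$ into the stated $1-2^{-\Omega(n^2)}$ probability of error at most $2^{-n^2/3}$.
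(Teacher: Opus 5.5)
Your structural steps agree with the paper: the factorization $\widehat{\mu_H}(\lambda,\nu)=P_{\lambda,\nu}\rho_{\lambda,\nu}(H_n)P_{\lambda,\nu}\cdots\rho_{\lambda,\nu}(H_1)P_{\lambda,\nu}$, identifying $P_{\lambda,\nu}$ with the zero-weight projector (discrete phases agree with the torus because $q<N$), the Schur-lemma recursion $\bbE_H\|\widehat{\mu_H}(\lambda,\nu)\|_F^2=b_{\lambda,\nu}(b_{\lambda,\nu}/d_{\lambda,\nu})^n$, and Markov's inequality at the end. However, two load-bearing steps are missing.

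First, you never obtain a usable bound on the advantage of an adaptive strong algorithm in terms of these Fourier coefficients. Operator norms weighted by multiplicities lose factors you cannot afford. The diamond norm of the parallel moment-operator difference does not by itself control adaptive strategies. The paper's idea is different. It treats the acceptance probability $f_{\mathcal A}$ as a function on $U(N)$ that lies in the span of matrix entries of $\rho_{\lambda,\nu}$ with $\lambda,\nu\vdash q\le T$. It expands $f_{\mathcal A}$ in the orthonormal basis $\sqrt{d_{\lambda,\nu}}\,\rho_{\lambda,\nu}(U)_{ij}$ and uses $0\le f_{\mathcal A}\le 1$ to get $\sum|c_{\lambda,\nu,i,j}|^2\le\|f_{\mathcal A}\|_{L^2}^2\le 1$. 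Cauchy--Schwarz then gives the algorithm-independent bound $\eta_T(\mu_H)=\bigl(\sum_{q=1}^T\sum_{\lambda,\nu\vdash q}d_{\lambda,\nu}\|\widehat{\mu_H}(\lambda,\nu)\|_F^2\bigr)^{1/2}$. Multiplicities never appear in this bound, and the weight is $d_{\lambda,\nu}$. Consequently $\bbE_H[\eta_T^2]=\sum b_{\lambda,\nu}^2(b_{\lambda,\nu}/d_{\lambda,\nu})^{n-1}$.

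Second, the estimate you aim for cannot close the argument. A uniform bound $b_{\lambda,\nu}/d_{\lambda,\nu}\le 2^{-cn}$ or $N^{-\Omega(1)}$ fails, and not only because there are $2^{O(\sqrt N)}$ irreps. The prefactor $b_{\lambda,\nu}^2$ itself is about $\binom{N+q-2}{q}^2=2^{\Theta(N)}$ for $\lambda=\nu=(q)$ with $q=\Theta(N)$, and no $2^{-O(n^2)}$ factor absorbs that. (Your fallback of working with the reducible $\phi_q$ also loses the exact Schur-lemma identity.)

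What is needed is a ratio that decays with $q$. The paper gets it from two tableau counts in which the factors $f^\lambda f^\nu$ cancel:
\begin{itemize}
\item $b_{\lambda,\nu}\le\binom{N+q-1}{q}f^\lambda f^\nu$. Zero weight forces $w(T)=w(T')$, there are at most $\binom{N+q-1}{q}$ weights, and standardization injects SSYT of a fixed weight into SYT.
\item $d_{\lambda,\nu}\ge f^\lambda f^\nu\binom{N}{q}\binom{N-q}{q}$, since pairs using $2q$ distinct labels are automatically rational tableaux.
\end{itemize}
Combined with $\sum_{\lambda\vdash q}(f^\lambda)^2=q!$, these give $I_q\le B_q=(2N^2)^q\bigl[q!\,(17/(9N))^q\bigr]^{n-1}$. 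Since $q\le N/8$, the ratio $B_q/B_{q-1}\le\frac{144}{17}(17/18)^n<\frac12$ for large $n$. The sum is therefore dominated by $q=1$, giving $2^{-n^2+O(n)}$. This $q$-dependent decay and the geometric summation, together with $N-2q\ge 3N/4$, are where $T\le N/8$ is actually used. It is not used to make a per-layer ratio at most $1/2$, and your proposal does not supply this step.
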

\begin{proof}
Let $T := N/8$ be a bound on the maximum number of queries. As our PRU construction is invariant under a random global phase $\phi \gets \{1, \omega_N, \ldots, \omega_N^{N-1}\}$ and $T < N$, by \cite[Theorem 1.1]{TW25} we can decontrol all calls to $\{cU, cU^{\dagger}, c\overline{U}, cU^T\}$.

Fix (for now) the choice of $H = (H_1, \ldots, H_n)$. For any query to the unitary $U$, note that the channel 
    \[ (UXU^{\dagger})_{ij} = \sum_{a, b} U_{ia} \overline{U}_{jb} X_{ab} \]
is composed of terms which have exactly one entry corresponding to $U$ and one entry corresponding to $\overline{U}$. The same holds for queries to $U^{\dagger}$, $\overline{U}$, and $U^T$. This means that the acceptance probability $f_{\CA}(U)$ of any $T$-query algorithm $\CA$ is a linear combination of terms of the form
    \[ \prod_{r = 1}^q U_{i_ra_r} \overline{U}_{j_rb_r}, \]
where $0 \leq q \leq T$, which are precisely the entries of the matrix $\Gamma_q(U) := U^{\otimes q} \otimes \overline{U}^{\otimes q}$. Thus, we have that 
    \[ f_{\mathcal{A}} \in \spn\{U \to \Gamma_q(U)_{ij}: i, j, 0 \leq q \leq T\}. \]
By mixed Schur-Weyl duality, we know that the entries of $\Gamma_q$ are linear combinations of entries of $\rho_{\lambda, \nu}$ where $\lambda, \nu \vdash r$ for $0 \leq r \leq q$, so $f_{\mathcal{A}}$ must lie in the span of the entries of $\rho_{\lambda, \nu}$ where $\lambda, \nu \vdash q$ and $0 \leq q \leq T$.

Observing that the Schur orthogonality relations imply that
    \[ \int_{U(N)} \rho_{\lambda, \nu}(U)_{ij} \overline{\rho_{\lambda',\nu'}(U)_{i'j'}} dU = \frac{\delta_{\lambda, \lambda'}\delta_{\nu, \nu'}\delta_{i, i'}\delta_{j, j'}}{\dim \rho_{\lambda, \nu}}, \]
we see that the functions
    \[ \phi_{\lambda, \nu, i, j}(U) = \sqrt{d_{\lambda, \nu}} \rho_{\lambda, \nu}(U)_{ij}, \]
where $d_{\lambda, \nu} := \dim \rho_{\lambda, \nu}$, are orthonormal. We can therefore decompose $f_{\CA}(U)$ as
    \[ f_{\CA}(U) = \sum_{q = 0}^T \sum_{\lambda, \nu \vdash q} \sum_{i, j = 1}^{d_{\lambda, \nu}} c_{\lambda, \nu, i, j} \sqrt{d_{\lambda, \nu}} \, \rho_{\lambda, \nu}(U)_{ij} \]
for some coefficients $c_{\lambda, \nu, i, j}$. By orthogonality, we know that since $0 \leq f_{\CA}(U) \leq 1$ for all $U$,
    \[ \sum_{q = 0}^T \sum_{\lambda, \nu \vdash q} \sum_{i, j = 1}^{d_{\lambda, \nu}} |c_{\lambda, \nu, i, j}|^2 = \int |f_{\CA}(U)|^2 d\mu_{\mathsf{Haar}}(U) \leq 1. \]
Additionally, since $\rho_{\emptyset, \emptyset} = 1$, orthogonality implies that
    \[ \int_{U(N)} \rho_{\lambda, \nu}(U)_{ij} dU = 0 \]
for all $(\lambda, \nu) \neq (\emptyset, \emptyset)$. Defining 
    \[ \widehat{\mu_H}(\lambda, \nu) := \int \rho_{\lambda, \nu}(U) d\mu_H(U) = \bbE_{F_0, \ldots, F_n}[\rho_{\lambda, \nu}(F_nH_n \ldots F_1H_1F_0)], \]
we can now bound the distinguishing advantage of $\CA$ by
\begin{align*}
    &\left|\int f_{\CA}(U) d\mu_H(U) - \int f_{\CA}(U) d\mu_{\mathsf{Haar}}(U)\right| \\
    &\quad = \left|\sum_{q = 0}^T \sum_{\lambda, \nu \vdash q} \sum_{i, j = 1}^{d_{\lambda, \nu}} c_{\lambda, \nu, i, j} \sqrt{d_{\lambda, \nu}} \left[\widehat{\mu_H}(\lambda, \nu)_{ij}  - \widehat{\mu_{\mathsf{Haar}}}(\lambda, \nu)_{ij}\right]\right| \\
    &\quad = \left|\sum_{q = 0}^T \sum_{\lambda, \nu \vdash q} \sum_{i, j = 1}^{d_{\lambda, \nu}} c_{\lambda, \nu, i, j} \sqrt{d_{\lambda, \nu}} \widehat{\mu_H}(\lambda, \nu)_{ij}  - c_{\emptyset, \emptyset, 1, 1} \widehat{\mu_{\mathsf{Haar}}}(\emptyset, \emptyset)_{11}\right| \\
    &\quad = \left|\sum_{q = 1}^T \sum_{\lambda, \nu \vdash q} \sum_{i, j = 1}^{d_{\lambda, \nu}} c_{\lambda, \nu, i, j} \sqrt{d_{\lambda, \nu}} \widehat{\mu_H}(\lambda, \nu)_{ij}\right| \\
    &\quad \leq \left(\sum_{q = 1}^T \sum_{\lambda, \nu \vdash q} \sum_{i, j = 1}^{d_{\lambda, \nu}} |c_{\lambda, \nu, i, j}|^2 \right)^{1/2} \left(\sum_{q = 1}^T \sum_{\lambda, \nu \vdash q} \sum_{i, j = 1}^{d_{\lambda, \nu}} d_{\lambda, \nu} |\widehat{\mu_H}(\lambda, \nu)_{ij}|^2 \right)^{1/2} \\
    &\quad \leq \left(\sum_{q = 1}^T \sum_{\lambda, \nu \vdash q} d_{\lambda, \nu} \|\widehat{\mu_H}(\lambda, \nu)\|_F^2 \right)^{1/2}.
\end{align*}
As this inequality holds regardless of the coefficients $c_{\lambda, \nu, i, j}$, the distinguishing advantage of any $T$-query algorithm is upper bounded by
    \[ \eta_T(\mu_H) := \left(\sum_{q = 1}^T \sum_{\lambda, \nu \vdash q} d_{\lambda, \nu} \|\widehat{\mu_H}(\lambda, \nu)\|_F^2\right)^{1/2}. \]

Consequently, our new goal will be to bound $\bbE_H[\eta_T(\mu_H)^2]$. We start by defining the projectors 
    \[ P_{\lambda, \nu} := \int_{U(1)^N} \rho_{\lambda, \nu}(U) dU, \]
onto the subspace of vectors in $V^N_{\lambda, \nu}$ which are fixed by all elements in $U(1)^N$; let $b_{\lambda, \nu} := \rank P_{\lambda, \nu}$. In the following lemma we derive some crude bounds on $b_{\lambda, \nu}$ and $d_{\lambda, \nu}$, which we will thereafter relate to $\bbE_H[\eta_T(\mu_H)^2]$.
\begin{lemma}\label{lemma:rank_dim_est}
    For partitions $\lambda, \nu \vdash q$ where $2q < N$, let $f^{\lambda}$ and $f^{\nu}$ denote the number of standard Young tableaux of shape $\lambda$ and $\nu$, respectively. Then,
        \[ b_{\lambda, \nu} \leq \binom{N+q-1}{q} f^{\lambda}f^{\nu}, \]
    and
        \[ d_{\lambda, \nu} \geq \frac{f^{\lambda}f^{\nu}(N-2q)^{2q}}{(q!)^2}. \]
\end{lemma}
\begin{proof}[Proof of Lemma \ref{lemma:rank_dim_est}]
    As the rank of a projection is equal to the trace of the associated matrix, we have that
    \begin{align*}
        b_{\lambda, \nu} &= \rank P_{\lambda, \nu} = \Tr\left(\int_{U(1)^N} \rho_{\lambda, \nu}(U) dU\right) = \int_{U(1)^N} \Tr(\rho_{\lambda, \nu}(U)) dU \\
        &= \int_{z_1, \ldots, z_N \sim U(1)} s_{\lambda, \nu}(z_1, \ldots, z_N) dz \\
        &= \int_{z_1, \ldots, z_N \sim U(1)} \sum_{(T, T') \in RT(\lambda, \nu, N)} \prod_{i = 1}^N z_i^{w_i(T)-w_i(T')} dz \\
        &= \sum_{(T, T') \in RT(\lambda, \nu, N)} \left(\int_{z_1, \ldots, z_N \sim U(1)} \prod_{i = 1}^N z_i^{w_i(T)-w_i(T')} dz\right) \\
        &= \sum_{(T, T') \in RT(\lambda, \nu, N)} \mathds{1}_{w(T) = w(T')} \leq \sum_{T \in SSYT(\lambda, N)} \sum_{T' \in SSYT(\nu, N)} \mathds{1}_{w(T) = w(T')}.
    \end{align*}
    Fix a tuple $\alpha = (\alpha_1, \ldots, \alpha_N)$ such that $\sum_i \alpha_i = q$. The number of semistandard Young tableaux of shape $\lambda$ with weight $\alpha$ is at most the number of standard Young tableaux of shape $\lambda$,\footnote{This is since standardization provides an injective map from SSYT of shape $\lambda$ with weight $\alpha$ to SYT of shape $\lambda$.} so we have that
        \[ b_{\lambda, \nu} \leq \sum_{\alpha} f^{\lambda} f^{\nu} = \binom{N+q-1}{q} f^{\lambda} f^{\nu}, \]
    as there are exactly $\binom{N+q-1}{q}$ possible weights.

    We now bound $d_{\lambda, \nu} = \dim V^N_{\lambda, \nu}$, which is equal to the number of rational tableaux of shape $(\lambda, \nu)$ with alphabet $[N]$. Observing that $(A, B)$ is automatically a rational tableau if all $2q$ entries across $A$ and $B$ are distinct, we can lower bound $d_{\lambda, \nu}$ by the number of pairs of SSYT of shapes $\lambda$ and $\nu$ in which all $2q$ labels are distinct. This in turn can be computed by first selecting $\binom{N}{q} \cdot \binom{N-q}{q}$ distinct labels for the first and second tableaux; after fixing the label set $i_1 < \cdots < i_q$ for each tableau, replacing $i_j$ by $j$ gives a bijection between tableaux using those labels and standard Young tableaux with alphabet $[q]$. We conclude that
        \[ d_{\lambda, \nu} \geq f^{\lambda} f^{\nu} \binom{N}{q} \binom{N-q}{q} \geq \frac{f^{\lambda} f^{\nu}(N-2q)^{2q}}{(q!)^2}, \]
    assuming $2q < N$.
\end{proof}
We now consider the expression $\widehat{\mu_H}(\lambda, \nu)$, which can be written as
\begin{align*}
    \widehat{\mu_H}(\lambda, \nu) &= \bbE_{F_0, \ldots, F_n}\bigl[\rho_{\lambda, \nu}(F_n H_n \cdots F_1 H_1 F_0)\bigr] \\
    &= \bbE_{F_0, \ldots, F_n}\bigl[\rho_{\lambda, \nu}(F_n)\rho_{\lambda, \nu}(H_n)\cdots \rho_{\lambda, \nu}(F_1)\rho_{\lambda, \nu}(H_1)\rho_{\lambda, \nu}(F_0)\bigr] \\
    &= \bbE_{F_n}\bigl[\rho_{\lambda, \nu}(F_n)\bigr] \rho_{\lambda, \nu}(H_n) \cdots \bbE_{F_1}\bigl[\rho_{\lambda, \nu}(F_1)\bigr] \rho_{\lambda, \nu}(H_1)\bbE_{F_0}\bigl[\rho_{\lambda, \nu}(F_0)\bigr],
\end{align*}
since the $F_i$'s are independent and $H_1, \ldots, H_n$ are fixed. 

We claim that $\bbE_{F_i}[\rho_{\lambda, \nu}(F_i)] = P_{\lambda, \nu}$, which will simplify our expression to 
    \[ \widehat{\mu_H}(\lambda, \nu) = P_{\lambda, \nu} \rho_{\lambda, \nu}(H_n) \cdots P_{\lambda, \nu} \rho_{\lambda, \nu}(H_1) P_{\lambda, \nu}. \]
\begin{claim}\label{claim:discrete_to_continuous}
    $\bbE_{F_i}[\rho_{\lambda, \nu}(F_i)] = P_{\lambda, \nu}$.
\end{claim}
\begin{proof}[Proof of Claim \ref{claim:discrete_to_continuous}]
    Since $P_{\lambda, \nu} := \int_{U(1)^N} \rho_{\lambda, \nu}(U) dU$, our goal is to show that the averages of $\rho_{\lambda, \nu}$ over continuous and discrete phases coincide. We do so by showing that these two averages are in fact the same on the entire Hilbert space $(\bbC^N)^{\otimes q} \otimes (\overline{\bbC^N})^{\otimes q}$.
    
    Fix any basis vector
        \[ e_{a_1} \otimes \cdots \otimes e_{a_q} \otimes \overline{e_{b_1}} \otimes \cdots \otimes \overline{e_{b_q}}. \]
    Given a diagonal unitary $D = \sum_x z_x \ketbra{x}{x}$, we have that 
        \[ (D^{\otimes q} \otimes \overline{D}^{\otimes q})(e_{a_1} \otimes \cdots \otimes e_{a_q} \otimes \overline{e_{b_1}} \otimes \cdots \otimes \overline{e_{b_q}}) = \prod_{j = 1}^q z_{a_j} \overline{z_{b_j}} (e_{a_1} \otimes \cdots \otimes e_{a_q} \otimes \overline{e_{b_1}} \otimes \cdots \otimes \overline{e_{b_q}}). \]
When $F_i$ is drawn according to the distribution described in $\mu_H$ and hence each $z_j$ is chosen uniformly from $\{1, \omega_N, \ldots, \omega_N^{N-1}\}$, it is straightforward to see that since $q \leq T < N$, $\bbE_{f_i}[\Gamma_q(F_i)]$ projects onto the following subspace, where the equality of labels denotes multiset equality:
    \[ \spn\{e_{a_1} \otimes \cdots \otimes e_{a_q} \otimes \overline{e_{b_1}} \otimes \cdots \otimes \overline{e_{b_q}}: \{\!\{a_1, \ldots, a_q\}\!\} = \{\!\{b_1, \ldots, b_q\}\!\}\}. \]
Similarly, it is easy to see that when the phases in the diagonal unitary $F_i$ are drawn continuously from $[0, 2\pi)$, the average $\bbE[\Gamma_q(F_i)]$ is the \emph{same} projector. Restricting this equality to $\rho_{\lambda,\nu}$ then finishes the claim.
\end{proof}
We will now think about $H_1, \ldots, H_n$ as Haar-random unitaries. Before we analyze $\widehat{\mu_H}(\lambda, \nu)$, we will need to use the following identity:
\begin{fact}\label{fact:average_identity}
    $\bbE_{U \sim \mu_{\mathsf{Haar}}(N)}[\rho_{\lambda, \nu}(U)^{\dagger} P_{\lambda, \nu} \rho_{\lambda, \nu}(U)] = \frac{b_{\lambda, \nu}}{d_{\lambda, \nu}} I_{d_{\lambda, \nu}}$.
\end{fact}
\begin{proof}[Proof of Fact \ref{fact:average_identity}]
Let $A = \bbE_{U \sim \mu_{\mathsf{Haar}}(N)}[\rho_{\lambda, \nu}(U)^{\dagger} P_{\lambda, \nu} \rho_{\lambda, \nu}(U)]$. For any fixed unitary $G \in U(N)$, 
\begin{align*}
    \rho_{\lambda, \nu}(G)^{\dagger} A \rho_{\lambda, \nu}(G) &= \bbE_{U \sim \mu_{\mathsf{Haar}}(N)}[\rho_{\lambda, \nu}(G)^{\dagger} \rho_{\lambda, \nu}(U)^{\dagger} P_{\lambda, \nu} \rho_{\lambda, \nu}(U) \rho_{\lambda, \nu}(G)] \\
    &= \bbE_{U \sim \mu_{\mathsf{Haar}}(N)}[\rho_{\lambda, \nu}(UG)^{\dagger} P_{\lambda, \nu} \rho_{\lambda, \nu}(UG)] \\
    &= A.
\end{align*}
Thus, $A$ commutes with $\rho_{\lambda, \nu}(G)$ for all $G$. Since $\rho_{\lambda, \nu}$ is irreducible, Schur's lemma implies that
    \[ A = \frac{\Tr A}{d_{\lambda, \nu}} I_{d_{\lambda, \nu}} =  \frac{\bbE_{U \sim \mu_{\mathsf{Haar}}(N)}\left[\Tr\left(\rho_{\lambda, \nu}(U)^{\dagger} P_{\lambda, \nu} \rho_{\lambda, \nu}(U)\right)\right]}{d_{\lambda, \nu}} I_{d_{\lambda, \nu}} = \frac{\Tr P_{\lambda, \nu}}{d_{\lambda, \nu}} I_{d_{\lambda, \nu}} = \frac{b_{\lambda, \nu}}{d_{\lambda, \nu}} I_{d_{\lambda, \nu}}. \]
\end{proof}
Define the partial products $M_0 = P_{\lambda, \nu}$ and $M_r = P_{\lambda, \nu} \rho_{\lambda, \nu}(H_r) M_{r-1}$; by definition $M_n = \widehat{\mu_H}(\lambda, \nu)$. Recalling the identity $\|M\|_F^2 = \Tr(M M^{\dagger})$ and using Fact \ref{fact:average_identity}, we can compute
\begin{align*}
    \bbE_{H_1, \ldots, H_r \sim \mu_{\mathsf{Haar}}(N)}[\|M_r\|_F^2] &= \bbE_{H_1, \ldots, H_r \sim \mu_{\mathsf{Haar}}(N)}[\Tr(M_r M_r^{\dagger})] \\
    &= \bbE[\Tr(P_{\lambda, \nu} \rho_{\lambda, \nu}(H_r) M_{r-1} M_{r-1}^{\dagger} \rho_{\lambda, \nu}(H_r)^{\dagger} P_{\lambda, \nu}^{\dagger})] \\
    &= \bbE[\Tr(\rho_{\lambda, \nu}(H_r)^{\dagger}P_{\lambda, \nu}^{\dagger}P_{\lambda, \nu} \rho_{\lambda, \nu}(H_r) M_{r-1} M_{r-1}^{\dagger})] \\
    &= \bbE[\Tr(\rho_{\lambda, \nu}(H_r)^{\dagger} P_{\lambda, \nu} \rho_{\lambda, \nu}(H_r) M_{r-1} M_{r-1}^{\dagger})] \\
    &= \bbE\left[\Tr\left(\frac{b_{\lambda, \nu}}{d_{\lambda, \nu}} M_{r-1} M_{r-1}^{\dagger}\right)\right] \\
    &= \frac{b_{\lambda, \nu}}{d_{\lambda, \nu}} \cdot\bbE\left[\|M_{r-1}\|_F^2\right].
\end{align*}
By induction we conclude that 
    \[ \bbE[\|\widehat{\mu_H}(\lambda, \nu)\|_F^2] = \bbE[\|M_n\|_F^2] = \left(\frac{b_{\lambda, \nu}}{d_{\lambda, \nu}}\right)^n \|P_{\lambda, \nu}\|_F^2 = \Tr P_{\lambda, \nu} \cdot \left(\frac{b_{\lambda, \nu}}{d_{\lambda, \nu}}\right)^n = b_{\lambda, \nu} \cdot \left(\frac{b_{\lambda, \nu}}{d_{\lambda, \nu}}\right)^n. \]
Putting everything together, we see that
    \[ \bbE_H[\eta_T(\mu_H)^2] = \sum_{q = 1}^T \sum_{\lambda, \nu \vdash q} d_{\lambda, \nu} \bbE_H[\|\widehat{\mu_H}(\lambda, \nu)\|_F^2] = \sum_{q = 1}^T \sum_{\lambda, \nu \vdash q} b_{\lambda, \nu}^2 \left(\frac{b_{\lambda, \nu}}{d_{\lambda, \nu}}\right)^{n-1}. \]
We conclude by showing that this sum is dominated by the $q = 1$ term. 
\begin{lemma}\label{lemma:fourier_concentration}
    $\bbE_H[\eta_T(\mu_H)^2] \leq 2^{-n^2+O(n)}$.
\end{lemma}
\begin{proof}[Proof of Lemma \ref{lemma:fourier_concentration}]
Write $I_q := \sum_{\lambda, \nu \vdash q} b_{\lambda, \nu}^2 \left(\frac{b_{\lambda, \nu}}{d_{\lambda, \nu}}\right)^{n-1}$. We begin by bounding $I_1$. By Lemma \ref{lemma:rank_dim_est}, $b_{(1), (1)} \leq N (f^{(1)})^2 = N$ and $d_{(1), (1)} \geq (N-2)^2 (f^{(1)})^2 = (N-2)^2$, so 
\begin{align*}
    \sum_{\lambda, \nu \vdash 1} b_{\lambda, \nu}^2 \left(\frac{b_{\lambda, \nu}}{d_{\lambda, \nu}}\right)^{n-1} &\leq N^2 \left(\frac{N}{(N-2)^2}\right)^{n-1} \\
    &= 2^{2n} \left(2^{-n}(1-2^{1-n})^{-2}\right)^{n-1} \\
    &= 2^{-n^2+3n} (1-2^{1-n})^{-2(n-1)}.
\end{align*}
For $n > 1$, $(1-2^{1-n})^{-2(n-1)} \leq 4$, so we can bound $I_1$ by $2^{-n^2+3n+2}$.

Applying Lemma \ref{lemma:rank_dim_est} shows that
\begin{align*}
    \sum_{\lambda, \nu \vdash q} b_{\lambda, \nu}^2 \left(\frac{b_{\lambda, \nu}}{d_{\lambda, \nu}}\right)^{n-1} &\leq \sum_{\lambda, \nu \vdash q} b_{\lambda, \nu}^2 \left(\frac{\binom{N+q-1}{q}(q!)^2}{(N-2q)^{2q}}\right)^{n-1} \\
    &\leq \left(\frac{\binom{N+q-1}{q}(q!)^2}{(N-2q)^{2q}}\right)^{n-1} \sum_{\lambda, \nu \vdash q} \left(\binom{N+q-1}{q} f^{\lambda} f^{\nu} \right)^2 \\
    &\leq \left(\frac{\binom{N+q-1}{q}(q!)^2}{(N-2q)^{2q}}\right)^{n-1} \binom{N+q-1}{q}^2 \left(\sum_{\lambda \vdash q} (f^{\lambda})^2 \right)^2.
\end{align*}
Since the Robinson–Schensted–Knuth correspondence \cite{Sta99} implies that $\sum_{\lambda \vdash q} (f^{\lambda})^2 = q!$, this gives the bound
    \[ I_q \leq \left(\frac{\binom{N+q-1}{q}(q!)^2}{(N-2q)^{2q}}\right)^{n-1} \binom{N+q-1}{q}^2 (q!)^2. \]
Since $q \leq T = N/8$, the AM-GM inequality gives
    \[ \binom{N+q-1}{q} q! = N(N+1)\cdots(N+q-1) \leq \left(N+\frac{q-1}{2}\right)^q \leq \left(\frac{17N}{16}\right)^q, \]
so $\binom{N+q-1}{q}^2 (q!)^2 \leq (2N^2)^q$. Additionally, as $N - 2q \geq 3N/4$,
    \[ \frac{\binom{N+q-1}{q}(q!)^2}{(N-2q)^{2q}} \leq q! \left(\frac{17}{9N}\right)^q, \]
and thus
    \[ I_q \leq B_q := (2N^2)^q \left[q!\left(\frac{17}{9N}\right)^q\right]^{n-1}. \]
For all $2 \leq q \leq T$,
    \[ \frac{B_q}{B_{q-1}} = 2N^2\left(\frac{17q}{9N}\right)^{n-1} \leq 2N^2\left(\frac{17}{72}\right)^{n-1} = \frac{144}{17}\left(\frac{17}{18}\right)^n. \]
If $n > 50$ then $\frac{144}{17}\left(\frac{17}{18}\right)^n < 1/2$ and so for sufficiently large $n$,
    \[ \sum_{q = 2}^T I_q \leq \sum_{q = 2}^T B_q \leq B_2 \sum_{q = 0}^{\infty} \left(\frac{1}{2}\right)^q = 2B_2 \leq 8N^4\left(\frac{8}{N^2}\right)^{n-1} = 2^{-2n^2+9n}. \]
Combining this with the bound on $I_1$ gives us
    \[ \bbE_H[\eta_T(\mu_H)^2] = \sum_{q = 1}^T I_q \leq 2^{-n^2+3n+2} + 2^{-2n^2+9n} = 2^{-n^2+O(n)}, \]
as desired.
\end{proof}
By Markov's inequality, we conclude that
    \[ \Pr_H[\eta_T(\mu_H) > 2^{-n^2/3}] \leq \frac{\bbE_H[\eta_T(\mu_H)^2]}{2^{-2n^2/3}} \leq \frac{2^{-n^2+O(n)}}{2^{-2n^2/3}} = 2^{-\Omega(n^2)}. \]
Theorem \ref{thm:linear_design} follows immediately from the fact that the maximum distinguishing advantage of any $T = N/8$-query algorithm is bounded by $\eta_T(\mu_H)$. 
\end{proof}

\section{Acknowledgments}
John Wright is supported by the
NSF CAREER award CCF-233971 and a Sloan fellowship. This work was done in part while some of the authors were visiting the Simons Institute for the Theory of Computing.

\printbibliography
\appendix

\end{document}